\documentclass[a4paper,UKenglish,cleveref,pdftex, thm-restate,numberwithinsect]{lipics-v2019}

\newif\ifreport
\reporttrue

\bibliographystyle{abbrv}%

\usepackage{amssymb}
\usepackage{mathtools}
\usepackage{mathpartir}
\usepackage[utf8]{inputenc}
\usepackage{tikz}
\usetikzlibrary{matrix, arrows.meta, positioning, calc}
\usepackage[usestackEOL]{stackengine}
\usepackage{proof-at-the-end}

\DeclareFontFamily{OT1}{pzc}{}
\DeclareFontShape{OT1}{pzc}{m}{it}{<-> s * [1.200] pzcmi7t}{}
\DeclareMathAlphabet{\mathpzc}{OT1}{pzc}{m}{it}
\renewcommand{\mathscr}{\mathpzc}

\newcommand{\catname}[1]{{\normalfont\textbf{#1}}}

\newcommand{\abs}[1]{\left\lvert#1\right\rvert}

\newcommand{\scat}[1]{\mathsf{suc}_{\gamma_{#1}}}
\newcommand{\scato}{\mathsf{suc}}
\newcommand{\diam}[1]{\mathsf{pre}_{\gamma_{#1}}}
\newcommand{\msub}[1]{\mathscr{M}\text{-}\mathscr{Sub}_{\catname{C}}}
\newcommand{\diamo}{\mathsf{pre}}

\newcommand{\sub}[1]{\mathscr{Sub}_{\catname{#1}}}
\newcommand{\dom}{\mathrm{dom}}
\newcommand{\cod}{\mathrm{cod}}

\newcommand{\lind}[1]{\mathcal{L}({#1})}
\newcommand{\reach}[1]{\mathsf{reach}_{#1}}
\newcommand{\integ}[1]{\mathsf{int}_{#1}}
\newcommand{\class}[1]{\catname{Cl}({#1})}
\newcommand{\propo}[0]{\mathsf{Prop}}

\newcommand{\formu}[1]{\mathbf{Form}_{#1}}

\newcommand{\escape}[3]{#2 \mathfrak{E}_{#1} #3}
\newcommand{\es}[3]{ \mathsf{isER}_{#1}(p,#2, #3)}

\newcommand{\er}[3]{\mathsf{EscR}_{#1}({#2},{#3})}
\newcommand{\coalg}[1]{\catname{CoAlg}(\mathscr{#1})}

\newcommand{\hoclass}[1]{\catname{HoCl}({#1})}
\newcommand{\term}[2]{\mathbf{Term}({#1}, {#2})}
\newcommand{\typ}[1]{\mathsf{Type}({#1})}
\newcommand{\sur}[3]{{#2}\mathfrak{S}_{#1}{#3}}
\newcommand{\fuz}[0]{\mathscr{FzSub}}

\newcommand\functorop[1][l]{\csname#1functor\endcsname}
\newcommand\lfunctorop[3]{%
	\setbox0=\hbox{$#2$}%
	\kern\wd0%
	\ensurestackMath{\Centerstack[c]{#1\\ \mathllap{#2\;\,}\mathclap{\DownArrow}\\#3}}%
}
\newcommand\rfunctorop[3]{%
	\setbox0=\hbox{$#2$}%
	\ensurestackMath{\Centerstack[c]{#1\\\mathclap{\UpArrow}\mathrlap{\,\;#2}\\#3}}%
	\kern\wd0%
}
\newcommand\functoropmapsto{\mathrel{\ensurestackMath{\Centerstack[c]{\longmapsto\\ \\\longmapsto}}}}
\setstackgap{L}{1.3\normalbaselineskip}
\newcommand\UpArrow{\rotatebox[origin=c]{90}{$\longrightarrow$\,}}
\newcommand\DownArrow{\rotatebox[origin=c]{-90}{$\longrightarrow$\,}}

\newcommand\functor[1][l]{\csname#1functor\endcsname}
\newcommand\lfunctor[3]{%
	\setbox0=\hbox{$#2$}%
	\kern\wd0%
	\ensurestackMath{\Centerstack[c]{#1\\ \mathllap{#2\;\,}\mathclap{\DownArrow}\\#3}}%
}
\newcommand\rfunctor[3]{%
	\setbox0=\hbox{$#2$}%
	\ensurestackMath{\Centerstack[c]{#1\\\mathclap{\DownArrow}\mathrlap{\,\;#2}\\#3}}%
	\kern\wd0%
}
\newcommand\functormapsto{\mathrel{\ensurestackMath{\Centerstack[c]{\longmapsto\\ \\\longmapsto}}}}
\setstackgap{L}{1.3\normalbaselineskip}

\newtheorem*{thm1*}{Theorem}

\newtheorem*{lem*}{Lemma}

\newtheorem*{prop*}{Proposition}

\title{Closure Hyperdoctrines\ifreport, with paths\fi}
\author{Davide Castelnovo}%
{Department of Mathematics, Computer Science and Physics, University of Udine, Italy}
{davide.castelnovo@uniud.it}{}{}
\author{Marino Miculan}
{Department of Mathematics, Computer Science and Physics, University of Udine, Italy}
{marino.miculan@uniud.it}{0000-0003-0755-3444}{Supported by the Italian MIUR  project PRIN 2017FTXR7S \emph{IT MATTERS (Methods and Tools for Trustworthy Smart Systems)}.}

\authorrunning{D.~Castelnovo, M.~Miculan}
\Copyright{Davide~Castelnovo and Marino Miculan}

\begin{CCSXML}
	<ccs2012>
	<concept>
	<concept_id>10003752.10003790.10003793</concept_id>
	<concept_desc>Theory of computation~Modal and temporal logics</concept_desc>
	<concept_significance>500</concept_significance>
	</concept>
	</ccs2012>
\end{CCSXML}

\ccsdesc[500]{Theory of computation~Modal and temporal logics}

\keywords{categorical logic, topological semantics, closure operators, spatial logic.}  %

\ifreport
\nolinenumbers %
\hideLIPIcs  %
\else
\relatedversion{An extended version is available at \url{https://arxiv.org/abs/2007.04213}.} %
\fi

\ifreport
\else
\EventEditors{John Q. Open and Joan R. Access}
\EventNoEds{2}
\EventLongTitle{42nd Conference on Very Important Topics (CVIT 2016)}
\EventShortTitle{CVIT 2016}
\EventAcronym{CVIT}
\EventYear{2016}
\EventDate{December 24--27, 2016}
\EventLocation{Little Whinging, United Kingdom}
\EventLogo{}
\SeriesVolume{42}
\ArticleNo{23}
\fi

\begin{document}
	
	\maketitle
	
	\begin{abstract}
		\emph{(Pre)closure spaces} are a generalization of topological spaces covering also the notion of neighbourhood in discrete structures, widely used to model and reason about spatial aspects of distributed systems.
				
	   \looseness=-1
		In this paper we introduce an abstract theoretical framework for the systematic investigation of the logical aspects of closure spaces.
		To this end, we introduce the notion of \emph{closure (hyper)doctrines}, i.e.~doctrines endowed with inflationary operators (and subject to suitable conditions).
		The generality and effectiveness of this concept is witnessed by many examples arising naturally from topological spaces, fuzzy sets, algebraic structures, coalgebras, and covering at once also known cases such as Kripke frames and probabilistic frames (i.e., Markov chains). 
		\ifreport
		Then, we show how spatial logical constructs concerning \emph{surroundedness} and  \emph{reachability} can be interpreted by endowing hyperdoctrines with a general notion of \emph{paths}. %
		\fi
		By leveraging general categorical constructions, we provide axiomatisations and sound and complete semantics for various fragments of logics for closure operators.
		
		Therefore, closure hyperdoctrines are useful both for refining and improving the theory of existing spatial logics, but especially for the definition of new spatial logics for new applications.
	\end{abstract}

	\section{Introduction}	
	Recently, much attention has been devoted in Computer Science to systems distributed in physical space; 
	a typical example is provided by the so called \emph{collective adaptive systems}, 
	such as drone swarms, sensor networks, autonomic vehicles, etc.
	This arises the problem of how to model and reason formally about spatial aspects of distributed systems. 
	To this end, several researchers have advocated the use of \emph{spatial logics}, i.e.~modal logics whose modalities are interpreted using topological concepts of neighbourhood and connectivity.\footnote{Not to be confused with spatial logics for reasoning on the structure of agents, such as the Ambient Logic \cite{CardelliG00} or the Brane Logic \cite{MiculanB06}.}
	
	In fact, the interpretation of modal logics in topological spaces goes back to Tarski;
	we refer to \cite{Aiello2007} for a comprehensive discussion of variants and computability and complexity aspects. 
	More recently, Ciancia \emph{et al.}~\cite{ciancia2014specifying,ciancia2016spatial} extended this approach to \emph{preclosure spaces}, also called \emph{\v{C}ech closure spaces}, which generalise topological spaces by not requiring idempotence of closure operator.
	This generalization unifies the notions of neighbourhood arising from topological spaces and from  \emph{quasi-discrete closure spaces}, like those induced by graphs and images.
	Building on this generalization, \cite{ciancia2014specifying} introduced \emph{Spatial Logic for Closure Spaces} (SLCS), a modal logic for the specification and verification on spatial concepts over preclosure spaces.
	This logic features a \emph{closure} modality and a spatial \emph{until} modality: intuitively $\phi \mathcal{U} \psi$ holds in an area where $\phi$ holds and it is not possible to ``escape'' from it unless passing through an area where $\psi$ holds.
	\ifreport
	There is also a \emph{surrounded} constructor, to represent a notion of (un)reachability.  
	\fi
	SLCS has been proved to be quite effective and expressive, as it has been applied to reachability problems, vehicular movement, digital image analysis (e.g., street maps, radiological images \cite{BelmonteCLM19}), etc. 
	The model checking problem for this logic over finite quasi-discrete structures is decidable in linear time \cite{ciancia2014specifying}.
	
	Despite these results, an axiomatisation for SLCS is still missing.
	Moreover, it is not obvious how to extend this logic to other spaces with closure operators, such as probabilistic automata  (e.g.~Markov chains).
	\ifreport
	Also, it is not immediate how current definitions of reachability can be generalized to other cases, e.g., within a given number of steps, or non-deterministic, or probabilistic, etc.
	\fi

	In fact, the main point is that we miss an abstract theoretical framework for investigating the logical aspects of (pre)closure spaces. Such a framework would be the basis for analysing spatial logics like SLCS, but also for developing further extensions and applications thereof.

	In this paper, we aim to build such a framework. To this end, we introduce the new notion of \emph{closure (hyper)doctrine} as the theoretical basis for studying the logical aspects of closure spaces.
	Doctrines were introduced by Lawvere \cite{lawvere1969adjointness} as a general way for endowing (the objects of) a category with logical notions from a suitable 2-category $\catname{E}$, which can be the category of Heyting algebras in the case of intuitionistic logic, of Boolean algebras in the case of classical logic, etc..
	Along this line, in order to capture the logical aspects of closure spaces we introduce the notion of \emph{closure operators} on doctrines, that is, families of inflationary morphisms over objects of  $\catname{E}$ (subject to suitable conditions); a closure (hyper)doctrine is a (hyper)doctrine endowed with a closure operator.
	These structures arise from many common situations: we provide many examples ranging from topology to algebraic structures, from coalgebras to fuzzy sets. These examples cover the usual cases from literature (e.g., graphs, quasi-discrete spaces, (pre)topological spaces) but include also new settings, such as categories of coalgebras and probabilistic frames (i.e., Markov chains).
	Then, leveraging general machinery from categorical logic, we introduce a first order logic for closure spaces for which we provide an axiomatisation and a sound and complete categorical semantics.  The propositional fragment corresponds to the SLCS from \cite{ciancia2014specifying}.
	
	\ifreport
	Within this framework, we can accommodate also the notion of \emph{surroundedness} of properties, in order to model spatial operators like SLCS's $\mathcal{S}$ \cite{ciancia2016spatial}.
	Actually, surroundedness is not a structural property of the logical domain (differently from closure operators); rather, it depends on the kind of paths we choose to explore the space. To this end, we introduce the notion of \emph{closure doctrine with paths}. Again, the foundational approach we follow allows for many kinds of paths, and hence many notions of surroundedness.
	\fi
	
	Overall, the importance of this work is twofold: on one hand, closure hyperdoctrines \ifreport (with paths) \fi are useful for analysing and improving the theory of existing spatial logics; in particular, the proposed axiomatisation can enable both new proof methodologies and minimisation techniques. On the other, closure hyperdoctrines are useful for the definition of new logics in various situations where we have to deal with closure operators, connectivity, surroundedness, reachability, etc.

	\smallskip

	\noindent\textit{Synopsis}
	The paper is organized as follows.
	In \cref{sec:closurehd} we recall (hyper)doctrines and introduce the key notion of closure doctrine.
	Many examples of closure doctrines are provided in \cref{sec:examples}.
	In \cref{sec:slcs} we introduce \emph{logics for closure operators}, together with a sound and complete semantics in closure hyperdoctrines.
	\ifreport
	Then, in order to cover the notion of surroundedness, we introduce the notion of \emph{closure doctrine with paths} (\cref{sec:paths}) and the corresponding logics with the ``surrounded'' operator (\cref{sec:slcswp}).
	\fi
	Conclusions and directions for future work are in \cref{sec:concl}.
	Longer proofs are in \cref{sec:proofs}.

	\section{Closure (hyper)doctrines}
	\label{sec:closurehd}
	
	\subsection{Kinds of doctrines}
	\label{sec:background}
	In this section we recall the notion of elementary hyperdoctrine, due to Lawvere \cite{lawvere1969adjointness,lawvere1970equality}.
	The development of semantics of logics in this context or in the equivalent fibrational context is well established; we refer the reader to, e.g., \cite{jacobs1999categorical,makkai2006first,pitts1995categorical}.

	\begin{definition}[(Existential) Doctrine, Hyperdoctrine \cite{kock1977doctrines,maietti2013quotient,pasquali2015co}]%
		A \emph{primary doctrine} or simply a \emph{doctrine} on a category $\catname{C}$ is a functor $\mathscr{P}:\catname{C}^{op}\rightarrow \catname{InfSL}$ where $\catname{InfSL}$ is the category of finite meet semilattices.
		
		A primary doctrine is \emph{existential} if:
		\begin{itemize}
			\item  $\catname{C}$ has finite products;
			\item the image $\mathscr{P}_{\pi_C}$ of any projection $\pi_C: C\times D\rightarrow C$ admits a left adjoint $\exists_{\pi_C}$;
			\item 	\parbox[t]{75mm}{for each pullback like aside, the \emph{Beck-Chevalley condition} $\exists_{\pi_{C'}}\circ \mathscr{P}_{1_D\times f}=\mathscr{P}_f\circ \exists _{\pi_C}$ holds;}
			\begin{tikzpicture}[baseline=(current bounding box.center)]
			\node(A) at(0,2) {$D\times C'$};
			\node(B) at(0,0) {$D\times C$};
			\node(C) at(2,2) {$C'$};
			\node(D) at(2,0) {$C$};
			\draw[->] (A)--(B) node[pos=0.5, left]{$1_D\times f$};
			\draw[->] (A)--(C) node[pos=0.5, above]{$\pi_{C'}$};
			\draw[->] (C)--(D) node[pos=0.5, right]{$f$};
			\draw[->] (B)--(D) node[pos=0.5, below]{$\pi_C$};
			\end{tikzpicture}
			
			\item  for any $\alpha \in \mathscr{P}(C)$ and $\beta \in \mathscr{P}(D\times C)$ the \emph{Frobenius reciprocity}
			$\exists_{\pi_C}(\mathscr{P}_{\pi_C}(\alpha)\wedge \beta)=\alpha \wedge \exists_{\pi_C}(\beta)$
			holds.
		\end{itemize}
		A \emph{hyperdoctrine} is an existential doctrine $\mathscr{P}$ such that:
		\begin{itemize}
			\item $\mathscr{P}$ factors through the category $\catname{HA}$ of Heyting algebras and Heyting algebras morphisms;
			\item for all projections $\pi_C:D\times C\rightarrow C$, $\mathscr{P}_{\pi_C}$ has a right adjoint $\forall_{\pi_C}:\mathscr{P}(D\times C)\to \mathscr{P}(C)$ which must satisfy the Beck-Chevalley condition:
			$\forall_{\pi_{C'}}\circ \mathscr{P}_{1_D\times f}=\mathscr{P}_f\circ \forall _{\pi_C}$
			for any $f:C'\rightarrow C$.
		\end{itemize}
		A primary doctrine, an existential doctrine or a hyperdoctrine, is \emph{elementary} if
		\begin{itemize}
			\item $\catname{C}$ has finite products;
			\item for each object $C$ there exists a \emph{fibered equality}  $\delta_C \in \mathscr{P}(C\times C)$  such that \begin{equation*}
			\mathscr{P}_{(\pi_1, \pi_2)}(-)\wedge \mathscr{P}_{(\pi_2, \pi_3)}(\delta_C) \dashv \mathscr{P}_{1_D\times \Delta_C}
			\end{equation*} 	
			
			where $\pi_1, \pi_2$ and $\pi_3$ are projections $D\times C\times C\rightarrow D\times C$.
			This left adjoint will be denoted by $\exists_{1_D\times \Delta_C}$
		\end{itemize} 
	\end{definition}
	
	\begin{remark}
		Usually $\catname{C}$ is required to having finite products even in the case of a primary doctrine (cfr. \cite{pasquali2015co}), we will not ask it in order to get the coalgebraic examples in \cref{sec:examples}.
	\end{remark}
	
	\begin{remark}
		Since $\catname{C}$ has a terminal object it follows that 
		$\mathscr{P}_{\pi_1}(-)\wedge \delta_C \dashv \mathscr{P}_{\Delta_C}$.
		This left adjoint will be denoted by $\exists_{\Delta_C}$.
	\end{remark}
	
	\begin{remark}
		In this paper, we work with hyperdoctrines over $\catname{HA}$, the category of Heyting algebras and their morphisms; hence the resulting logic is inherently intuitionistic.
		Clearly, all the development still holds if we restrict ourselves to the subcategory of Boolean algebras $\catname{BA}$, yielding a classical version of the logic.	
	\end{remark}
	\begin{example}
		\looseness=-1
		Let $\catname{C}$ be a category with finite limits and $(\mathscr{E}, \mathscr{M})$ a stable and proper factorization system on it (see \cite{kelly1991note}). Fix an object $C\in \catname{C}$  we define a relation on arrows in $\mathscr{M}$ with codomain $C$ putting $m\leq n$ if and only if there exists $t$ such that $n\circ t =m$. If we ignore size issues this gives us a preorder, from which we get a partial order $\msub{C}(C)$ by quotienting by the relation $m\simeq n$ if and only if $m\leq n$ and $n\leq m$. The top element is $[1_C]$, while meets are given by pullbacks, and we can pullback any $m$ along any arrow $f:D\rightarrow C$ getting an arrow $f^*m$ in $\mathscr{M}$ with codomain $D$. Summarizing we have a functor $	\catname{C}^{op}\rightarrow \catname{InfSL}$ sending $C$ to $\msub{C}(C)$.
		This is actually an elementary existential doctrine in which $\delta_C$ is the class of the diagonal $C\rightarrow C\times C$ (which can be shown to be an element of $\mathscr{M}$) and $\exists_{\pi_C}([m])$ is the $\mathscr{M}$-component of $\pi_C\circ m$, in the sense that it is the class of $n\in \mathscr{M}$ such that  $n\circ e=\pi_C\circ m$ for some $e\in \mathscr{E}$ (see \cite{hughes2003factorization} for the correspondence between factorization systems and elementary existential doctrines).
		In general this functor  is very far from having Heyting algebras as values but this is the case when $\catname{C}$ is a topos and $\mathscr{M}$ the class of all monomorphisms; in this case we get an elementary hyperdoctrine \cite{maclane2012sheaves}.
	\end{example}

If we want $\mathscr{M}$ to be the class of all monos we have the following theorem.
\begin{theorem}\label{reg}
If $\catname{C}$ has finite limits then $\sub{C}$ is an elementary existetial doctrine if and only if $\catname{C}$ is regular.
\end{theorem}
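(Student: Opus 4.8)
The plan is to prove the two implications separately: the ``if'' direction is essentially immediate from the previous example, whereas the ``only if'' direction carries the real content.

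\textbf{From regularity.} I would recall that a regular category is, by one of its standard characterisations, a finitely complete category equipped with pullback-stable (strong epi, mono) factorisations; equivalently, it carries a proper and stable orthogonal factorisation system $(\mathscr{E},\mathscr{M})$ whose right class $\mathscr{M}$ is the class of \emph{all} monomorphisms (its left class then being the regular, equivalently strong, epimorphisms). Plugging this factorisation system into the construction of the previous example yields an elementary existential doctrine $\msub{C}$; and since $\mathscr{M}$ is all of the monos, $\msub{C}(C)$ coincides with $\sub{C}(C)$ on the nose --- same order, same meets, same reindexing, same fibred equality $[\Delta_C]$, same left adjoints along projections. Hence $\sub{C}$ is an elementary existential doctrine and there is nothing further to check.

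\textbf{Towards regularity.} Assume $\sub{C}$ is an elementary existential doctrine; the goal is to equip $\catname{C}$ with pullback-stable (strong epi, mono) factorisations, which by the characterisation just quoted makes $\catname{C}$ regular. Given $f\colon A\to B$, write $\Gamma_f=\langle 1_A,f\rangle\colon A\rightarrowtail A\times B$ for its graph: a mono, split by $\pi_A$, with $\pi_B\circ\Gamma_f=f$. Along a mono the left adjoint in $\sub{C}$ is post-composition, so $\exists_{\Gamma_f}$ exists; composing with $\exists_{\pi_B}$ (available since the doctrine is existential) gives $\exists_f:=\exists_{\pi_B}\circ\exists_{\Gamma_f}\dashv\mathscr{P}_f$ on subobjects. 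Put $m\colon\mathrm{Im}(f)\rightarrowtail B$ for the subobject $\exists_f([1_A])=\exists_{\pi_B}([\Gamma_f])$. Since the unit of $\exists_{\pi_B}\dashv\mathscr{P}_{\pi_B}$ gives $[\Gamma_f]\le\mathscr{P}_{\pi_B}([m])$, which is represented by $1_A\times m\colon A\times\mathrm{Im}(f)\rightarrowtail A\times B$, the map $\Gamma_f$ factors through $1_A\times m$; post-composing this factorisation with $\pi_{\mathrm{Im}(f)}$ produces $e\colon A\to\mathrm{Im}(f)$ with $m\circ e=f$. If $e$ itself factors through a mono $n\colon J\rightarrowtail\mathrm{Im}(f)$, then $f$ factors through $m\circ n$, so $[\Gamma_f]\le\mathscr{P}_{\pi_B}([m\circ n])$ and hence $[m]=\exists_{\pi_B}([\Gamma_f])\le[m\circ n]\le[m]$, forcing $n$ to be an isomorphism. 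Thus $e$ is an extremal --- therefore, $\catname{C}$ being finitely complete, a strong --- epimorphism, and $f=m\circ e$ is the desired factorisation.

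\textbf{Stability, and the main obstacle.} What remains is pullback-stability, and this is exactly where the full ``existential'' hypothesis is used. I would obtain it by Beck--Chevalley: decompose an arbitrary pullback square of $f$ along $h\colon B'\to B$ by first writing $f=\pi_B\circ\Gamma_f$, so that the square is the pasting of the ``projection'' pullback square for $\pi_B$ --- whose Beck--Chevalley instance is literally one of the axioms of an existential doctrine --- with a pullback square of the mono $\Gamma_f$ --- whose Beck--Chevalley instance holds trivially, post-composition along monos always commuting with pullback. Pasting the two (invertible) mates, the composite square satisfies $\exists_{f'}\circ\mathscr{P}_{h'}=\mathscr{P}_h\circ\exists_f$, where $f'$ is the pullback of $f$ and $h'$ the opposite leg; evaluating at $[1_A]$ yields $\mathrm{Im}(f')=\mathscr{P}_h(\mathrm{Im}(f))$, i.e.\ images are stable under pullback. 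I expect the only delicate point to be this Beck--Chevalley propagation and the identification ``pullback-stable (strong epi, mono) factorisations $=$ regular category''; both are standard --- the former is implicit in the factorisation-system/doctrine correspondence of \cite{hughes2003factorization} and treated in \cite{jacobs1999categorical} --- so in the write-up I would cite them rather than redo the mate-pasting and the construction of coequalisers of kernel pairs from scratch.
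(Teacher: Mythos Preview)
Your argument is correct. The paper's own ``proof'' consists solely of the citation ``Cfr.\ \cite{jacobs1999categorical}, theorem $4.4.4$'', so there is no substantive approach to compare against: you have supplied what the paper omits. Your line of reasoning --- reading off the ``if'' direction from the preceding factorisation-system example with $\mathscr{M}$ all monos, and for the converse building images as $\exists_{\pi_B}\circ\exists_{\Gamma_f}([1_A])$, verifying the cokernel map is extremal (hence strong) epi, and deducing pullback-stability by pasting the projection Beck--Chevalley axiom with the trivial Beck--Chevalley square for the split mono $\Gamma_f$ --- is the standard proof and is essentially what one finds in Jacobs. One small remark: the ``elementary'' part of the hypothesis plays no role in your converse direction, and indeed it is automatic for $\sub{C}$ over any finitely complete $\catname{C}$ (take $\delta_C=[\Delta_C]$); the content of the theorem lies entirely in ``existential $\Leftrightarrow$ regular'', exactly as your write-up reflects.
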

\begin{proof}
	Cfr. \cite{jacobs1999categorical}, theorem $4.4.4$. \qedhere 
\end{proof}
		\begin{theoremEnd}{prop}\label{trans}
		Let $\mathscr{P}:\catname{C}^{op}\rightarrow \catname{InfSL}$ be an existential doctrine, $\catname{D}$ a category with finite products and $\mathscr{F}:\catname{D}\rightarrow \catname{C}$ a product preserving functor. 
		Then, $\mathscr{P}\circ \mathscr{F}^{op}$ is a existential doctrine.
		If $\mathscr{P}$ is elementary (resp., a hyperdoctrine) then $\mathscr{P}\circ \mathscr{F}^{op}$ is elementary (resp., a hyperdoctrine).
	\end{theoremEnd}
\begin{proof}
	\begin{proofEnd}	We have to show that $\mathscr{P}_{\mathscr{F}(\pi_D)}$ has a left adjoint for any projection $\pi_D:E\times D\rightarrow D$ but this follows at once since the horizontal arrow in the diagram
		\begin{center}
			\begin{tikzpicture}
			\node(F)at(0,0){$\mathscr{P}(\mathscr{F}(E\times D))$};
			\node(I)at(6,0){$\mathscr{P}(\mathscr{F}(E)\times \mathscr{F}(D))$};
			\node(G)at(3,2){$\mathscr{P}(\mathscr{F}(D))$};
			\draw[<-](F)--(G)node[pos=0.5, above, xshift=-0.4cm]{$\mathscr{P}_{\mathscr{F}(\pi_D)}$};
			\draw[<-](I)--(G)node[pos=0.5, above, xshift=0.2cm, yshift=-0.05cm]{$\mathscr{P}_{\pi_{\mathscr{F}(D)}}$};
			\draw[<-](I)--(F)node[pos=0.5,below]{$\mathscr{P}_{(\mathscr{F}(\pi_E),\mathscr{F}(\pi_D))}$};
			\end{tikzpicture}
		\end{center}
		is an isomorphism, hence
		\begin{equation*}
		\exists_{\pi_D}=\exists_{\pi_{\mathscr{F}(D)}}\circ \mathscr{P}_{(\mathscr{F}(\pi_E),\mathscr{F}(\pi_D))}
		\end{equation*} The same argument shows the existence of right adjoint to $\mathscr{P}_{\mathscr{F}(\pi_D)}$ anytime they exist for $\mathscr{P}_{\pi_{\mathscr{F}(D)}}$:
		\begin{equation*}
		\forall_{\pi_D}=\forall_{\pi_{\mathscr{F}(D)}}\circ \mathscr{P}_{(\mathscr{F}(\pi_E),\mathscr{F}(\pi_D))}
		\end{equation*}  Let $f:D'\rightarrow D$ be an arrow in $\catname{D}$, the two Beck-Chevalley conditions follow from the commutativity of
		\begin{center}
			\begin{tikzpicture}
			\node(F)at(0,2){$\mathscr{P}(\mathscr{F}(E\times D))$};
			\node(I)at(0,0){$\mathscr{P}(\mathscr{F}(E)\times \mathscr{F}(D))$};
			\node(G)at(6,0){$\mathscr{P}(\mathscr{F}(D))$};
			
			\node(F')at(0,-4){$\mathscr{P}(\mathscr{F}(E\times D'))$};
			\node(I')at(0,-2){$\mathscr{P}(\mathscr{F}(E)\times \mathscr{F}(D'))$};
			\node(G')at(6,-2){$\mathscr{P}(\mathscr{F}(D'))$};
			\draw[<-](F)..controls(4,1.5)..(G)node[pos=0.5, above, yshift=0cm]{$\mathscr{P}_{\mathscr{F}(\pi_D)}$};
			\draw[<-](I)--(G)node[pos=0.5, above, xshift=0cm, yshift=0cm]{$\mathscr{P}_{\pi_{\mathscr{F}(D)}}$};
			\draw[<-](I)--(F)node[pos=0.5, right]{$\mathscr{P}_{(\mathscr{F}(\pi_E),\mathscr{F}(\pi_D))}$};

			\draw[<-](F')..controls(4,-3.5)..(G')node[pos=0.5, below, yshift=-0.1cm]{$\mathscr{P}_{\mathscr{F}(\pi_{D'})}$};
			\draw[<-](I')--(G')node[pos=0.5, below]{$\mathscr{P}_{\pi_{\mathscr{F}(D')}}$};
			\draw[<-](I')--(F')node[pos=0.5,right]{$\mathscr{P}_{(\mathscr{F}(\pi_E),\mathscr{F}(\pi_D))}$};
			
			\draw[<-](G)--(G')node[pos=0.5,right]{$\mathscr{P}_{\mathscr{F}(f)}$};
			\draw[<-](I)--(I')node[pos=0.5,left]{$\mathscr{P}_{1_{\mathscr{F}(E)}\times \mathscr{F}(f)}$};
			\draw[->](F')..controls(-3,-4) and (-3, 2)..(F)node[pos=0.5,left]{$\mathscr{P}_{\mathscr{F}(1_E\times f)}$};
			\end{tikzpicture}
		\end{center}
		and the fact that both the upper and the lower vertical arrow are isomorphisms since $\mathscr{F}$ preserves products.
		For Frobenius reciprocity:
		\begin{align*}
		\exists_{\mathscr{F}(\pi_D)}(\mathscr{P}_{\mathscr{F}(\pi_D)}(\alpha)\wedge \beta)&= \exists_{\pi_{\mathscr{F}(D)}}(\mathscr{P}_{(\mathscr{F}(\pi_E),\mathscr{F}(\pi_D))}(\mathscr{P}_{\mathscr{F}(\pi_D)}(\alpha)\wedge \beta))\\&=
		\exists_{\pi_{\mathscr{F}(D)}}(\mathscr{P}_{\pi_{\mathscr{F}(D)}}(\alpha)\wedge \mathscr{P}_{(\mathscr{F}(\pi_E),\mathscr{F}(\pi_D))}(\beta))\\&=\alpha \wedge \exists_{\pi_{\mathscr{F}(D)}}(\mathscr{P}_{\mathscr{F}(\pi_D)}(\alpha)\wedge \beta)\\&= \alpha \wedge 	\exists_{\mathscr{F}(\pi_D)}(\beta)
		\end{align*}
		So we're left with the fibered equalities, by the commutativity of
		\begin{center}
			\begin{tikzpicture}
			\node(A) at(0,2) {$\mathscr{P}(\mathscr{F}(E\times D \times D))$};
			\node(C) at(8,2) {$\mathscr{P}(\mathscr{F}(E)\times \mathscr{F}(D) \times \mathscr{F}(D))$};
			\node(B) at(0,0) {$\mathscr{P}(\mathscr{F}(E\times D))$};
			\node(D) at(8,0) {$\mathscr{P}(\mathscr{F}(E)\times \mathscr{F}(D))$};
			\draw[->] (A)--(B) node[pos=0.5, left]{$\mathscr{P}_{\mathscr{F}(1_E\times \Delta_D)}$};
			\draw[<-] (A)--(C) node[pos=0.5, above]{$\mathscr{P}_{(\mathscr{F}(\pi_1),\mathscr{F}(\pi_2),\mathscr{F}(\pi_3))}$};
			\draw[->] (C)--(D) node[pos=0.5, right]{$\mathscr{P}_{1_{\mathscr{F}(E)}\times \Delta_{\mathscr{F}(D)}}$};
			\draw[<-] (B)--(D) node[pos=0.5, below]{$\mathscr{P}_{(\mathscr{F}(\pi_E), \mathscr{F}(\pi_D))}$};
			
			\node(B') at(0,-2) {$\mathscr{P}(\mathscr{F}(D\times D))$};
			\node(D') at(8,-2) {$\mathscr{P}(\mathscr{F}(D)\times \mathscr{F}(D))$};
			\node(A) at(0,-4) {$\mathscr{P}(\mathscr{F}(E\times D \times D))$};
			\node(C) at(8,-4) {$\mathscr{P}(\mathscr{F}(E)\times \mathscr{F}(D) \times \mathscr{F}(D))$};
			\node(B) at(0,-6) {$\mathscr{P}(\mathscr{F}(E\times D))$};
			\node(D) at(8,-6) {$\mathscr{P}(\mathscr{F}(E)\times \mathscr{F}(D))$};
			\draw[<-] (A)--(B) node[pos=0.5, left]{$\mathscr{P}_{\mathscr{F}(\pi_1, \pi_2)}$};
			\draw[<-] (A)--(B') node[pos=0.5, left]{$\mathscr{P}_{\mathscr{F}(\pi_2, \pi_3)}$};
			\draw[<-] (A)--(C) node[pos=0.5, above]{$\mathscr{P}_{(\mathscr{F}(\pi_1),\mathscr{F}(\pi_2),\mathscr{F}(\pi_3))}$};
			\draw[<-] (C)--(D) node[pos=0.5, right]{$\mathscr{P}_{(\mathscr{F}(\pi_1),\mathscr{F}(\pi_2))}$};
			\draw[<-] (C)--(D') node[pos=0.5, right]{$\mathscr{P}_{(\mathscr{F}(\pi_2),\mathscr{F}(\pi_3))}$};
			\draw[<-] (B)--(D) node[pos=0.5, below]{$\mathscr{P}_{(\mathscr{F}(\pi_E), \mathscr{F}(\pi_D))}$};
			\draw[<-] (B')--(D') node[pos=0.5, above]{$\mathscr{P}_{(\mathscr{F}(p_1), \mathscr{F}(p_2))}$};

			\end{tikzpicture}
		\end{center}
		and from the fact that the horizontal arrows are isomorphisms it is enough to take
		\begin{equation*}
		\delta_D:=\mathscr{P}_{(\mathscr{F}(p_1), \mathscr{F}(p_2))}(\delta_{\mathscr{F}(D)})
		\end{equation*}\qedhere 
	\end{proofEnd}\qedhere
\end{proof}

		\begin{theoremEnd}{prop}\label{lad}
		Let $\mathscr{P}:\catname{C}^{op}\rightarrow \catname{HA}$ be an elementary existential doctrine. For every arrow $f:C\rightarrow D$, the functor $\mathscr{P}_f$ has a left adjoint $\exists_f$ that satisfies the \emph{Frobenius reciprocity}:
		$\exists_{f}(\mathscr{P}_{f}(\beta)\wedge \alpha )=\beta\wedge \exists_{f}(\alpha)$.
		If $\mathscr{P}$ is a hyperdoctrine then $\mathscr{P}_f$ has a right adjoint $\forall_f$ too.
\end{theoremEnd}
\begin{proof}
		\begin{proofEnd}(Cfr. \cite{jacobs1999categorical,lawvere1970equality} and lemma $1.5.8$ of \cite{johnstone2002sketches}, vol. $1$ for the hyperdoctrine case).
		It is enough to define
			\begin{align*}
				\exists_f(\alpha)&:= \exists_{\pi_D}(\mathscr{P}_{f\times 1_D}(\delta_D)\wedge \mathscr{P}_{\pi_C}(\alpha) ) \\\forall_f(\alpha)&:= \forall_{\pi_D}(\mathscr{P}_{f\times 1_D}(\delta_D)\rightarrow \mathscr{P}_{\pi_C}(\alpha) ) 
			\end{align*}
			For adjointness:	
			\begin{itemize}
				\item $\exists_f \dashv \mathscr{P}_f$. 
				\begin{equation*}
				\begin{split}
				\text{If} \ \exists_f(\alpha)&\leq \beta\\
				\alpha &= \alpha \wedge \top_C \\
				&= \alpha \wedge \exists_{\pi_2}(\delta_C)\\
				&\leq \alpha \wedge \exists_{\pi_2}(\mathscr{P}_{f\times f}(\delta_D))\\
				&=	\exists_{\pi_2}(\mathscr{P}_{f\times f}(\delta_D)\wedge \mathscr{P}_{p_2}(\alpha))\\
				&=\exists_{\pi_2}(\mathscr{P}_{1_C\times f}(\mathscr{P}_{f\times 1_D}(\delta_D)\wedge \mathscr{P}_{\pi_C}(\alpha ))\\
				&=\mathscr{P}_f(\exists_{\pi_D}(\mathscr{P}_{f\times 1_D}(\delta_D)\wedge \mathscr{P}_{\pi_C}(\alpha) ))\\
				&=\mathscr{P}_f(\exists_f(\alpha))\\
				&\leq \mathscr{P}_f(\beta)\\
				&
				\end{split}
				\begin{split}
				\text{If} \ \alpha&\leq \mathscr{P}_f(\beta)\\
				\exists_f(\alpha)&\leq \exists_f(\mathscr{P}_f(\beta))\\&=
				\exists_{\pi_D}(\mathscr{P}_{f\times 1_D}(\delta_D)\wedge \mathscr{P}_{\pi_C}(\mathscr{P}_f(\beta) )\\&=\exists_{\pi_D}(\mathscr{P}_{f\times 1_D}(\delta_D)\wedge \mathscr{P}_{1_D\times f}(\mathscr{P}_{q_2}(\beta) ))\\&=
				\exists_{\pi_D}(\mathscr{P}_{ 1_D\times f}(\delta_D)\wedge \mathscr{P}_{1_D\times f}(\mathscr{P}_{q_2}(\beta) ))
				\\&=
				\exists_{\pi_D}(\mathscr{P}_{ 1_D\times f}(\delta_D\wedge \mathscr{P}_{q_2}(\beta)))
				\\&=\exists_{\pi_D}(\mathscr{P}_{ 1_D\times f}(\exists_{\Delta_D}(\beta)))\\&\leq\exists_{\pi_D}(\mathscr{P}_{ 1_D\times f}(\mathscr{P}_{q_1}(\beta)))\\&=\exists_{\pi_D}(\mathscr{P}_{\pi_D}(\beta))\\&\leq \beta 
				\end{split}		
				\end{equation*}
				Where $p_2$ is the second projection $C\times C\rightarrow C$ and  $q_1$ and $q_2$ those $D\times D\rightarrow D$.
				\item $\mathscr{P}_f\dashv \forall_f$. We already know that:
				\begin{equation*}
				\exists_{\pi_C}\dashv \mathscr{P}_{\pi_C} \quad \mathscr{P}_{f\times 1_D}(\delta_D)\wedge (-) \dashv \mathscr{P}_{f\times 1_D}(\delta_D)\rightarrow (-) \quad \mathscr{P}_{\pi_D}\dashv \forall_{\pi_D}
				\end{equation*}
				so it is enough to show, for all $\beta \in \mathscr{P}(D)$
				\begin{equation*}
				\mathscr{P}_f(\beta) = \exists_{\pi_C}(\mathscr{P}_{f\times 1_D}(\delta_D)\wedge\mathscr{P}_{\pi_D}(\beta))
				\end{equation*}
				but this is easily done:
				\begin{align*}
				\exists_{\pi_C}(\mathscr{P}_{f\times 1_D}(\delta_D)\wedge\mathscr{P}_{\pi_D}(\beta))&= 	\exists_{\pi_C}(\mathscr{P}_{1_D\times f}(\delta_D)\wedge \mathscr{P}_{1_D\times f}(\mathscr{P}_{\pi_1}(\beta)))\\&=
				\exists_{\pi_C}(\mathscr{P}_{1_D\times f}(\delta_D\wedge \mathscr{P}_{\pi_1}(\beta)))\\&=
				\mathscr{P}_f(\exists_{\pi_2}(\exists_{\Delta_D}(\beta)))\\&=\mathscr{P}_f(\beta)
				\end{align*}
				Where $\pi_2$ is the second projection $D\times D\rightarrow D$.	
			\end{itemize}
			For Frobenius reciprocity:  the inequality $	\exists_{f}(\mathscr{P}_{f}(\beta)\wedge \alpha )\leq \beta\wedge \exists_{f}(\alpha)$ follows from adjointness, let's show the other. Let $\pi_1$ and $\pi_2$ be the projections from $D\times D$ and compute:
			\begin{align*}
			\exists_{f}(\mathscr{P}_{f}(\beta)\wedge \alpha )&= \exists_{\pi_D}(\mathscr{P}_{f \times 1_D}(\delta_D)\wedge \mathscr{P}_{\pi_C}(\mathscr{P}_f(\beta)\wedge \alpha))\\&=\exists_{\pi_D}(\mathscr{P}_{f \times 1_D}(\delta_D)\wedge \mathscr{P}_{\pi_C}(\mathscr{P}_f(\beta))\wedge \mathscr{P}_{\pi_C}(\alpha))\\&=
			\exists_{\pi_D}(\mathscr{P}_{f \times 1_D}(\delta_D)\wedge \mathscr{P}_{f\times 1_D}(\mathscr{P}_{\pi_1}(\beta))\wedge \mathscr{P}_{\pi_C}(\alpha))\\&=
			\exists_{\pi_D}(\mathscr{P}_{f\times 1_D}(\delta_D\wedge \mathscr{P}_{\pi_1}(\beta)) \wedge \mathscr{P}_{\pi_C}(\alpha))\\&\leq 
			\exists_{\pi_D}(\mathscr{P}_{f\times 1_D}(\delta_D\wedge \mathscr{P}_{\pi_2}(\beta)) \wedge \mathscr{P}_{\pi_C}(\alpha))\\&=
			\exists_{\pi_D}(\mathscr{P}_{f\times 1_D}(\delta_D)\wedge \mathscr{P}_{f\times 1_D}(\mathscr{P}_{\pi_2}(\beta)) \wedge \mathscr{P}_{\pi_C}(\alpha))\\&=
			\exists_{\pi_D}(\mathscr{P}_{f\times 1_D}(\delta_D)\wedge \mathscr{P}_{\pi_D}(\beta) \wedge \mathscr{P}_{\pi_C}(\alpha))\\&=\exists_{\pi_D}(\mathscr{P}_{f\times 1_D}(\delta_D) \wedge \mathscr{P}_{\pi_C}(\alpha))\wedge \beta \\&=
			\exists_f(\alpha)\wedge \beta 
			\end{align*}
			where we have used the inequality:
			\begin{equation*}
			\delta_D\wedge \mathscr{P}_{\pi_1}(\beta) \leq \mathscr{P}_{\pi_2}(\beta)
			\end{equation*}
			that follows at once by the definition of  $\exists_{\Delta_D}$.	
			\qedhere 
		\end{proofEnd}
	\qedhere 
\end{proof}
\begin{remark}
	In general these adjoints do not satisfy any form of Beck-Chevalley condition \cite{vcubric1997semantics,jacobs1999categorical,maietti2017triposes,seely1983hyperdoctrines}.
\end{remark}
	\begin{definition}
		Let  $\mathscr{P}:\catname{C}^{op}\rightarrow \catname{InfSL}$, $\mathscr{S}:\catname{D}^{op}\rightarrow \catname{InfSL}$ be primary doctrines.
		
		A morphism $\mathscr{P}\rightarrow \mathscr{S}$ is a pair $(\mathscr{F}, \eta )$ where $\mathscr{F}:\catname{C}\to\catname{D}$ is a functor and $\eta:\mathscr{P}\rightarrow \mathscr{S}\circ \mathscr{F}^{op}$ is a natural transformation.
		
		$(\mathscr{F}, \eta)$ is a \emph{morphism of elementary doctrines}, or \emph{elementary}, if $\mathscr{F}$ preserves finite products and for any object $C$ of $\catname{C}$, it is
		$\eta_{C\times C}(\delta_C)=\mathscr{S}_{(\mathscr{F}(\pi_1), \mathscr{F}(\pi_2))}(\delta_{\mathscr{F}(C)})$.

		$(\mathscr{F}, \eta)$ is a \emph{morphism of existential doctrine} if $\mathscr{F}$ preserves finite products and for any pair of objects $C,D$ of $\catname{C}$  the diagram $(a)$ below commutes.
	    \vspace{-1ex}
		\[
		\begin{tikzpicture}[baseline=(current bounding box.center),scale=0.75]
		\node(E)at(-4,0){$\mathscr{P}(D\times C)$};
		\node(F)at(-4,-1.5){$\mathscr{S}(\mathscr{F}(D\times C))$};
		\node(I)at(-4,-3){$\mathscr{S}(\mathscr{F}(D)\times \mathscr{F}(C))$};
		\node(G)at(0,-3){$\mathscr{S}(\mathscr{F}(D))$};
		\node(H)at(0,0){$\mathscr{P}(C)$};
		\draw[->](E)--(H)node[pos=0.5, above]{$\exists_{\pi_C}$};
		\draw[->](E)--(F)node[pos=0.5, left]{$\eta_{D\times C}$};
		\draw[->](I)--(G)node[pos=0.5, above]{$\exists_{\pi_{\mathscr{F}(C)}}$};
		\draw[<-](I)--(F)node[pos=0.5, left]{$\mathscr{S}_{(\mathscr{F}(\pi_D),\mathscr{F}(\pi_C))^{}}$};
		\draw[->](H)--(G)node[pos=0.5, right]{$\eta_C$};
	    \node(L)at(-2,-4){(a)};
		\end{tikzpicture}%
	\hspace{1cm}
		\begin{tikzpicture}[baseline=(current bounding box.center),scale=0.75]
		\node(E)at(-4,-5){$\mathscr{P}(D\times C)$};
		\node(F)at(-4,-6.5){$\mathscr{S}(\mathscr{F}(D\times C))$};
		\node(I)at(-4,-8){$\mathscr{S}(\mathscr{F}(D)\times \mathscr{F}(C))$};
		\node(G)at(0,-8){$\mathscr{S}(\mathscr{F}(D))$};
		\node(H)at(0,-5){$\mathscr{P}(C)$};
		\draw[->](E)--(H)node[pos=0.5, above]{$\forall_{\pi_C}$};
		\draw[->](E)--(F)node[pos=0.5, left]{$\eta_{D\times C}$};
		\draw[->](I)--(G)node[pos=0.5, above]{$\forall_{\pi_{\mathscr{F}(C)}}$};
		\draw[<-](I)--(F)node[pos=0.5, left]{$\mathscr{S}_{(\mathscr{F}(\pi_D),\mathscr{F}(\pi_C))^{}}$};
		\draw[->](H)--(G)node[pos=0.5, right]{$\eta_C$};
		\node(L)at(-2,-9){(b)};
		\end{tikzpicture}
	\]

		$(\mathscr{F}, \eta)$ is a \emph{morphism of hyperdoctrines} if it is a morphism of existential doctrine, 
			the diagram $(b)$ above commutes too and each component of $\eta$ preserves finite suprema and implication.%
				
		If $(F,\eta)$ is also elementary then we call it a \emph{morphism of elementary existential doctrines} or of \emph{elementary hyperdoctrines}.
		
		Let $(\mathscr{F},\eta ), (\mathscr{G}, \epsilon): \mathscr{P}\rightarrow \mathscr{S}$ be two morphisms; a $2$-arrow $(\mathscr{F},\eta )\rightarrow (\mathscr{G}, \epsilon)$ is a natural transformations $\theta:\mathscr{F} \rightarrow \mathscr{G}$ such that
		$\eta_C(\alpha)\leq \mathscr{S}_{\theta_C}(\epsilon_C(\alpha))$.
		
		This defines the $2$-categories $\catname{PD}$, $\catname{ED}$, $\catname{HD}$ of primary doctrines, existential doctrines and hyperdoctrines, and the subcategories $\catname{EPD}$, $\catname{EED}$, $\catname{EHD}$ of their elementary variants.
	\end{definition}

	\subsection{Closure operators on doctrines}
	In this section we introduce the key notion of closure operators on doctrines.
	\begin{definition}
		Let $\mathscr{P}$ be a doctrine.
		A \emph{closure operator} on $\mathscr{P}$ is a (possibly large) family $\mathfrak{c} = \{\mathfrak{c}_C\}_{C\in \catname{Ob(C)}}$ of functions $\mathfrak{c}_C:\mathscr{P}(C)\rightarrow\mathscr{P}(C)$ such that:
		\begin{itemize}
			\item for any object $C$, $\mathfrak{c}_C$ is monotone and \emph{inflationary}, i.e., $1_{\mathscr{P}(C)} \leq \mathfrak{c}_C$
			\item any arrow $f:C\rightarrow D$ is \emph{continuous}, i.e.
			\begin{equation*}
				\mathfrak{c}_C\circ \mathscr{P}_f \leq \mathscr{P}_f \circ \mathfrak{c}_D
			\end{equation*}
		\end{itemize}
		A closure operator $\mathfrak{c}$ is said to be
		\begin{itemize}
			\item \emph{grounded} if  $\mathfrak{c}_C(\bot)=\bot$
			for all objects $C$ such that $\mathscr{P}(C)$ has a minimum;
			\item \emph{additive} if
			\begin{equation*}
			\mathfrak{c}_C(\alpha \vee \beta)=\mathfrak{c}_C(\alpha) \vee \mathfrak{c}_C(\beta)
			\end{equation*} 
			for all objects $C$ such that $\mathscr{P}(C)$ has binary suprema;
			\item \emph{finitely additive} if it is grounded and additive;
			\item \emph{full additive} if 
			\begin{equation*}
				\mathfrak{c}_C(\bigvee_{i\in I}\alpha_i)=\bigvee_{i\in I}\mathfrak{c}_C(\alpha_i)
			\end{equation*}
			for all $I\neq \emptyset$ and $C$ such that $\mathscr{P}(C)$ has $I$-indexed suprema;
			\item \emph{idempotent} if 
			$\mathfrak{c}_C\circ \mathfrak{c}_C=\mathfrak{c}_C$
			for all object $C$.
		\end{itemize} 
		
		A \emph{closure doctrine} is a pair $(\mathscr{P}, \mathfrak{c})$ where $\mathscr{P}$ is a primary doctrine and $\mathfrak{c}$ a closure operator on it.
		We say that $(\mathscr{P}, \mathfrak{c})$ is \emph{elementary, existential}, or a \emph{hyperdoctrine},  if $\mathscr{P}$ is. 
	\end{definition}
	
	\begin{example}
Lawvere-Tierney topologies on a topos provide examples of idempotent closure operators on the elementary hyperdoctrine of subobjects \cite{biering2008dialectica,johnstone2002sketches,maclane2012sheaves}.
	\end{example}
	
	\begin{remark} Full additivity does not imply groundedness since we explicitly ask for preservation of suprema indexed on non empty set.
	\end{remark}
		\begin{theoremEnd}{prop}\label{image}
			Let $\mathscr{P}\in \catname{EED}$ be an elementary existential doctrine and  $\mathfrak{c}$ a closure operator on it; then, for any $f:C\rightarrow D$, continuity of $f$ is equivalent to
			$\exists_f\circ \mathfrak{c}_C\leq \mathfrak{c}_D\circ \exists_f$.
		\end{theoremEnd}
	\begin{proof}
		\begin{proofEnd}
			Let's compute:
			\begin{gather*}
			\mathfrak{c}_C\circ \mathscr{P}_f \leq \mathscr{P}_f\circ \exists_{f}\circ \mathfrak{c}_C\circ \mathscr{P}_f  \leq \mathscr{P}_f \circ \mathfrak{c}_D\circ \exists_{f}\circ \mathscr{P}_f \leq \mathscr{P}_f\circ \mathfrak{c}_D
			\\
			\exists_{f}\circ \mathfrak{c}_C  \leq \exists_{f}\circ \mathfrak{c}_C\circ \mathscr{P}_f\circ \exists_{f} \leq \exists_{f}\circ \mathscr{P}_f\circ \mathfrak{c}_D\circ \exists_{f} \leq \mathfrak{c}_D\circ\exists_{f}
			\qedhere\end{gather*}
		\end{proofEnd} \qedhere
	\end{proof}
	If we think of a morphism of (primary, existential, elementary, hyper)doctrines $(\mathscr{F}, \eta):\mathscr{P}\rightarrow \mathscr{Q}$ as a  `translation' of `types' and `predicates' then, when closure operators are available, it is natural to ask for this `translation' to take place in a continuous way.
	\begin{definition}
		A \emph{morphism of closure (elementary, existential, hyper)doctrines} $(\mathscr{F},\eta):(\mathscr{P}, \mathfrak{c})\rightarrow (\mathscr{Q}, \mathfrak{d})$ is a morphism of (elementary, existential, hyper)doctrines $\mathscr{F}:\mathscr{P}\to\mathscr{Q}$ such that $\eta$ is \emph{continuous}, i.e., for all $C$:
		\begin{equation*}
			\mathfrak{d}_{\mathscr{F}(C)}\circ \eta_C \leq \eta_C \circ \mathfrak{c}_C
		\end{equation*}
		We say that $(\mathscr{F}, \eta)$ is \emph{open} if equality holds for all the objects $C$.
		A $2$-cell $\theta:(\mathscr{F},\eta)\rightarrow (\mathscr{G}, \epsilon)$ is defined as in the case of doctrines. In this way we get the $2$-categories $\catname{cPD}$, $\catname{cED}$, $\catname{cHD}$ of closure doctrines, closure existential doctrines, closure hyperdoctrines and the subcategories $\catname{cEPD}$, $\catname{cEED}$, $\catname{cEHD}$ of their elementary variants.
	\end{definition}

	\section{Examples of closure hyperdoctrines}\label{sec:examples}
	
	\subsection{Topological examples}
	As a first class of examples, we introduce three closure hyperdoctrines starting from the usual category $\catname{Top}$ of topological spaces and continuous maps.
	The first one corresponds to the \emph{closure spaces} used in, e.g., \cite{ciancia2014specifying,ciancia2016spatial,galton2003generalized}.
	\begin{definition}\label{def:pretopspaces}
		The category $\catname{PrTop}$ of  \emph{pretopological spaces} (or  \emph{closure spaces}) is the category in which:
		\begin{itemize}
			\item objects are pairs $(X,\mathfrak{c})$ of a set $X$ and a monotone function $\mathfrak{c}:\mathcal{P}(X)\rightarrow\mathcal{P}(X)$ such that $1_{\mathcal{P}(X)}\leq \mathfrak{c}$
			and $\mathfrak{c}$ preserves finite (even empty) suprema;
			
			\item an arrow $f:(X,\mathfrak{c}_X)\rightarrow (Y,\mathfrak{c}_Y)$ is a function $f:X\rightarrow Y$ such that $f^{-1}:(\mathscr{P}(Y), \mathfrak{c}_Y)\rightarrow (\mathscr{P}(X), \mathfrak{c}_X)$ is continuous.
		\end{itemize}
	\end{definition}
	
	Another example is given by so called \emph{convergence spaces} (cfr. \cite{dikranjan2013categorical}).	
	\begin{definition}
		For any set $X$ let $\catname{Fil}(X)$ be the set of proper filters (i.e., $\emptyset$ is not among them) on it.
		The category $\catname{FC}$ of \emph{filter convergence spaces} is the category in which: \begin{itemize}
			\item an object is a pair $(X,q_X)$ given by a set $X$ and a function
			$q_X:X\rightarrow \mathcal{P}(\catname{Fil}(X))$
			such that,  for any $x\in X$,
			$q_X(x)$ is upward closed and $\dot{x}:=\{A\subset X \mid x\in A\}$ belongs to $q_X(x)$. 
			\item an arrow $f:(X,q_x)\rightarrow (Y,q_Y)$ is a function $f:X\rightarrow Y$ such that the filter $f(F)$ generated by the images of $F$'s elements belongs to $q_Y(f(x))$ whenever $F\in q_X(x)$.
		\end{itemize}
	\end{definition}
	
	\begin{proposition}
		The obvious forgetful functors from $\catname{Top}$, $\catname{PrTop}$ and $\catname{FC}$ to $\catname{Set}$ preserves finite products.
	\end{proposition}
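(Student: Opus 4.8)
The plan is to treat the three cases uniformly by showing that each forgetful functor $U\colon\mathbf K\to\catname{Set}$, where $\mathbf K$ is $\catname{Top}$, $\catname{PrTop}$ or $\catname{FC}$, is a \emph{right adjoint}. Since right adjoints preserve all limits, and finite products (including the empty one, i.e.\ the terminal object) are limits, this immediately yields the proposition. In each case the left adjoint $\mathrm D\colon\catname{Set}\to\mathbf K$ sends a set $X$ to $X$ equipped with its \emph{discrete} structure, and the candidate unit at $X$ is the identity function $1_X\colon X\to U(\mathrm D X)$; so the whole content reduces to checking that every function $X\to U(Y)$ out of a discrete object is automatically a $\mathbf K$-morphism $\mathrm D X\to Y$ (uniqueness of the lift being immediate, as $U$ is faithful).

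Carrying this out:
\begin{itemize}
  \item For $\catname{Top}$, $\mathrm D X$ is $X$ with the discrete topology; every map out of it is continuous since all preimages are open. This is classical.
  \item For $\catname{PrTop}$, $\mathrm D X=(X,1_{\mathcal P(X)})$, and $1_{\mathcal P(X)}$ is monotone, inflationary and preserves all suprema, hence is a pretopology. For any $(Y,\mathfrak c_Y)$ and any $f\colon X\to Y$, continuity of $f^{-1}$ reads $1_{\mathcal P(X)}\circ f^{-1}\le f^{-1}\circ\mathfrak c_Y$, i.e.\ $f^{-1}\le f^{-1}\circ\mathfrak c_Y$, which holds because $\mathfrak c_Y$ is inflationary and $f^{-1}$ is monotone.
  \item For $\catname{FC}$, $\mathrm D X=(X,q^{\mathrm d}_X)$ with $q^{\mathrm d}_X(x):=\{F\in\catname{Fil}(X)\mid \dot{x}\subseteq F\}$, which is upward closed and contains $\dot{x}$, so it is a legitimate object. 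Given $(Y,q_Y)$ and $f\colon X\to Y$, if $F\in q^{\mathrm d}_X(x)$ then $\dot{x}\subseteq F$, whence $\dot{f(x)}=f(\dot{x})\subseteq f(F)$ (the image filter of $\dot{x}$ is $\dot{f(x)}$, its least generator being $\{f(x)\}$); since $q_Y(f(x))$ is upward closed and contains $\dot{f(x)}$, we get $f(F)\in q_Y(f(x))$, so $f$ is a morphism.
\end{itemize}

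In each case this says exactly that $1_X$ is a universal arrow from $X$ to $U$, so $\mathrm D\dashv U$; hence $U$ preserves limits, in particular finite products. For the record (though we do not need it), the resulting binary product of objects with underlying sets $X,Y$ has underlying set $X\times Y$ with the set-theoretic projections, carrying the product topology in $\catname{Top}$, the pretopology whose neighbourhood filter at $(x,y)$ is generated by the rectangles $U\times V$ with $U,V$ neighbourhoods of $x,y$ in $\catname{PrTop}$, and the convergence $q(x,y)=\{H\in\catname{Fil}(X\times Y)\mid \pi_X(H)\in q_X(x)\text{ and }\pi_Y(H)\in q_Y(y)\}$ in $\catname{FC}$.

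The only delicate points — hence the main obstacle, such as it is — are verifying that each proposed discrete structure really is an object of the category in question and that the universality holds: for $\catname{PrTop}$ this hinges on $\mathfrak c_Y$ being inflationary, and for $\catname{FC}$ on identifying $f(\dot{x})$ with $\dot{f(x)}$ together with upward-closure of $q_Y$. Everything else is purely formal. An equivalent route would build the terminal objects and binary products explicitly and verify their universal properties by hand; there the single subtle point is that the product pretopology is \emph{not} obtained by applying $\mathfrak c_X,\mathfrak c_Y$ coordinatewise but must be described through neighbourhood filters (equivalently, as the initial lift along the two projections, which exists because these forgetful functors are topological).
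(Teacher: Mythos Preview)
Your proof is correct and complete. The paper's own proof is essentially a citation: it says the $\catname{Top}$ case is clear and refers to \cite[Ch.~3]{dikranjan2013categorical} for $\catname{PrTop}$ and $\catname{FC}$. Your route is genuinely different and more informative: instead of appealing to an external source, you give a uniform argument by exhibiting in each case the discrete-structure functor as left adjoint to the forgetful functor, whence preservation of all limits (not merely finite products) follows formally. The verifications you supply for the universal property are accurate; in particular, the $\catname{FC}$ case is handled cleanly via the identification $f(\dot{x})=\dot{f(x)}$ together with upward closure of $q_Y(f(x))$. What your approach buys is self-containment and a stronger conclusion (preservation of arbitrary limits), at the modest cost of checking three small universal properties; what the paper's approach buys is brevity. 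Your closing remark that these forgetful functors are in fact topological is also correct and would give existence of products as well, which is what the subsequent application to \cref{trans} tacitly uses.
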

	\begin{proof}
		For $\catname{Top}$ it is clear, for the other two categories see \cite[Ch.3]{dikranjan2013categorical}.
	\end{proof}
	By \cref{trans} and the previous one, we have three elementary hyperdoctrines
	\begin{equation*}
	\mathscr{P}^t:\catname{Top}^{op}\rightarrow \catname{HA}\quad \mathscr{P}^p:\catname{PrTop}^{op}\rightarrow \catname{HA}\quad \mathscr{P}^f:\catname{FC}^{op}\rightarrow \catname{HA}
	\end{equation*}
	which we now endow with closure operators.
	
	\begin{definition}\label{def:topclosures}
		We define the following closure operators:
		\begin{enumerate}
			\item the \emph{Kuratowski closure operator} $k=\{k_{(X, \theta)}\}_{(X,\theta)\in \catname{Ob}(\catname{Top})}$ on $\mathscr{P}^t$ where $k_{(X, \theta)}$ is the closure operator associated with the topology $\theta$;
			\item the \emph{\v{C}ech closure operator} $c=\{c_{(X, \mathfrak{c})}\}_{(X,\mathfrak{c})\in \catname{Ob}(\catname{PrTop})}$ on $\mathscr{P}^p$ where $c_{(X, \mathfrak{c})}$ is just $\mathfrak{c}$;
			\item the \emph{Kat\v{e}tov closure operator} $\mathfrak{k}=\{\mathfrak{k}_{(X,q_X)}\}_{(X,\mathfrak{q_X})\in \catname{Ob}(\catname{FC})}$ on $\mathscr{P}^f$ where
			\begin{align*}
			\mathfrak{k}_{(X,q_X)}:\mathcal{P}(X) & \rightarrow \mathcal{P}(X)\\
			A & \mapsto \{x\in X\mid \exists F\in q_X(x).A\in F \}
			\end{align*}
		\end{enumerate}
	\end{definition}

		\begin{theoremEnd}{prop}[\cite{dikranjan2013categorical}, chapter $3$]\label{prtop}
			\begin{enumerate}
				\item $k$, $c$ and $\mathfrak{k}$ are grounded and additive closure operators, moreover $k$ is idempotent.
				\item There exists a sequence of inclusion functors 
				$\catname{Top}\xrightarrow{\mathscr{i}} \catname{PrTop}\xrightarrow{\mathscr{j}} \catname{FC}$
				each of which has a left adjoint.
				\item 	We have a sequence $(\mathscr{P}^t, k)\xrightarrow{(\mathscr{i}, \eta)}	(\mathscr{P}^p, c)	\xrightarrow{(\mathscr{j}, \epsilon)}	(\mathscr{P}^f, \mathfrak{k})$
				of morphisms in $\catname{cEHD}$ where $\eta$ and $\epsilon$ have identities as components.
			\end{enumerate}
		\end{theoremEnd}

	\begin{proof}
		\begin{proofEnd}%
			\begin{enumerate}
				\item For $k$ and $c$ the proposition is obvious, let us examine $\mathfrak{k}$:
				since $\dot{x}\in q_X(x)$ then $A\subset \mathfrak{k}_X(A)$, if $A\subset B$ then any filters that contains the former contains the latter too and this implies monotonicity, groundedness follows from the fact that $\emptyset$ does not belong to any proper filter, for additivity we can complete any filter $\mathcal{F}$ to which $A\cup B$ belong to an ultrafilter $\mathcal{U}$ that belongs to $q_X(x)$ since the latter is upward closed, either $A$ or $B$ must belong to $\mathscr{U}$ and we are done.
				\item $\mathscr{i}$ sends a topological space to the pretopological space given by the closure operator associate to its topology, $\mathscr{j}$ sends $(X, \mathfrak{c})$ to $(X,q^{\mathfrak{c}}_X)$ where
				\begin{align*}
				q^{\mathfrak{c}}_X:X & \rightarrow \mathcal{P}(\catname{Fil}(X))\\
				x & \mapsto \{\mathcal{F}\in\catname{Fil}(X) \mid \mathcal{V}_x\subset \mathcal{F} \}
				\end{align*}
				where $\mathcal{V}_x:=\{S \subset X \mid x\notin \mathfrak{c}(X\smallsetminus S)\}$.
				For the left adjoints see \cite{dikranjan2013categorical}.
				\item This is obvious.\qedhere
			\end{enumerate}
		\end{proofEnd} \qedhere
	\end{proof}
	
	For many other examples of closure operators on topological spaces we refer the interested reader to \cite{dikranjan2013categorical}.

	\subsection{Algebraic examples}	

		\begin{proposition}{prop}
			Let $\catname{Grp}$ be the category of groups and $\catname{CRing}$ that of commutative, unital rings (where we require that $f(1_A)=1_B$ for any $f:A\rightarrow B$).
			Then, $\sub{Grp}$ and $\sub{CRing}$ are elementary existential doctrines.
		\end{proposition}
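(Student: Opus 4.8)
The plan is to reduce to \cref{reg}: both $\catname{Grp}$ and $\catname{CRing}$ manifestly have all finite limits (computed as in $\catname{Set}$, the forgetful functors to $\catname{Set}$ being monadic), so it suffices to check that the two categories are regular. In fact both are varieties of algebras --- equationally axiomatised classes, equivalently categories of algebras for a finitary monad on $\catname{Set}$ --- and every such category is known to be Barr-exact, in particular regular, so one legitimate option is simply to cite this. I would nonetheless spell out the two conditions that are not immediate, since they are short.

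First, I would exhibit the (regular epi, mono) factorisation. Given $f\colon A\to B$ in $\catname{Grp}$ (resp.\ $\catname{CRing}$), factor it through the image subgroup (resp.\ subring) $\im{f}\hookrightarrow B$: the corestriction $A\to\im{f}$ is a surjective homomorphism and $\im{f}\hookrightarrow B$ is a mono, so it remains to see that the surjections are exactly the regular epis. A surjection $q\colon A\to Q$ is the coequaliser of its kernel pair, which corresponds to a normal subgroup (resp.\ ideal) $N$ with $Q\cong A/N$, hence $q$ is a regular epi; conversely every regular epi is the coequaliser of its own kernel pair, and by the same computation that coequaliser is a quotient map, hence surjective. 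Second, I would check pullback-stability of regular epis: if $f\colon A\to C$ is surjective and $g\colon B\to C$ is arbitrary, the pullback is the subalgebra $P=\{(a,b)\mid f(a)=g(b)\}$ of $A\times B$, and its projection to $B$ is surjective since for each $b\in B$ surjectivity of $f$ yields some $a$ with $f(a)=g(b)$. Hence $\catname{Grp}$ and $\catname{CRing}$ are regular, and \cref{reg} gives that $\sub{Grp}$ and $\sub{CRing}$ are elementary existential doctrines.

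The step I expect to need the most care is the identification of regular epimorphisms with surjective homomorphisms --- a priori a regular epi need not be surjective --- which is where one uses that a regular epi is the coequaliser of its kernel pair, together with the explicit description of kernel pairs (congruences corresponding to normal subgroups, resp.\ ideals) and of their coequalisers (quotient maps) in these categories. Everything else reduces, via the forgetful functor, to the corresponding elementary facts about $\catname{Set}$.
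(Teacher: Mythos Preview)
Your proposal is correct and follows exactly the paper's approach: both appeal to \cref{reg}, reducing the claim to the regularity of $\catname{Grp}$ and $\catname{CRing}$. The paper's proof is a one-liner that takes regularity of these varieties for granted, whereas you spell out the (regular epi, mono) factorisation and pullback-stability explicitly; the extra detail is fine and your identification of regular epis with surjections is handled correctly.
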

	\begin{proof}
		This follows at once from theorem \ref{reg}. \qedhere 
	\end{proof}
	\begin{remark}
		Notice that, even if $\sub{Grp}(G)$ and $\sub{CRing}(A)$ admit finite suprema for any group $G$ or commutative ring $A$ with unity, preimages do not preserve them in general: for instance they do not preserve the bottom subobject. Then $\sub{Grp}$ or $\sub{CRing}$ cannot be universal doctrines.
	\end{remark}

	The following examples are taken from \cite{dikranjan2013categorical}.		
	\begin{definition}[Groups] The \emph{normal closure} on a group $G$ is given by
		\begin{align*}
		\nu_G:\sub{Grp}(G) & \rightarrow \sub{Grp}(G)\\
		H & \mapsto \bigcap\{N\leq G\mid H\leq N \trianglelefteq G\}
		\end{align*}
		where we have chosen the image of a monomorphism as a canonical representative of it.
	\end{definition}
		\begin{theoremEnd}{prop}
			The family previous defined forms a closure operators $\nu$ on $\sub{Grp}$that is idempotent, fully additive and grounded.
		\end{theoremEnd}
		\begin{proof}
		\begin{proofEnd}
			Since the preimage of a normal subgroup is normal we have that the $\nu$ actually exists as a closure operator. The three poperties of it follow immediately by the fact that $\{0\}$ is normal and so are the arbitrary intersections or sums of normal subgroups.\qedhere
		\end{proofEnd} \qedhere
	\end{proof}
	\begin{definition}[Rings]
		Let $A$ be a unital commutative ring and $B$ a subring, we define $\integ{A}(B)$ to be the \emph{integral closure of $B$}:
		\begin{equation*}
		\integ{A}(B):=\{a\in A\mid p(a)=0 \ \text{for some} \ p\in B[x]\}
		\end{equation*}
		Again we are denoting a subobject by the image of any representative of it.
	\end{definition}
		\begin{theoremEnd}{prop} For any $A$ $\integ{A}$ is a function $\sub{CRing}(A)\rightarrow\sub{CRing}(A)$, moreover the family of this functions forms an idempotent closure operator $\integ{}$. 
		\end{theoremEnd}	
	\begin{proof}
		\begin{proofEnd}
			To show that $\integ{A}(B)$ is a subring of $A$ and idempotency we refer to \cite[Cor.~5.3, 5.5]{atiyah2018introduction}.
			Let us show that $\integ{}$ is actually a closure operator. Consider $f:A\rightarrow B$ and $C$ a subring of $B$, let $a\in A$ such that $p(a)=0$ for some $p\in f^{-1}(C)[X]$ with coefficients $\{p_i\}_{i=0}^{\deg(p)}$, then $q(f(a))=0$ where $q\in C[X]$ has coefficients $\{f(p_i)\}_{i=0}^{\deg(p)}$ and we are done.\qedhere
		\end{proofEnd} \qedhere
	\end{proof}
	\subsection{A representable example}\label{sec:reprex}
	\begin{theorem}
		For any complete Heyting algebra $H$, the functor $\catname{Set}(-, H):\catname{Set}^{op}\to\catname{HA}$ is an elementary hyperdoctrine.
	\end{theorem}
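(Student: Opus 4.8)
Here is how I would prove that $\catname{Set}(-,H)$ is an elementary hyperdoctrine. The strategy is a direct verification, checking in turn the four layers of the definition — functor into $\catname{HA}$, existential structure, right adjoints (hyperdoctrine), elementary structure — each of which reduces to an equality or adjointness between \emph{pointwise}-defined elements of powers of $H$. Since $\catname{Set}$ obviously has finite products and a terminal object, the only genuine inputs will be that $H$ is a Heyting algebra and that it is complete.

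\emph{Functor into $\catname{HA}$.} Heyting algebras form a variety, so for each set $X$ the hom-set $\catname{Set}(X,H)=H^X$ carries a Heyting-algebra structure with all operations $\top,\bot,\wedge,\vee,\rightarrow$ computed argumentwise from those of $H$, and this algebra is complete because $H$ is. For $f\colon X\to Y$ the reindexing map $\mathscr{P}_f\colon H^Y\to H^X$ is precomposition $g\mapsto g\circ f$, which commutes with all five operations argumentwise, hence is a Heyting-algebra morphism; functoriality ($(g\circ f)^*=f^*\circ g^*$, $1^*=1$) is immediate. Composing with the forgetful functor $\catname{HA}\to\catname{InfSL}$ thus gives a primary doctrine, and the factorization through $\catname{HA}$ is exactly the first clause in the definition of hyperdoctrine.

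\emph{Quantifiers, Beck--Chevalley, Frobenius.} For a projection $\pi\colon C\times D\to C$ and $\phi\in H^{C\times D}$ I will set $\exists_{\pi}(\phi)(c):=\bigvee_{d\in D}\phi(c,d)$ and $\forall_{\pi}(\phi)(c):=\bigwedge_{d\in D}\phi(c,d)$, which exist by completeness of $H$ (the degenerate case $D=\emptyset$ being handled by $\bigvee_\emptyset=\bot$, $\bigwedge_\emptyset=\top$). Since $\mathscr{P}_\pi(g)(c,d)=g(c)$, the adjunctions $\exists_\pi\dashv\mathscr{P}_\pi\dashv\forall_\pi$ are read off immediately pointwise. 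Beck--Chevalley for a square of the required shape says, for $f\colon C'\to C$, that $\exists_{\pi_{C'}}(\mathscr{P}_{1_D\times f}\phi)$ and $\mathscr{P}_f(\exists_{\pi_C}\phi)$ agree; both send $c'$ to $\bigvee_{d\in D}\phi(f(c'),d)$ (and dually with $\bigwedge$), so this is automatic. The only place I use that $H$ is a \emph{complete} Heyting algebra (a frame) is Frobenius reciprocity: evaluating at $c$, both $\exists_\pi(\mathscr{P}_\pi(\alpha)\wedge\beta)$ and $\alpha\wedge\exists_\pi(\beta)$ equal $\alpha(c)\wedge\bigvee_{d}\beta(c,d)$, which requires the infinite distributive law $a\wedge\bigvee_i b_i=\bigvee_i(a\wedge b_i)$. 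This establishes that $\catname{Set}(-,H)$ is an existential doctrine and a hyperdoctrine.

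\emph{Fibered equality.} Take $\delta_C\in\catname{Set}(C\times C,H)=H^{C\times C}$ to be the equality predicate, $\delta_C(c,c')=\top$ if $c=c'$ and $\bot$ otherwise. With $\pi_1,\pi_2,\pi_3$ the evident projections of $D\times C\times C$, the operator $\mathscr{P}_{(\pi_1,\pi_2)}(-)\wedge\mathscr{P}_{(\pi_2,\pi_3)}(\delta_C)$ sends $\psi\in H^{D\times C}$ to $(d,c,c')\mapsto\psi(d,c)\wedge\delta_C(c,c')$, i.e.\ to $\psi(d,c)$ on the diagonal $c=c'$ and $\bot$ off it; and $\mathscr{P}_{1_D\times\Delta_C}(\phi)(d,c)=\phi(d,c,c)$. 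Unwinding the two inequalities
\[
\mathscr{P}_{(\pi_1,\pi_2)}(\psi)\wedge\mathscr{P}_{(\pi_2,\pi_3)}(\delta_C)\leq\phi
\qquad\text{and}\qquad
\psi\leq\mathscr{P}_{1_D\times\Delta_C}(\phi)
\]
pointwise, and using $\delta_C(c,c)=\top$ and $\delta_C(c,c')=\bot$ for $c\neq c'$, shows they are equivalent, so $\delta_C$ is a fibered equality and $\catname{Set}(-,H)$ is elementary. (Left and right adjoints along an arbitrary $f\colon X\to Y$ then follow from Proposition~\ref{lad}; explicitly $\exists_f(\psi)(y)=\bigvee_{x\in f^{-1}(y)}\psi(x)$ and $\forall_f(\psi)(y)=\bigwedge_{x\in f^{-1}(y)}\psi(x)$.) I do not expect any real obstacle here: the whole proof is bookkeeping with pointwise operations, and the only substantive use of the hypotheses is completeness of $H$ — for the existence of the quantifiers and, through frame distributivity, for Frobenius reciprocity — together with a little care about empty fibres.
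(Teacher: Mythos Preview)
Your proof is correct and follows essentially the same approach as the paper: the paper simply cites Pitts (tripos theory) and records the explicit formulas $\exists_{\pi_X}(f)(x)=\bigvee_{y}f(x,y)$, $\forall_{\pi_X}(f)(x)=\bigwedge_{y}f(x,y)$, and $\delta_X(x,y)=\top$ iff $x=y$, which are exactly the ones you give. Your write-up supplies the pointwise verifications (including the use of frame distributivity for Frobenius and the off-diagonal $\bot$ argument for the fibered-equality adjunction) that the paper leaves to the reference.
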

\begin{proof}
	This is example $2.2$ in \cite{pitts2002tripos}. Explicitly we have
	\begin{equation*}
	\begin{aligned}
	\exists_{\pi_X}(f):X & \rightarrow H\\
	x & \mapsto \bigvee_{y\in Y}f(x,y)
	\end{aligned}\qquad 
	\begin{aligned}
	\forall_{\pi_X}(f):X & \rightarrow H\\
	x & \mapsto \bigwedge_{y\in Y}f(x,y)
	\end{aligned}\qquad
	\begin{aligned}
		\delta_X:X\times X&\rightarrow [0,1]\\(x,y)&\mapsto \begin{cases}
		0 & x\neq y\\
		1 &x=y
		\end{cases}
	\end{aligned}
\end{equation*}
\end{proof}

	\begin{corollary}
		$\catname{Set}(-, [0,1]):\catname{Set}^{op}\to\catname{HA}$ is an elementary hyperdoctrine on $\catname{Set}$.
	\end{corollary}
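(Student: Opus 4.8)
The plan is to recognise the corollary as a direct instance of the previous theorem: it suffices to check that the real unit interval $[0,1]$, with its usual order, is a complete Heyting algebra, and then to invoke that theorem with $H=[0,1]$.

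First I would observe that $[0,1]$ is a complete lattice: since $\mathbb{R}$ is Dedekind-complete and $[0,1]$ is a bounded interval, every subset $S\subseteq[0,1]$ has a least upper bound and a greatest lower bound lying in $[0,1]$ (with $\bigvee\emptyset=0$ and $\bigwedge\emptyset=1$), and binary meets and joins are just $\min$ and $\max$. Next, to see that this complete lattice is a Heyting algebra I would exhibit the relative pseudocomplement explicitly: set $a\Rightarrow b:=1$ if $a\leq b$ and $a\Rightarrow b:=b$ if $a>b$, and verify the defining adjunction $c\leq(a\Rightarrow b)\iff\min(c,a)\leq b$ by a short case analysis on whether $a\leq b$ holds. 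Equivalently, one checks the infinite distributive law $\min(a,\sup_{i\in I}b_i)=\sup_{i\in I}\min(a,b_i)$, i.e.\ that $[0,1]$ is a frame; this is the standard fact being used.

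With $[0,1]$ established as a complete Heyting algebra, the previous theorem applies verbatim and yields that $\catname{Set}(-,[0,1]):\catname{Set}^{op}\to\catname{HA}$ is an elementary hyperdoctrine, the left and right adjoints to reindexing along projections being the fibrewise $\bigvee$ and $\bigwedge$ over $[0,1]$, and the fibered equality being the $\{0,1\}$-valued predicate $\delta_X$ displayed in that proof. There is no genuine obstacle here; the only content is the observation that the unit interval is a complete Heyting algebra, after which the corollary is immediate.
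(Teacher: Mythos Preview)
Your proposal is correct and follows exactly the approach the paper intends: the corollary is an immediate application of the preceding theorem with $H=[0,1]$, and you have supplied the one missing verification that $[0,1]$ is a complete Heyting algebra (a fact the paper tacitly uses later when writing down its implication explicitly). Nothing more is required.
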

	\begin{definition}
		For any fixed  real $\epsilon \geq 0$, and any set $X$ we define, for an $f:X\rightarrow [0,1]$ we define
		\begin{gather*}
		\begin{aligned}
		\mathfrak{c}_{X, \epsilon}(f):X & \rightarrow [0,1]\\
		x & \mapsto f(x)\dot{+}\epsilon
		\end{aligned}
		\quad \text{where} \quad
		\begin{aligned}
		\dot{+}:[0,1]\times [0,1]&\rightarrow [0,1]\\
		(t,s)& \mapsto \max(t+s, 1)
		\end{aligned}
		\end{gather*}
		In this way we get a function
		\begin{align*}
		\mathfrak{c}_{X, \epsilon}: \catname{Set}(X, [0,1]) & \rightarrow \catname{Set}(X, [0,1])\\
		f & \mapsto 	\mathfrak{c}_{X, \epsilon}(f)
		\end{align*}
	\end{definition}
		\begin{theoremEnd}{prop}
			For any $\epsilon \geq 0$, the collection $\mathfrak{c}_\epsilon$ of all the functions $\mathfrak{c}_{X, \epsilon}$ is a closure operator.
		\end{theoremEnd}
	\begin{proof}
		\begin{proofEnd}
			Clearly $f\leq\mathfrak{c}_{X,\epsilon}(f)$ for any $f:X\rightarrow [0,1]$, monotonicity is clear, let's check continuity of any function $g:X\rightarrow Y$:
			\begin{align*}
			\mathfrak{c}_{X,\epsilon}(f\circ g)(x)
			&=	(f\circ g)(x)\dot{+}\epsilon \\
			&= f(g(x))\dot{+}\epsilon \\
			&=\mathfrak{c}_{x,\epsilon}(f)(g(x))\\&=(\mathfrak{c}_{x,\epsilon}(f)\circ g)(x)
			\qedhere
			\end{align*}
		\end{proofEnd} \qedhere
	\end{proof}
	\begin{remark}
		$\mathfrak{c}_{\epsilon}$ is not grounded if $\epsilon \neq 0$ (in that case it reduces to the discrete closure operator) but it is additive.\end{remark}

	\subsection{Fuzzy sets}
	We can refine the previous example considering \emph{fuzzy sets}.
	\begin{definition}
		The category \emph{$\catname{Fzs}$ of fuzzy sets} has:
		\begin{itemize}
			\item pairs $(A, \alpha)$ with $\alpha:A\rightarrow [0,1]$ as objects;
			\item as arrows $f:(A, \alpha)\rightarrow (B, \beta)$ functions $f:A\rightarrow B$ such that 
			$\alpha(x)\leq \beta(f(x))$.
		\end{itemize}
	\end{definition}
	\begin{definition}
		A \emph{fuzzy subset} of $(A, \alpha)$ is a function $\xi:A\rightarrow [0,1]$ such that
		$\xi(x)\leq \alpha(x)$	for all $x\in A$.
	\end{definition}
	Let us summarize some results about $\catname{Fzs}$.
	\begin{proposition}
		\begin{enumerate}
			\item $\catname{Fzs}$ is a quasitopos; 
			\item there exists a proper and stable factorization system given by strong monomorphisms and epimorphisms;
			\item fuzzy subsets of $(A,\alpha)$ correspond to equivalence of strong monomorphisms of codomain $(A, \alpha)$;
			\item the functor 
			\begin{align*}
			\catname{Fzs}^{op}& \rightarrow \catname{HA}\\
			\functor[l]{(A, \alpha)}{f}{(B, \beta)}
			& \functoropmapsto
			\rfunctorop{\fuz(A, \alpha)}{f^*}{\fuz(B, \beta)}
			\end{align*}
			where $\fuz(A, \alpha)$ is the set of fuzzy subsets of $(A, \alpha)$ and 
			\begin{align*}
			f^*(\xi):A&\rightarrow [0,1]\\
			x&\mapsto \alpha(x)\wedge \xi(f(x))
			\end{align*}
			for any $\xi \in \fuz(B, \beta)$,
			is an elementary hyperdoctrine.
		\end{enumerate}
	\end{proposition}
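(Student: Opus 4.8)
The plan is to dispatch items (1)--(3) by recalling standard facts about $\catname{Fzs}$ and to give item (4) a self-contained treatment running parallel to the verification for $\catname{Set}(-,[0,1])$. For (1) I would record the elementary structure of $\catname{Fzs}$: it has all finite limits and colimits, computed on underlying sets with the evident membership functions (pointwise meet for binary products and pullbacks, restriction of the codomain's membership for equalizers, disjoint union for coproducts, and the pointwise supremum over the generated equivalence for coequalizers), and it is cartesian closed, the exponential $(B,\beta)^{(A,\alpha)}$ being carried by the set $B^A$ of all functions $A \to B$ with membership $\varphi \mapsto \bigwedge_{a\in A}\bigl(\alpha(a) \Rightarrow \beta(\varphi(a))\bigr)$; together with a classifier for strong monomorphisms this exhibits $\catname{Fzs}$ as a quasitopos, and since all of this is classical I would simply cite it. Item (2) is then the instance for $\catname{Fzs}$ of the general fact that a quasitopos carries a proper and stable factorization system whose right class consists of the strong monomorphisms: properness is immediate and stability reduces to checking that the left class is stable under pullback, which one reads off the pullback description above. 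For (3) one identifies, as is standard in fuzzy-set theory, the poset of (equivalence classes of) strong monomorphisms into $(A,\alpha)$ with the poset $\fuz(A,\alpha)$ of fuzzy subsets, a fuzzy subset $\xi \le \alpha$ corresponding to the canonical representative of the strong mono it determines; this too would be cited.

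For (4), first note that $\catname{Fzs}$ has finite products ($(A,\alpha)\times(B,\beta) = (A\times B,\,\alpha\wedge\beta)$, terminal object $(\{\ast\},\mathbf{1})$), so the elementary-hyperdoctrine axioms make sense. The existential part comes for free: by (1)--(3) the functor of the statement is (isomorphic to) the strong-subobject doctrine of $\catname{Fzs}$, so the Example on stable proper factorization systems already makes it an elementary existential doctrine, with $\delta_{(A,\alpha)}$ the class of the diagonal---concretely the fuzzy subset of $(A,\alpha)\times(A,\alpha)$ sending $(x,y)$ to $\alpha(x)$ if $x=y$ and to $0$ otherwise---and with $\exists$ along a projection given fibrewise by the supremum over the forgotten coordinate. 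What that Example does not supply---it warns that the subobject functor has Heyting-valued fibres essentially only for genuine topoi---is the Heyting upgrade, and for this I would use the explicit picture: $\fuz(A,\alpha)$ is the principal down-set ${\downarrow}\alpha$ of the frame $\catname{Set}(A,[0,1]) = [0,1]^A$, hence a complete Heyting algebra with meets, non-empty joins and top $=\alpha$ computed pointwise and with implication $(\xi \Rightarrow \eta)(x) = \alpha(x) \wedge \bigl(\xi(x) \Rightarrow \eta(x)\bigr)$; reindexing along $f$ is ordinary precomposition followed by the reflection $(-)\wedge\alpha$ onto ${\downarrow}\alpha$, i.e.\ $f^*(\xi) = \alpha \wedge (\xi\circ f)$; and along a projection $\pi$ off $(A,\alpha)$ one has $\exists_\pi(\zeta)(b) = \bigvee_{a}\zeta(a,b)$, the same formula as for $\catname{Set}(-,[0,1])$, while $\forall_\pi$ picks up a correction from $\alpha$, namely $\forall_\pi(\zeta)(b) = \beta(b)\wedge\bigwedge_{a}\bigl(\alpha(a)\Rightarrow\zeta(a,b)\bigr)$. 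With these formulas every remaining check---that $f^*$ is a Heyting morphism, the two Beck--Chevalley squares, Frobenius reciprocity---reduces to a pointwise identity in the Heyting algebra $[0,1]$.

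The only real obstacle is this last reduction, and the identity worth isolating in advance is that $f^*$ preserves implication: pointwise this amounts to $a \wedge \bigl((a\wedge b)\Rightarrow(a\wedge c)\bigr) = a \wedge (b \Rightarrow c)$, valid in every Heyting algebra. Everything else is bookkeeping, organised around the facts that the inclusion $\fuz(A,\alpha)\hookrightarrow[0,1]^A$ is closed under arbitrary joins and non-empty meets and is reflective via $(-)\wedge\alpha$, so that the connectives and quantifiers of $\fuz$ are those of $[0,1]^A$ possibly post-composed with this reflection; the distributivity and residuation laws of the frame $[0,1]$ then yield the Beck--Chevalley and Frobenius equalities just as in the proof for $\catname{Set}(-,[0,1])$.
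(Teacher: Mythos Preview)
Your proposal is correct. The paper's own proof is essentially a bare citation to Wyler's \emph{Lecture notes on topoi and quasitopoi} together with the explicit formulas for $\exists_f$ and $\forall_f$ (the same ones you write down for projections), so there is little to compare at the level of ideas. Where you go further is in the decomposition of item~(4): you first invoke the paper's own Example on stable proper factorization systems to obtain the elementary existential structure from (1)--(3), and only then supply the Heyting upgrade by hand, identifying $\fuz(A,\alpha)$ with the principal down-set ${\downarrow}\alpha$ in the frame $[0,1]^A$ and reducing the remaining checks (Heyting morphisms, right adjoints, Beck--Chevalley, Frobenius) to pointwise identities in $[0,1]$. This is a cleaner internal justification than the paper offers, and your isolation of the identity $a\wedge\bigl((a\wedge b)\Rightarrow(a\wedge c)\bigr)=a\wedge(b\Rightarrow c)$ as the crux for preservation of implication is exactly the right observation. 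The paper, by contrast, simply delegates all of (1)--(4) to the reference and records the adjoint formulas without argument.
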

	\begin{proof}
		See \cite[Ch.~8]{oswald1991lecture}. Explicitly the hyperdoctrine structure is given by:
		\begin{gather*}
		\begin{aligned}
		\exists_f(\xi):B&\rightarrow [0,1]\\
		y &\mapsto \bigvee_{x\in f^{-1}(y)}\xi(x)
		\end{aligned}\qquad
		\begin{aligned}
		\forall_f(\xi):B&\rightarrow [0,1]\\
		y &\mapsto \beta(y) \wedge \bigwedge_{x\in f^{-1}(y)}(\alpha(x)\Rightarrow \xi(x))
		\end{aligned}
		\end{gather*}
		for any $f:(A, \alpha)\rightarrow (B, \beta)$ and $\xi \in \fuz(A, \alpha)$.
		\qedhere
	\end{proof}
	\begin{remark}
		Implication in $[0,1]$ is given by:
		\begin{equation*}
		t \Rightarrow s = \begin{cases}
		1 &t\leq s\\
		s &s < t
		\end{cases}
		\end{equation*}
		Moreover the fibered equality for a fuzzy set $(A,\alpha)$ must be $\exists_{\Delta_{(A,\alpha)}}(\alpha)$, i.e.:
		\begin{align*}
		\delta_{(A, \alpha)}:A\times A & \rightarrow [0,1]\\
		(x,y) & \mapsto \begin{cases}
		\alpha (x) &x=y\\
		0 &x\neq y
		\end{cases}
		\end{align*}
		Notice that in $\catname{Fzs}$,  $(A, \alpha)\times (B, \beta)$ is $(A\times B, \alpha \wedge \beta)$.
	\end{remark}
		\begin{theoremEnd}{prop} 
			Let $\mathscr{E}=\{\epsilon_{(A, \alpha)}\}_{(A,\alpha)\in \catname{Ob(Fzs)}}$ be a family of functions $\epsilon_{(A, \alpha)}:(A,\alpha)\rightarrow [0,1]$ such that, for any $f:(A,\alpha)\rightarrow (B,\beta)$
			\begin{equation*}
			\epsilon_{(A, \alpha)}(x)\leq \epsilon_{(B,\beta)}(f(x))
			\end{equation*}
			then, we get an additive closure operator on $\fuz$ defined as follows:
			\begin{align*}
			\mathfrak{c}^{\mathscr{E}}_{(A,\alpha)}: \fuz(A,\alpha)&\rightarrow \fuz(A,\alpha)\\
			\xi &\mapsto (\xi+\epsilon_{(A, \alpha)})\wedge \alpha
			\end{align*}
		\end{theoremEnd}
	\begin{proof}
		\begin{proofEnd}
			We have to show continuity of all arrows $f:(A,\alpha)\rightarrow (B,\beta)$. Let $\xi\in (B,\beta)$ and $x\in A$, we have four cases:
			\begin{enumerate}
				\item $f^*(\xi)(x) + \epsilon_{(A,\alpha)}(x)< \alpha(x)$ and $\xi(x) + \epsilon_{(B,\beta)}(x)< \beta(x)$. 
				\begin{align*}
				(\mathfrak{c}^{\mathscr{E}}_{(A,\alpha)}(f^*(\xi)))(x)&=(f^*(\xi) + \epsilon_{(A,\alpha)})(x)\\&=(\alpha(x)\wedge \xi(f(x)))+\epsilon_{(A, \alpha)}(x)\\&=\alpha(x)\wedge (\xi(f(x))+\epsilon_{(A, \alpha)}(x))\\&\leq \alpha(x)\wedge (\xi(f(x))+\epsilon_{(B,\beta)}(f(x)))\\&=f^*(\mathfrak{c}^{\mathscr{E}}_{(B,\beta)}(\xi))(x)
				\end{align*}
				\item $f^*(\xi)(x) + \epsilon_{(A,\alpha)}(x)< \alpha(x)$ and $\xi(f(x)) + \epsilon_{(B,\beta)}(f(x))\geq  \beta(f(x))$. Notice that $\alpha(x)\leq\beta(f(x))$ so
				\begin{equation*}
				f^*(\mathfrak{c}^{\mathscr{E}}_{(B,\beta)}(\xi))(x)=\alpha(x)
				\end{equation*}
				from which:
				\begin{align*}
				(\mathfrak{c}^{\mathscr{E}}_{(A,\alpha)}(f^*(\xi)))(x)&=(f^*(\xi) + \epsilon_{(A,\alpha)})(x)\\&=(\alpha(x)\wedge \xi(f(x)))+\epsilon_{(A, \alpha)}(x)\\&=\alpha(x)\wedge  (\xi(f(x))+\epsilon_{(A, \alpha)}(x))\\&=\alpha(x)\\&=f^*(\mathfrak{c}^{\mathscr{E}}_{(B,\beta)}(\xi))(x)
				\end{align*}
				\item $f^*(\xi)(x) + \epsilon_{(A,\alpha)}(x)\geq \alpha(x)$ and $\xi(x) + \epsilon_{(B,\beta)}(x)< \beta(x)$.
				\begin{align*}
				(\mathfrak{c}^{\mathscr{E}}_{(A,\alpha)}(f^*(\xi)))(x)&=\alpha(x)\\&=\alpha(x)\wedge  (\xi(f(x))+\epsilon_{(A, \alpha)}(x))\\&\leq \alpha(x)\wedge (\xi(f(x))+\epsilon_{(B,\beta)}(f(x)))\\&=f^*(\mathfrak{c}^{\mathscr{E}}_{(B,\beta)}(\xi))(x)
				\end{align*}
				\item $f^*(\xi)(x) + \epsilon_{(A,\alpha)}(x)\geq \alpha(x)$ and $\xi(x) + \epsilon_{(B,\beta)}(x)\geq \beta(x)$. \begin{align*}
				(\mathfrak{c}^{\mathscr{E}}_{(A,\alpha)}(f^*(\xi)))(x)&=\alpha(x)\\&=\alpha(x)\wedge\beta(f(x))\\&=f^*(\mathfrak{c}^{\mathscr{E}}_{(B,\beta)}(\xi))(x)
				\end{align*}
				We are left with additivity, but this follows immediately since, for $\xi$ and $\zeta\in \fuz(A,\alpha)$ and $x\in A$, 
				$(\xi \vee \zeta)(x)$ is $\xi(x)$ or $\zeta(x)$.
				\qedhere
			\end{enumerate}
		\end{proofEnd} \qedhere
	\end{proof}
	\begin{remark}
		$\mathfrak{c}^{\mathscr{E}}$ is not grounded in general.
	\end{remark}	
	
	The condition on the elements of $\mathscr{E}$ is very restrictive. In fact, it can be eased restricting to a suitable subclass of arrows and using the following lemma.
	\begin{lemma}\label{rest}
		Let $\mathscr{P}:\catname{C}^{op}\rightarrow \catname{InfSL}$ be a  doctrine, and $\mathfrak{c}=\{\mathfrak{c}_C:\mathscr{P}(C)\rightarrow \mathscr{P}(C)\}_{C\in \catname{Ob(C)}}$ be a family of monotone and inflationary operators.
		Let $\mathscr{A}$ be a (possibly large) family of \catname{C}-arrows such that:
		\begin{itemize}
			\item $\mathscr{A}$ is closed under composition;
			\item if $f\in \mathscr{A}$ then $1_{\dom(A)}$ and $1_{\cod(A)}$ are in $\mathscr{A}$;
			\item $f:C\rightarrow D$ in $\mathscr{A}$ implies
			$\mathfrak{c}_C\circ \mathscr{P}_f\leq \mathscr{P}_f \circ \mathfrak{c}_D$.
		\end{itemize}
		Then $\mathscr{P}$ induces a doctrine $\mathscr{P}^{\mathscr{A}}$ on the subcategory $\catname{C}_{\mathscr{A}}$ induced by $\mathscr{A}$ for which $\mathfrak{c}=\{\mathfrak{c}_C\}_{C\in \catname{Ob(C}_\mathscr{A}\catname{)}}$ is a closure operator.
		Moreover, if for all $f,g$ in $\mathscr{A}$ also $(f, g)$ and the projections from $\cod(f)\times \cod(g)$ are in $\mathscr{A}$, 
		then $\mathscr{P}^{\mathscr{A}}$ is existential, elementary or an hyperdoctrine if $\mathscr{P}$ is. 
	\end{lemma}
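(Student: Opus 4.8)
The plan is to realize $\mathscr{P}^{\mathscr{A}}$ as a mere restriction of $\mathscr{P}$ and then check that all the extra structure is transported by purely formal reasoning.

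First I would describe $\catname{C}_{\mathscr{A}}$ explicitly: its objects are the domains and codomains of arrows in $\mathscr{A}$, and its arrows are exactly the members of $\mathscr{A}$. The first two hypotheses on $\mathscr{A}$ — closure under composition, and $1_{\dom(f)},1_{\cod(f)}\in\mathscr{A}$ for each $f\in\mathscr{A}$ — say precisely that this collection is closed under composition and contains every identity, so $\catname{C}_{\mathscr{A}}$ is a subcategory of $\catname{C}$ and the inclusion $\iota\colon\catname{C}_{\mathscr{A}}\hookrightarrow\catname{C}$ is a functor. Then I set $\mathscr{P}^{\mathscr{A}}:=\mathscr{P}\circ\iota^{op}\colon\catname{C}_{\mathscr{A}}^{op}\to\catname{InfSL}$, which is a functor into $\catname{InfSL}$ and hence a primary doctrine (no products on the base are needed for that, cf.\ the Remark after the definition of doctrine). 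Restricting each $\mathfrak{c}_C$ to the objects of $\catname{C}_{\mathscr{A}}$ keeps it monotone and inflationary, while continuity of an arrow of $\catname{C}_{\mathscr{A}}$ is exactly the third hypothesis on $\mathscr{A}$; so $\mathfrak{c}$ is a closure operator on $\mathscr{P}^{\mathscr{A}}$. This settles the first assertion.

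For the ``moreover'' part the key point is that the extra hypotheses force $\catname{C}_{\mathscr{A}}$ to have finite products, computed exactly as in $\catname{C}$. Indeed, for objects $C,D$ of $\catname{C}_{\mathscr{A}}$, applying the last hypothesis to $f=1_C$ and $g=1_D$ puts the projections $\pi_C\colon C\times D\to C$ and $\pi_D\colon C\times D\to D$ in $\mathscr{A}$, so $C\times D$ again lies in $\catname{C}_{\mathscr{A}}$; and since pairings of $\mathscr{A}$-arrows again lie in $\mathscr{A}$, the $\catname{C}$-product cone $(C\times D,\pi_C,\pi_D)$ is still a product cone in $\catname{C}_{\mathscr{A}}$ (uniqueness of the mediating arrow is inherited from $\catname{C}$). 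It follows that every arrow obtained from identities, projections and pairings lies in $\mathscr{A}$, in particular the diagonals $\Delta_C=\langle 1_C,1_C\rangle$, the combined maps $1_D\times f$, and the triples of projections and the map $1_D\times\Delta_C$ appearing in the Beck--Chevalley, Frobenius reciprocity and fibered-equality axioms.

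Once this is in place the transfer is automatic. For each projection $\pi_C$ of $\catname{C}_{\mathscr{A}}$ the reindexing $\mathscr{P}^{\mathscr{A}}_{\pi_C}$ coincides with $\mathscr{P}_{\pi_C}$ between fibres that coincide with those of $\mathscr{P}$, so the left adjoint $\exists_{\pi_C}$ furnished by $\mathscr{P}$ is already the required adjoint for $\mathscr{P}^{\mathscr{A}}$; the same holds for $\forall_{\pi_C}$ in the hyperdoctrine case, and the fibered equalities $\delta_C$ of $\mathscr{P}$ can be reused verbatim. Each Beck--Chevalley, Frobenius and fibered-equality condition is an equation, respectively an adjunction, among composites of reindexing maps on which $\mathscr{P}^{\mathscr{A}}$ agrees with $\mathscr{P}$ by construction; since all these maps were just shown to lie in $\mathscr{A}$, the conditions make sense in $\mathscr{P}^{\mathscr{A}}$ and reduce to the ones already known for $\mathscr{P}$. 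Finally, if $\mathscr{P}$ factors through $\catname{HA}$ (or $\catname{BA}$), so does its restriction $\mathscr{P}^{\mathscr{A}}$, which gives the hyperdoctrine case. I expect the only mildly delicate step to be the bookkeeping of the previous paragraph — verifying that \emph{every} structural arrow demanded by the definitions is a composite of projections and pairings, hence in $\mathscr{A}$ — together with the handling of the terminal object, which goes through as for the binary products (or one simply adds the maps $!_C$ to the hypotheses on $\mathscr{A}$).
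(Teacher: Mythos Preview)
Your proposal is correct and takes essentially the same approach as the paper: the paper's proof is the one-liner ``the condition on $\mathscr{A}$ guarantees that the inclusion functor $\catname{C}_{\mathscr{A}}\hookrightarrow\catname{C}$ preserves products, so we can apply Proposition~\ref{trans}'', which is exactly the mechanism you unfold by hand. Your observation about the terminal object is a genuine (if minor) gap in the statement that the paper's proof glosses over.
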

	\begin{proof}
		This is almost tautological since the condition on $\mathscr{A}$ guarantee that the inclusion functor $\catname{C}_{\mathscr{A}}$ preserves limits and we can use \cref{trans}.\qedhere
	\end{proof}
	
	\subsection{Coalgebraic examples}
	\begin{definition}[\cite{jacobs2017introduction,kupke2011coalgebraic}]
		Let $\catname{C}$ be a category with finite products and $\mathscr{F}:\catname{C}\rightarrow \catname{C}$ an endofunctor. 
		The category $\coalg{F}$ of \emph{coalgebras for $\mathscr{F}$} has
		\begin{itemize}
			\item arrows $\gamma_C:C\rightarrow \mathscr{F}(C)$ as objects;
			\item arrows $f:C\rightarrow D$ such that $ \gamma_D\circ f = \mathscr{F}(f)\circ\gamma_C$
			as morphisms $f:\gamma_C\rightarrow \gamma_D$.
		\end{itemize}
	\end{definition}
	Notice that in general $\coalg{F}$ is not complete and products in it can be very different from products in $\catname{C}$ \cite{gumm2001products}, so it does not make much sense to look for an existential doctrine on it. However, for $\catname{Set}$-based coalgebras we get a primary doctrine $\mathscr{P}^c:\coalg{F}^{op}\rightarrow \catname{InfSL}$ composing the contravariant power object $\mathscr{P}:\catname{Set}^{op}\rightarrow \catname{InfSL}$ with the opposite of the obvious forgetful functor $\coalg{F}\rightarrow \catname{Set}$.
	
	\begin{definition} Let $\mathscr{F}:\catname{C}\rightarrow \catname{C}$ be a functor and $\mathscr{P}$ a primary doctrine on $\catname{C}$.
		A \emph{predicate lifting} is a natural transformation $\Box: \mathscr{U}\circ\mathscr{P} \rightarrow \mathscr{U}\circ \mathscr{P}\circ \mathscr{F}^{op}$ where $\mathscr{U}$ is the forgetful functor $\catname{InfSL}\rightarrow \catname{Poset}$.
	\end{definition}
	
	Let $\Box$ be a predicate lifting.
	We define two closure operators on $\mathscr{P}^c$.
	\begin{enumerate}
		\item For any coalgebra $\gamma_X:X\rightarrow \mathscr{F}(X)$, 
		notice that $\mathscr{P}^c(\gamma_X) = \mathscr{P}(X)$; hence we can define
		\vspace{-1.5ex}
		\begin{align*}
		\diam{X}:\mathscr{P}(X)&\rightarrow \mathscr{P}(X)\\
		\alpha &\mapsto \alpha \vee \mathscr{P}_{\gamma_X}( \Box_X (\alpha))
		\end{align*}
		\item Suppose that $\mathscr{P}$ admits arbitrary meets; for $\gamma_X:X\rightarrow \mathscr{F}(X)$ and $\alpha\in \mathscr{P}(X)$ we define
		\begin{align*}
		\mathcal{N}_{\gamma_X}(\alpha) & :=\{\beta \in \mathscr{P}(X)\mid \alpha \leq 	\mathscr{P}_{\gamma_X}(\Box_X(\beta))\} \\
		\mathsf{s}_{\gamma_X}(\alpha) & := \bigwedge_{\beta\in\mathcal{N}_{\gamma_X}(\alpha)}\beta \\
		\scat{X}&:\mathscr{P}(X) \rightarrow \mathscr{P}(X)\\
		          & \hspace{9mm} \alpha  \mapsto \alpha\vee \mathsf{s}_{\gamma_X}(\alpha)
		\end{align*}
	\end{enumerate}	
		\begin{theoremEnd}{lem} Let $\mathscr{F}:\catname{C}\rightarrow \catname{C}$ be a functor and $\Box$ a predicate lifting, then:
			\begin{enumerate}
				\item $\{\diam{X}\}_{\gamma_X \in \catname{Ob}(\coalg{F})}$ defines a closure operator $\diamo$ on $\mathscr{P}^c$.
				\item  $\mathsf{s}_{\gamma_X}(\alpha)$ is the minimum of $\mathcal{N}_{\gamma_X}(\alpha)$ whenever $\mathscr{P}$ has arbitrary meets and, for any coalgebra $\gamma_X:X\rightarrow \mathscr{F}(X)$, $\mathscr{P}_{\gamma_X}$ and $\Box_X$ commute with them;
				\item in the hypothesis above if $\mathscr{P}_f$ commutes with arbitrary meets for all arrows $f$ then  $\{\scat{X}\}_{\gamma_X \in \catname{Ob}(\coalg{F})}$ defines a closure operators $\scato$ on $\mathscr{P}^c$.				
			\end{enumerate}
		\end{theoremEnd}
	\begin{proof}
		\begin{proofEnd}
			\begin{enumerate}
				\item Clearly $\alpha\leq 
				\diam{X}(\alpha)$; if $\alpha \leq \beta$ we have that
				\begin{equation*}
				\mathscr{P}_{\gamma_{X}}(\Box_X(\alpha))\leq \mathscr{P}_{\gamma_{X}}(\Box_X(\beta)) 
				\end{equation*}  
				from which monotonicity follows; for $f$ an arrow between $\gamma_X:X\rightarrow \mathscr{F}(X)$ and $\gamma_Y:Y\rightarrow \mathscr{F}(Y)$, we have a commutative diagram
				\begin{center}
					\begin{tikzpicture}
					\node(A)at (0,0) {$X$};
					\node(B)at (3,0) {$Y$};
					\node(C)at (0,-1.5) {$\mathscr{F}(X)$};
					\node(D)at (3,-1.5) {$\mathscr{F}(Y)$};
					\draw[->](A)--(C) node[pos=0.5, left]{$\gamma_X$};
					\draw[->](B)--(D) node[pos=0.5, right]{$\gamma_Y$};
					\draw[->](A)--(B) node[pos=0.5, above]{$f$};
					\draw[->](C)--(D) node[pos=0.5, below]{$\mathscr{F}(f)$};
					\end{tikzpicture}
				\end{center} and computing we get the thesis:
				\begin{align*}
				\diam{X}(\mathscr{P}_f(\alpha))&=\mathscr{P}_f(\alpha)\vee \mathscr{P}_{\gamma_X}(\Box_X(\mathscr{P}_f(\alpha)))\\&=\mathscr{P}_f(\alpha)\vee \mathscr{P}_{\gamma_X}(\mathscr{P}_{\mathscr{F}(f)}(\Box_Y(\alpha)))\\&=\mathscr{P}_f(\alpha)\vee \mathscr{P}_f(\mathscr{P}_{\gamma_Y}(\Box_Y(\alpha)))\\&=
				\mathscr{P}_f(\alpha \vee \mathscr{P}_{\gamma_Y}(\Box_Y(\alpha)))\\&=\mathscr{P}_f(\diam{Y}(\alpha))
				\end{align*}

				\item By hypothesis:
				\begin{align*}
				\alpha &\leq \bigwedge_{\beta \in \mathcal{N}_{\gamma_X}(\alpha)}\mathscr{P}_{\gamma_X}(\Box_X(\beta))\\&=
				\mathscr{P}_{\gamma_X}(\bigwedge_{\beta \in \mathcal{N}_{\gamma_X}(\alpha)}\Box_X(\beta))\\&=			\mathscr{P}_{\gamma_X}(\Box_X(\bigwedge_{\beta \in \mathcal{N}_{\gamma_X}(\alpha)}\beta))\\&=\mathscr{P}_{\gamma_X}(\Box_X(\mathsf{s}_{\gamma_X}(\alpha)))
				\end{align*}
				
				\item The inequality $\alpha\leq 
				\scat{X}(\alpha)$ follows at once, if $\alpha \leq \beta$ we have $\mathscr{P}_{\gamma_{X}}(\Box_X(\alpha))$ as in the first point but this implies that 
				$\mathcal{N}_{\gamma_X}(\beta)\subset \mathcal{N}_{\gamma_X}(\alpha)$. 
				Hence, $\bigwedge_{\theta \in \mathcal{N}_{\gamma_X}(\alpha)}\theta\leq \bigwedge_{\theta \in \mathcal{N}_{\gamma_X}(\beta)}\theta$, 
				from which we deduce the monotonicity of $\scat{X}$. Let now $f:X\rightarrow Y$ be an arrow such that		\begin{center}
					\begin{tikzpicture}
					\node(A)at (0,0) {$X$};
					\node(B)at (3,0) {$Y$};
					\node(C)at (0,-1.5) {$\mathscr{F}(X)$};
					\node(D)at (3,-1.5) {$\mathscr{F}(Y)$};
					\draw[->](A)--(C) node[pos=0.5, left]{$\gamma_X$};
					\draw[->](B)--(D) node[pos=0.5, right]{$\gamma_Y$};
					\draw[->](A)--(B) node[pos=0.5, above]{$f$};
					\draw[->](C)--(D) node[pos=0.5, below]{$\mathscr{F}(f)$};
					\end{tikzpicture}
				\end{center}
				commutes, and notice that for all $\theta\in \mathcal{N}_Y(\alpha)$ then
				\begin{align*}
				\mathscr{P}_f(\alpha)&\leq \mathscr{P}_f(\mathscr{P}_{\gamma_Y}(\Box_Y(\theta))) 
				=\mathscr{P}_{\gamma_X}(\mathscr{P}_{\mathscr{F}(f)}(\Box_Y(\theta))) =\mathscr{P}_{\gamma_X}(\Box_X(\mathscr{P}_f(\theta)))
				\end{align*}
				hence $\mathscr{P}_f(\theta)\in \mathcal{N}_X(\mathscr{P}_f(\alpha))$ and thus
				\begin{align*}
				\scat{X}(\mathscr{P}_f(\alpha))& = \mathscr{P}_f(\alpha)\vee \mathsf{s}_{\gamma_X}(\mathscr{P}_f(\alpha))=\mathscr{P}_f(\alpha)\vee \bigwedge_{\beta \in \mathcal{N}_X(\mathscr{P}_f(\alpha))}\beta\\&\leq \mathscr{P}_f(\alpha)\vee \bigwedge_{\beta \in \mathcal{N}_X(\mathscr{P}_f(\alpha))}\beta
				\\&\leq \mathscr{P}_f(\alpha)\vee\bigwedge_{\beta \in \mathcal{N}_Y(\alpha)}\mathscr{P}_f(\beta)\\&\leq \mathscr{P}_f(\alpha)\vee \mathscr{P}_f(\bigwedge_{\beta \in \mathcal{N}_Y(\alpha)}\beta)\\&=\mathscr{P}_f(\alpha\vee \mathsf{s}_{\gamma_Y}(\alpha)) =\mathscr{P}_f(\scat{Y}(\alpha))
				\end{align*}
				and we are done.
				\qedhere
			\end{enumerate}
		\end{proofEnd} \qedhere
	\end{proof}
	
	The previous result provides us with many examples with practical applications.
	\begin{example}[Kripke frames]
		Let $\mathcal{P}:\catname{Set}\rightarrow \catname{Set}$ be the covariant powerset functor, and 
		$\mathscr{P}:\catname{Set}^{op}\rightarrow \catname{InfSL}$ be the controvariant one, seen as primary doctrine.
		We can define a predicate lifting $\Box$ taking as components:
		\begin{align*}
		\Box_X: \mathscr{P}(X) & \rightarrow \mathscr{P}(\mathcal{P}(X))\\
		A & \mapsto \small{\downarrow} A
		\end{align*} 
		where $\small{\downarrow} A$ denotes the set of downward-closed subsets of $A$.
		In this case for any coalgebra $\gamma_X:X\rightarrow \mathcal{P}(X)$ we have
		\begin{align*}
		x\in \gamma_X^{-1}(\Box_X(A)) & \iff \gamma_X(x)\subset A\\
		B\in \mathcal{N}_{\gamma_X}(A) & \iff \gamma_X(a)\subset B \text{ for any } a\in A 
		\end{align*}
		so 
		$\mathsf{s}_{\gamma_X}(A)=\bigcup_{a\in A} \gamma_X(a)$ and $\scat{X}(A)=A\cup 	\bigcup_{a\in A}\gamma_X(a)$.
		
		By this description it is clear that $\scato$ is grounded and fully additive.  
		$\diamo$ is grounded too but it is not even finitely additive: take $4:=\{0,1,2,3\}$ with stuctural map $\gamma_4$ given by 
		\begin{gather*}
		0\mapsto \{3\} \quad
		1\mapsto \{2,3\} \quad
		2\mapsto \{2\} \quad
		3\mapsto \{3\}
		\end{gather*}
		Now take $A:=\{2,3\}$, it is immediate to see that $\diam{4}(A)=4$, on the other hand $\diam{4}(\{2\})=\{2\}$ and $\diam{4}(\{3\})=\{0,3\}$.
		
		\begin{remark}
			In this case, $\diamo$ and $\scato$ meanings (and notation) become clearer: if we think to the value of $\gamma_X(x)$ as the family of points accessible from $x\in X$ then $\diam{X}$ adds to a subset $A$ the set of its \emph{predecessors}, i.e. points from which some  $a\in A$ is accessible, while $\scat{X}$ adds the set of \emph{successors}, i.e. points which are accessible from some point of $A$.
		\end{remark} 
	\end{example}
	
	\begin{example}[Probabilistic frames \cite{avery2016codensity,bacci2015structural,giry1982categorical}]
		Let $\catname{Meas}$ be the category of measurable space and measurable functions; then we can take as primary doctrine $\mathscr{P}$ the functor 
		\begin{equation*}
		\functor[l]{(X,\Omega_X)}{f}{(Y,\Omega_Y)}
		\functormapsto
		\functor[r]{\Omega_X}{f^{-1}}{\Omega_Y}	
		\end{equation*}
		As endofunctor we can take the \emph{Giry monad} $\mathscr{G}:\catname{Meas}\to\catname{Meas}$:
		\begin{itemize}
			\item given an object $(X,\Omega_X)$, $\mathscr{G}(X,\Omega_X)$ is the set
			\begin{equation*}
			\{\mu:\Omega_X\rightarrow [0,1]\mid \mu \text{ is a probabilty measure on } \Omega_X \}
			\end{equation*}
			equipped with the smallest $\sigma$-algebra for which all the \emph{evaluation functions}
			\begin{align*}
			\mathsf{ev}_A:\mathscr{G}(X,\Omega_X)&\rightarrow [0,1]\\
			\mu &\mapsto \mu(A)
			\end{align*} 
			with $A\in \Omega_X$, are Borel-measureable.
			
			\item for a measurable $f:(X,\Omega_X)\rightarrow (Y,\Omega_Y)$,
			\begin{align*}
			\mathscr{G}(f):\mathscr{G}(X,\Omega_X)&\rightarrow \mathscr{G}(Y,\Omega_Y)\\
			\mu& \mapsto \mu \circ f^{-1}
			\end{align*}
		\end{itemize}
		(For the measurability of $\mathscr{G}(f)$ notice that given a Borel subset $L$ of $[0,1]$ and $A\in \Omega_Y$ we have that 
		$\mu \in \mathscr{G}(f)^{-1}(\mathsf{ev}_A(L))\iff \mu \in \mathsf{ev}_{f^{-1}(A)}(L)$)
		
		For a coalgebra $\gamma_{(X,\Omega_X)}$ and $p\in [0,1]$ we define
		\begin{align*}
		\Box_{(X,\Omega_X), p}:\Omega_X&\rightarrow \mathcal{P}(\mathscr{G}(X))\\
		A & \mapsto \{\mu \in \mathscr{G}(X,\Omega_X) \mid\mu(A)\geq p \}
		\end{align*}
		notice that the set on the right is $\mathsf{ev}_A^{-1}([p,1])$ and so $\Box_{(X,\Omega_X), p}$ is well defined. In this situation we have
		\begin{equation*}
		\diam{(X,\Omega_X)}(A):=A\cup \{x\in X\mid p\leq \gamma_{(X,\Omega_X)}(x)(A)\}
		\end{equation*}
	\end{example}
	\begin{remark}
		If we think of  a coalgebra $\gamma_{(X,\Omega_X)}$ as describing how likely is a transition from a state to the various $A\in \Omega_X$ then, given a $p\in [0,1]$, $\diam{(X,\Omega_X)}(A)$ is the set of points which access  $A$  with probability at least  $p$.
	\end{remark}

	\section{Logics for Closure Operators} \label{sec:slcs}
	In this section, we provide a sound and complete logic for closure hyperdoctrines. 
	This logic is a (first order) version of Spatial Logic for Closure Spaces (SLCS) \cite{ciancia2016spatial}, although with a slightly different presentation.
	
	\subsection{Syntax and derivation rules}
	
	We briefly recall the categorical presentation of signatures, as in \cite{jacobs1999categorical}.
	\begin{definition}\label{def:signature}
		A \emph{signature} $\Sigma$ is a triple $(\abs{\Sigma}, \Gamma, \Pi )$ where
		\begin{itemize}
			\item $\abs{\Sigma}$ is a set, called the set of \emph{basic types};
			\item $\Gamma$ is a functor\footnote{$\abs{\Sigma}$ and $\abs{\Sigma}^\star$ are viewed here as discrete categories.} $\abs{\Sigma}^\star \times \abs{\Sigma}\rightarrow \catname{Sets}$. We will call \emph{function symbol} an element $f$ of $\Gamma((\sigma_1,\dots,\sigma_n), \sigma_{n+1})$ and we will write $f:\sigma,\dots,\sigma_n\rightarrow, \sigma_{n+1}$;
			\item $\Pi$ is a functor $\abs{\Sigma}^\star\rightarrow \catname{Set}$,  we will call \emph{predicate symbol} an element $P$ of $\Pi(\sigma_1,\dots,\sigma_n)$ and we will write $P:\sigma_1,\dots,\sigma_n$.
		\end{itemize}
		A morphism of signatures $\phi:\Sigma_1\rightarrow \Sigma_2$ is a triple $(\phi_1,\phi_2,\phi_3)$ such that
		\begin{itemize}
			\item $\phi_1$ is a function $\abs{\Sigma_1}\rightarrow \abs{\Sigma_2}$;
			\item $\phi_2$ is a natural transformation $\Gamma_1\rightarrow \Gamma_2\circ (\phi_1^{\star}\times \phi_1)$;
			\item $\phi_3$ is a natural transformation $\Pi_1\rightarrow \Pi_2\circ \phi_1^{\star}$.
		\end{itemize}
		For any $\sigma\in \abs{\Sigma}$ we fix an countably infinite set $X_\sigma$ of \emph{variables}; definition of terms is straightforward (\cite{jacobs1999categorical}).
	\end{definition}

	\begin{definition}
		Given a signature $\Sigma$, its \emph{classifying category} $\class{\Sigma}$ is such that
		\begin{itemize}
			\item objects are contexts;
			\item Given $\Gamma:=[x_i:\sigma_i]_{i=1}^n$ and $\Delta=[y_i:\tau_i]_{i=1}^m$ an arrow $\Gamma\rightarrow \Delta$ is a $m$-uple of terms $(T_1,...,T_m)$ such that $\Gamma \vdash T_i:\tau_i$ for any $i$;
			\item composition is given by substitution.
		\end{itemize}
	\end{definition}
		\begin{theoremEnd}{prop}
			$\class{\Sigma}$ is a category with finite products for any signature $\Sigma$.
		\end{theoremEnd}
	\begin{proof}
		\begin{proofEnd}Associativity of composition and the fact that $(x_1,...,x_n)$ is the identity for $[x_i:\sigma_i]_{i=1}^n$ follows from a straightforward computation. The empty context is clearly terminal while, given two contexts $\Gamma:=[x_i:\sigma_i]_{i=1}^n$ and $\Delta=[y_i:\tau_i]_{i=1}^m$ we can take their concatenation as a product $\Gamma \times \Delta$, the universal property follows immediately.\qedhere 
		\end{proofEnd} \qedhere
	\end{proof}

	Now we can introduce the rules for context and closure operators of the Spatial Logic for Closure Spaces, over any given signature. 
	
	As usual, we denote by $\Gamma \vdash t : \tau$ the judgment ``$t$ has type $\tau$ in context $\Gamma$'', and by $\Gamma \vdash \phi : \propo$ the judgment ``$\phi$ is a well-formed formula in context $\Gamma$''.
	\begin{definition}\label{def:slcssynt}
		The rules for contexts and well-formed formulae for the closure operators for a signature $\Sigma$ are the usual ones for a first order signature  (see \cite{jacobs1999categorical}) plus:
		\begin{equation*}
		\inferrule*[right=$\mathcal{C}$-F]{\Gamma \vdash \phi:\propo}{\Gamma \vdash \mathcal{C}(\phi):\propo}\qquad 
		\inferrule*[right=$\mathcal{U}$-F]{\Gamma \vdash \phi:\propo \\ \Gamma\vdash \psi:\propo}{\Gamma  \vdash \phi \mathcal{U}\psi:\propo}
		\end{equation*}
		For any context $\Gamma$ we define $\formu{\Sigma}(\Gamma)$ to be the set of formulae $\phi$ such that $\Gamma \vdash \phi:\propo$. 
	\end{definition}

	Then, we can introduce the rules for the logical judgments of the form $\Gamma \mid \Phi \vdash \phi$, where $\Phi$ is a finite set of propositions well-formed in $\Gamma$.
	\begin{definition}\label{def:slcsrules}
		We define four rules for the well-formed formulae previously defined:
		\begin{itemize}
			\item $\mathcal{C}$'s rules:
			\begin{equation*}	
			\inferrule*[right=Cl-$1$]{\Gamma \mid \Phi\vdash \psi}{\Gamma \mid \Phi \vdash \mathcal{C}(\psi) }\quad
			\inferrule*[right=Cl-$2$]{\Gamma \mid \Phi, \psi \vdash \phi }{\Gamma  \mid \Phi, \mathcal{C}(\psi ) \vdash \mathcal{C}(\phi)}
			\end{equation*}
			\item $\mathcal{U}$'s rules \vspace{-2ex}
			\begin{gather*}
			\inferrule*[right=$\mathcal{U}$-I]{\Gamma \mid \Phi, \varphi \vdash \phi\\\Gamma \mid \Phi, \mathcal{C}(\varphi), \neg \phi \vdash \psi}{\Gamma \mid \Phi, \varphi \vdash \phi\mathcal{U}\psi}
			\\	
			\inferrule*[right=$\mathcal{U}$-E]
			{\text{for all } \phi \text{ such that }  \Gamma \vdash \varphi:\propo:\quad 
				\Gamma \mid \Phi, \varphi \Rightarrow \phi, (\mathcal{C}(\varphi)\wedge \neg\varphi) \Rightarrow \psi,\varphi \vdash \theta}{\Gamma \mid \Phi, \phi\mathcal{U}\psi  \vdash \theta }
			\end{gather*}
		\end{itemize}
		
		The \emph{Propositional Logic for Closure Operators on $\Sigma$} (PLCO) is given by the usual propositional rules (i.e., without the quantifiers) for the typed (intuitionistic) sequent calculus (see e.g.~\cite{jacobs1999categorical}), extended with the four rules above.
		
	The \emph{First Order Logic for Closure Operators on $\Sigma$} (FOLCO) is given by the four rules above added to the usual rules for first order logic. Similarily with equality.	
		
		Derivability of sequents is defined in the usual way \cite{pitts1995categorical}.
	\end{definition}
	\begin{remark}
		PLCO corresponds to the Spatial Logic for Closure Spaces considered in 	\cite{ciancia2014specifying}.
	\end{remark}
	\begin{remark}
		Notice that $\mathcal{U}$-E is an \emph{infinitary} rule saying that a formula $\theta$ can be derived from $\phi\mathcal{U}\psi$ if it can be derived from \emph{all} the formulae $\varphi$ satisfying precise conditions.
		Thus, this rule shows the second-order nature of the $\mathcal{U}$ operator.
	\end{remark}

	\subsection{Categorical semantics of closure logics}
	In this section we provide a sound and complete categorical semantics of the logics for the closure operators defined above.
	
	\begin{definition}
		Two formulae $\phi,\psi \in \formu{\Sigma}(\Gamma)$ are \emph{provably equivalent} if $\Gamma \mid \psi \vdash \phi$ and $\Gamma \mid \phi \vdash \psi$.
		We will denote the quotient of  $\formu{\Sigma}(\Gamma)$ by this relation with $\lind{\Sigma}(\Gamma)$, $[\phi]$ will denote the class of $\phi$ in it.
	\end{definition}
	
	\begin{theoremEnd}{prop}
		For any signature $\Sigma$ the following are true:
		\begin{enumerate}
			\item $\lind{\Sigma}(\Gamma)$ equipped with the order $[\phi]\leq [\psi]$ if and only if $\Gamma \mid \phi\vdash \psi$ is derivable is:
			\begin{itemize}
				\item a meet semilattice in the case we are considering regular logic;
				\item  a Heyting algebra if we are considering propositional or first order logic;
			\end{itemize} 
			\item $[\phi \mathcal{U}\psi]$ is
			the supremum of the set 
			\begin{equation*}
			\mathsf{u}_{\Gamma}(\phi, \psi):=\{[\varphi]\in \lind{\Sigma}(\Gamma) \ \mathrm{such \ that}\  \Gamma \mid \varphi \vdash \phi, \Gamma \mid \mathcal{C}(\varphi), \neg \varphi \vdash \psi\}
			\end{equation*}
			\item 
			there exists a (elementary) closure or existential doctrine or a (elementary) hyperdoctrine  $(\lind{\Sigma}, \mathfrak{c}_\Sigma)$ on $\class{\Sigma}$ sending $\Gamma$ to $\lind{\Sigma}(\Gamma)$.
		\end{enumerate}
	\end{theoremEnd}
	\begin{proof}
		\begin{enumerate}
			\item 	The logical connectives induce a Heyting algebra or a meet semilattice structure on $\lind{\Sigma}(\Gamma)$ which has precisely $\leq$ as associated order.
			\item From $\mathcal{U}$-I follows that $[\phi \mathcal{U}\psi]$ is an upper bound for $	\mathsf{u}_{\Gamma}$ while $\mathcal{U}$-E	implies that $[\phi \mathcal{U}\psi]$ is the least of them.
			\item For any morphism $(T_1,...,T_n):\Gamma \rightarrow \Delta$ substitution of terms gives us a morphism of Heyting algebras/meet semilattices $\lind{\Sigma}(\Delta)\rightarrow \lind{\Sigma}(\Gamma)$; quantifiers gives us the existential doctrine/hyperdoctrine structure (cfr. \cite{pitts1995categorical} for the details). In any case have to define a preclosure operator $\mathfrak{c}_{\Sigma, \Gamma}$ on each $\lind{\Sigma}(\Gamma)$ but this is easily done defining
			\begin{align*}
			\mathfrak{c}_{\Sigma, \Gamma}: \lind{\Sigma}(\Gamma)&\rightarrow \lind{\Sigma}(\Gamma)\\
			[\phi] & \mapsto [\mathcal{C}(\phi)]
			\end{align*}
			The $\mathcal{C}$'s rules assure us that $\mathfrak{c}_\Sigma$ is well defined, inflationary and monotone, while an easy induction shows that 
			\begin{align*}
			\lind{\Sigma}_{(T_1,...,T_n)}([\mathcal{C}(\phi)])&=	\mathfrak{c}_{\Sigma, \Gamma}(\lind{\Sigma}_{(T_1,...,T_n)}(\phi))
			\end{align*}
			for any $(T_1,...,T_n):\Gamma\rightarrow \Delta$.
			We can add fibered equalities, given $\Gamma:=[x_i:\sigma_i]$ putting:
			\begin{equation*}
			\delta_{\Gamma \times \Gamma}:=\bigwedge_{i=1}^n[ x_i=_{\sigma_i}y_i]
			\end{equation*}
			where $\{y_i\}_{i=1}^n$ is a set of fresh variables such that $y_i:\sigma_i$ for any $i$.
			\qedhere
		\end{enumerate} 
	\end{proof}
	
	Let us prove the soundness and completeness of the categorical semantics wrt.~the various logical fragments.
	\begin{definition}\label{def:slcsmodel}
		Let $(\mathscr{P}, \mathfrak{c}):\catname{C}^{op}\rightarrow \catname{InfSL}$ be an (elementary) closure doctrine (existential doctrine/hyperdoctrine) then a morphism of $\catname{cPD}$ ($\catname{cED}$, $\catname{cEED}$, $\catname{cEHD}$, $\catname{cHD}$) $(\mathscr{M}, \mu): (\lind{\Sigma}, \mathcal{C})\rightarrow (\mathscr{P}\mathfrak{c})$ is a \emph{model of the propositional (first-order) logic (with equality) of closure operators in $(\mathscr{P}, \mathfrak{c})$} if it is open. 

		A sequent $\Gamma \mid \Phi\vdash \psi$ is \emph{satisfied by $(\mathscr{M}, \mu)$} if 
		\begin{equation*}
			\bigwedge_{\phi \in \Phi}\mu_{\Gamma}(\phi) \leq \mu_{\Gamma}(\psi)
		\end{equation*}
	\end{definition}
	
	\begin{theorem}\label{th:completeness}
		A sequent $\Gamma \mid \Phi\vdash \psi$ is satisfied by the \emph{generic model} $(1_{\class{\Sigma}}, 1_{\lind{\Sigma}})$ 
		if and only if it is derivable.
	\end{theorem}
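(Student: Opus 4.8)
The plan is to unwind \cref{def:slcsmodel} for the generic model and reduce it to the definition of the order on the Lindenbaum algebras, so that both implications become essentially bookkeeping. Since the second component of $(1_{\class{\Sigma}}, 1_{\lind{\Sigma}})$ is the identity, the interpretation of a formula $\phi$ in context $\Gamma$ is just its class $[\phi]\in\lind{\Sigma}(\Gamma)$; hence $\Gamma\mid\Phi\vdash\psi$ is satisfied by the generic model precisely when $\bigwedge_{\phi\in\Phi}[\phi]\leq[\psi]$ holds in $\lind{\Sigma}(\Gamma)$ (with the empty meet read as $[\top]$).

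First I would record that the generic model really is a model: by the previous proposition $(\lind{\Sigma},\mathfrak{c}_\Sigma)$ is a closure (elementary/existential/hyper)doctrine on $\class{\Sigma}$, the identity functor preserves finite products, the identity natural transformation preserves every piece of structure ($\wedge$, $\Rightarrow$, the adjoints $\exists$ and $\forall$, the fibred equalities and $\mathfrak{c}_\Sigma$), and continuity of $1_{\lind{\Sigma}}$ holds with equality, so the morphism is open. Nothing here is hard; the work behind this step is the previous proposition, in particular the facts that $\mathfrak{c}_{\Sigma,\Gamma}([\phi])=[\mathcal{C}(\phi)]$ is a well-defined closure operator commuting with substitution and that $[\phi\mathcal{U}\psi]$ is the supremum of $\mathsf{u}_\Gamma(\phi,\psi)$ --- the latter being where the infinitary, second-order rule $\mathcal{U}$-E is actually used.

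For the equivalence itself I would invoke two facts from the previous proposition: binary meets in $\lind{\Sigma}(\Gamma)$ are computed by $\wedge$, and $[\chi]\leq[\psi]$ holds iff $\Gamma\mid\chi\vdash\psi$ is derivable. Since $\Phi$ is finite, $\bigwedge_{\phi\in\Phi}[\phi]=[\bigwedge_{\phi\in\Phi}\phi]$, and so the generic model satisfies the sequent iff $\Gamma\mid\bigwedge_{\phi\in\Phi}\phi\vdash\psi$ is derivable. The proof then closes with the standard observation that $\Gamma\mid\bigwedge_{\phi\in\Phi}\phi\vdash\psi$ and $\Gamma\mid\Phi\vdash\psi$ are interderivable: from $\Phi$ one derives $\bigwedge_{\phi\in\Phi}\phi$ by repeated conjunction introduction and then cuts, and conversely from $\bigwedge_{\phi\in\Phi}\phi$ one recovers each $\phi\in\Phi$ by repeated conjunction elimination and then cuts. (The ``derivable $\Rightarrow$ satisfied'' half could alternatively be proved by induction on the derivation: the propositional and first-order rules are sound because $\lind{\Sigma}(\Gamma)$ carries exactly the Heyting-algebra and reindexing/quantifier structure coming from the connectives, the $\mathcal{C}$-rules because $\mathfrak{c}_{\Sigma,\Gamma}$ is inflationary and the order is by construction closed under Cl-$1$ and Cl-$2$, and the $\mathcal{U}$-rules because $[\phi\mathcal{U}\psi]=\sup\mathsf{u}_\Gamma(\phi,\psi)$.)

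I do not expect a genuine obstacle in this statement on its own: all of the difficulty has been front-loaded into the previous proposition, and the only points that need a little care here are the treatment of the empty hypothesis context and making explicit which structural and conjunction rules are invoked when passing between a finite hypothesis set and its conjunction.
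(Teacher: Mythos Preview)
Your proposal is correct and follows essentially the same route as the paper: unfold satisfaction in the generic model to the inequality $\bigwedge_{\phi\in\Phi}[\phi]\leq[\psi]$ in $\lind{\Sigma}(\Gamma)$, identify this with derivability of $\Gamma\mid\bigwedge_{\phi\in\Phi}\phi\vdash\psi$, and then pass to $\Gamma\mid\Phi\vdash\psi$ via the conjunction rules. Your extra paragraph verifying that the identity pair really is an open model and your alternative soundness-by-induction sketch are additional detail not present in the paper's proof, but they do no harm.
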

	\begin{proof}
		By definition, $\Gamma \mid \Phi\vdash \psi$ is satisfied if and only if 
		\begin{equation*}
		\bigwedge_{\phi \in \Phi}[\phi]\leq [\psi]
		\end{equation*}
		in $\lind{\Sigma}(\Gamma)$ but this is equivalent to the derivability of
		\begin{equation*}
		\Gamma \mid \bigwedge_{\phi \in \Phi}\phi \vdash \psi
		\end{equation*}
		whose derivability is equivalent (applying the conjunction rules a finite number of times) to
		$		\Gamma \mid \Phi \vdash \psi $,
		and we are done.\qedhere
	\end{proof}
	
	\begin{corollary}
		The above defined categorical semantics for PLCO or FOLCO (with or without equality) is sound and complete.
	\end{corollary}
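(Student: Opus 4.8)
The plan is to obtain both directions directly from \cref{th:completeness}, which already identifies derivability with satisfaction in the generic model $(1_{\class{\Sigma}},1_{\lind{\Sigma}})$; everything else reduces to transporting the order of each fibre $\lind{\Sigma}(\Gamma)$ along an arbitrary model.

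\emph{Completeness.} Suppose $\Gamma\mid\Phi\vdash\psi$ is satisfied by every model of the chosen fragment, in every closure doctrine of the corresponding kind. The generic model is itself such a model in the sense of \cref{def:slcsmodel}: it is the identity morphism $(1_{\class{\Sigma}},1_{\lind{\Sigma}})$, which is a morphism of closure (elementary, existential, hyper)doctrines and is trivially open, since the openness condition $1\circ\mathfrak{c}_{\Sigma,\Gamma}=\mathfrak{c}_{\Sigma,\Gamma}\circ 1$ holds on the nose. Hence the sequent is satisfied by the generic model, and \cref{th:completeness} yields derivability.

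\emph{Soundness.} Suppose $\Gamma\mid\Phi\vdash\psi$ is derivable. By \cref{th:completeness} this is equivalent to the inequality $\bigwedge_{\phi\in\Phi}[\phi]\leq[\psi]$ in $\lind{\Sigma}(\Gamma)$. Let $(\mathscr{M},\mu)$ be any model into $(\mathscr{P},\mathfrak{c})$. Each component $\mu_\Gamma\colon\lind{\Sigma}(\Gamma)\to\mathscr{P}(\mathscr{M}(\Gamma))$ is a morphism in $\catname{InfSL}$ (and in $\catname{HA}$ in the propositional and first-order cases), hence monotone and finite-meet-preserving, so
\[
\bigwedge_{\phi\in\Phi}\mu_\Gamma(\phi)=\mu_\Gamma\!\Bigl(\bigwedge_{\phi\in\Phi}[\phi]\Bigr)\leq\mu_\Gamma([\psi]),
\]
which is exactly satisfaction of the sequent by $(\mathscr{M},\mu)$. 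Running this for each fragment --- regular, propositional, first-order, with or without equality --- is the same computation, since in every case the $2$-category of models fixed in \cref{def:slcsmodel} has morphisms whose components preserve at least finite meets, and the empty case $\Phi=\emptyset$ is covered by preservation of the top element.

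The one genuinely substantive point, and the place where the definition of \emph{model} (rather than an arbitrary morphism of doctrines) is used, concerns the operators $\mathcal{C}$ and $\mathcal{U}$. For $\mathcal{C}$ this is precisely the role of openness: it forces $\mu_\Gamma([\mathcal{C}(\phi)])$ to equal the closure of $\mu_\Gamma([\phi])$ computed in the target, so the closure modality is interpreted coherently and neither of the two rules for $\mathcal{C}$ can fail. For $\mathcal{U}$ nothing further is needed: its provable meaning $[\phi\mathcal{U}\psi]$ was already characterised as $\bigvee\mathsf{u}_\Gamma(\phi,\psi)$ inside the syntactic doctrine, and since the transport argument above never invokes preservation of this supremum --- only of finite meets and of the order --- the $\mathcal{U}$ rules have effectively been discharged once and for all inside \cref{th:completeness}. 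I therefore expect no real obstacle, only the bookkeeping of checking that the generic model lies in the appropriate $2$-category for each fragment.
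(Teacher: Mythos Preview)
Your proof is correct and follows the same route as the paper: completeness via the generic model and \cref{th:completeness}, soundness by transporting the syntactic inequality along the monotone (indeed finite-meet-preserving) components $\mu_\Gamma$. The final paragraph about openness and $\mathcal{U}$ is extraneous---the paper's proof is literally ``each $\mu_\Gamma$ is monotone''---and in fact openness plays no role in the soundness argument as you have set it up, since once \cref{th:completeness} hands you an inequality in $\lind{\Sigma}(\Gamma)$, only order-preservation is needed to push it forward.
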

	\begin{proof}
		The only thing left to show is soundness for an arbitrary $(\mathscr{P}, \mathfrak{c})$ but this follows at once since each component $\mu_\Gamma$ of $\mu$ is monotone. %
		\qedhere
	\end{proof}
	
	\subsection{About the semantics of $\mathcal{U}$}\label{sur}
	As we have remarked before, the rule $\mathcal{U}$-E for the operator $\mathcal{U}$ is infinitary.
	Although in general this is needed, in this section we will define a class of hyperdoctrines in which the semantics of $\mathcal{U}$ can be given as a supremum of approximants. %
	\begin{definition}\label{def:extbound}
		Let $(\mathscr{P}, \mathfrak{c}):\catname{C}^{op}\rightarrow \catname{InfSL}$ be a closure doctrine that factors through the category of Heyting algebras. 
		For any object $C$ define the \emph{external boundary}:
		\begin{align*}
		\partial^+_C:\mathscr{P}(C)&\rightarrow \mathscr{P}(C)\\
		\alpha &\mapsto \mathfrak{c}_C(\alpha)\wedge \neg \alpha 
		\end{align*}
		For $\phi$ and $\psi\in \mathscr{P}(C)$, we define $\phi\mathfrak{U}_C\psi\in \mathscr{P}(C)$ as the supremum, if it exists, of the set
		\begin{equation*}
		\mathfrak{u}_C(\phi, \psi):=	\{\varphi\in \mathscr{P}(C) \mid \varphi\leq \phi \text{ and } \partial^+_C(\varphi) \leq\psi \}
		\end{equation*} 
	\end{definition}
	
	\begin{remark}
		If $\mathscr{P}$ is $\lind{\Sigma}$ then
		$[\phi]\mathfrak{U}_\Gamma[\psi]=[\phi\mathcal{U}\psi]$
		for any $[\phi]$ and $[\psi]\in \lind{\Sigma}(\Gamma)$.
	\end{remark}
	\begin{remark}\label{inclu}
		If $(\mathscr{M}, \mu)$ is a model then 
		$\mu_\Gamma(\mathsf{u}_{\Gamma}(\phi, \psi))\subset  \mathfrak{u}_{\mathscr{M}(\Gamma)}(\mu_\Gamma([\phi]), \mu_\Gamma([\psi]) )$
		for any $\Gamma$.
	\end{remark}
	
	\begin{example}\label{exa}
		Let $(X,\mathfrak{c})$ be a pretopological space and $S$, $T\in \mathcal{P}^p(X,\mathfrak{c})$, then
		\begin{equation*}
		S\mathfrak{U}_{(X, \mathfrak{c})}T = \bigcup \{W\subset S\mid \partial^+_{(X,\mathfrak{c})}(W) \subset T\}
		\end{equation*}
		i.e. $x\in S\mathfrak{U}_{(X, \mathfrak{c})}T$ if and only if there exists $W\subset S$ such that $X\in W$ and $\partial^+_{(X,\mathfrak{c})}(W) \subset T$.
	\end{example}
	\begin{example}
		Let us consider the closure operator $\mathfrak{c}_\epsilon$ on $\catname{Set}(-, [0,1])$ (see \cref{sec:reprex}).
		For any $f:X\rightarrow [0,1]$, it is $(\neg f)(x)= 1$ if and only if $f(x)=0$. 
		So,
		\[(\mathfrak{c}_{X, \epsilon}(f)\wedge \neg f)(x)=\begin{cases}
		\epsilon  & f(x)=0 \\
		0 & f(x)\neq 0
		\end{cases},
		\] 
		hence, given $g, h:X\rightarrow [0,1]$, $f\in \mathsf{u}_\Gamma(g,h)$ if and only if $f\leq g$ and $h(x)\geq \epsilon$ for any $x\in f^{-1}(0)$.
	\end{example}

	\begin{remark}
		If $(\mathscr{M}, \mu)$ is a model then for any $[\varphi]\in \lind{\Sigma}(\Gamma)$ such that $\varphi \in \mathsf{u}_\Gamma(\phi, \psi)$ we have 
		$\mathscr{\mu}_\Gamma([\varphi])\leq \mu_{\Gamma}([\phi \mathcal{U}\psi])$.
	\end{remark}
	
	\begin{definition}
		Let $(\mathscr{P}, \mathfrak{c})$ be as in \cref{def:extbound}. 
		A model $(\mathscr{M}, \mu):\lind{\Sigma}\rightarrow (\mathscr{P},\mathfrak{c})$ is said \emph{continuous} if the equality
		\begin{equation*}
		\mu_{\Gamma}([\phi \mathcal{U}\psi])=\mu_{\Gamma}([\phi])\mathfrak{U}_{\mathscr{M}(\Gamma)}\mu_{\Gamma}([\psi])
		\end{equation*}
		holds for any context $\Gamma$ and $[\phi], [\psi]\in \lind{\Sigma}(\Gamma)$.
	\end{definition}

	\begin{proposition}
		Let $\Sigma$ be a signature and $(\mathscr{P}, \mathfrak{c})$ a complete (elementary, existential, or hyper)doctrine, i.e. $\mathscr{P}(C)$ is complete for any object $C$ of $\catname{C}$; then, for any product preserving functor: $\mathscr{M}:\class{\Sigma} \rightarrow \catname{C}$ and functions 
		\begin{equation*}
		\mu^*_{\Gamma}:\Pi(\sigma_1,...,\sigma_n)\rightarrow \mathscr{P}(\mathscr{M}(\Gamma))
		\end{equation*}
		for all $\Gamma=[x_i:\sigma_i]_{i=1}^n$, there exists a unique continuous model $(\mathscr{M}, \mu)$ in $(\mathscr{P}, \mathfrak{c})$ such that
		\begin{equation*}
		\mu_\Gamma([P(x_1,...,x_n)])=\mu^*_\Gamma(P)
		\end{equation*}
	\end{proposition}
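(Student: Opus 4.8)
The plan is to construct the model by structural recursion on syntax — this is the concrete form of the universal property of the classifying (closure) hyperdoctrine — then to check that the resulting interpretation descends to the Lindenbaum fibres $\lind{\Sigma}(\Gamma)$, that it is open and continuous, and finally that it is the only such model; uniqueness comes out of the same recursion.

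Since $\mathscr{M}$ preserves finite products, $\mathscr{M}$ of a context $\Gamma=[x_i:\sigma_i]_{i=1}^n$ may be identified with the product of the $\mathscr{M}([x_i:\sigma_i])$, and $\mathscr{M}$ of a term $\Gamma\vdash t:\tau$ is an arrow $\mathscr{M}(\Gamma)\to\mathscr{M}([y:\tau])$. I first define $\mu_\Gamma\colon\formu{\Sigma}(\Gamma)\to\mathscr{P}(\mathscr{M}(\Gamma))$ by recursion on formulae: an atom $P(t_1,\dots,t_k)$ goes to the reindexing of $\mu^*(P)$ along $\mathscr{M}(t_1,\dots,t_k)$; the propositional connectives, the quantifiers (via $\exists_\pi,\forall_\pi$) and equality (via the fibred equality $\delta$) are interpreted by the hyperdoctrine structure of $\mathscr{P}$ exactly as in the standard interpretation of first-order logic (\cite{jacobs1999categorical,pitts1995categorical}); $\mathcal{C}(\phi)$ goes to $\mathfrak{c}_{\mathscr{M}(\Gamma)}(\mu_\Gamma(\phi))$; and $\phi\mathcal{U}\psi$ goes to $\mu_\Gamma(\phi)\mathfrak{U}_{\mathscr{M}(\Gamma)}\mu_\Gamma(\psi)$. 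Completeness of $\mathscr{P}$ is used only in the last clause, to guarantee that the supremum of $\mathfrak{u}_{\mathscr{M}(\Gamma)}(\mu_\Gamma\phi,\mu_\Gamma\psi)$ of \cref{def:extbound} exists; making this the value of $\mu_\Gamma(\phi\mathcal{U}\psi)$ is what forces the model to be continuous.

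Next I show $\mu_\Gamma$ is monotone for derivability — $\Gamma\mid\Phi\vdash\psi$ implies $\bigwedge_{\phi\in\Phi}\mu_\Gamma(\phi)\le\mu_\Gamma(\psi)$ — by induction on derivations. The first-order rules are discharged by the elementary-hyperdoctrine structure of $\mathscr{P}$ as usual; the rules for $\mathcal{C}$ use that $\mathfrak{c}$ is inflationary and monotone; and $\mathcal{U}$-I, $\mathcal{U}$-E say respectively that $\mu_\Gamma(\phi\mathcal{U}\psi)$ is an upper bound and the least upper bound of $\mathfrak{u}_{\mathscr{M}(\Gamma)}(\mu_\Gamma\phi,\mu_\Gamma\psi)$ — for $\mathcal{U}$-E one also uses infinite distributivity in the complete Heyting algebra $\mathscr{P}(\mathscr{M}(\Gamma))$, and in both cases one rewrites the side conditions of the rules using that, by construction, $\mu_\Gamma$ commutes with $\wedge$, with $\neg$ and with $\mathcal{C}$ (i.e.\ $\mu_\Gamma(\mathcal{C}\varphi)=\mathfrak{c}(\mu_\Gamma\varphi)$). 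Monotonicity forces $\mu_\Gamma$ to be constant on provable-equivalence classes, so it factors through $\lind{\Sigma}(\Gamma)$; a routine check then shows each resulting $\mu_\Gamma\colon\lind{\Sigma}(\Gamma)\to\mathscr{P}(\mathscr{M}(\Gamma))$ is a morphism of Heyting algebras preserving $\exists$, $\forall$ and $\delta$, that $\mathfrak{c}_{\mathscr{M}(\Gamma)}\circ\mu_\Gamma=\mu_\Gamma\circ\mathfrak{c}_{\Sigma,\Gamma}$ (openness), and that $\mu_\Gamma([\phi\mathcal{U}\psi])=\mu_\Gamma([\phi])\mathfrak{U}\mu_\Gamma([\psi])$ (continuity). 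For the merely existential or elementary cases one runs the same argument for the corresponding weaker fragment.

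It remains to prove that $\mu=\{\mu_\Gamma\}$ is natural and that the model is unique. Naturality, $\mu_\Gamma\circ\lind{\Sigma}_{(T_1,\dots,T_m)}=\mathscr{P}_{\mathscr{M}(T_1,\dots,T_m)}\circ\mu_\Delta$, is once more an induction on formulae: the atomic case is functoriality of $\mathscr{M}$ and $\mathscr{P}$ together with the definition, the connectives and quantifiers are standard, the $\mathcal{C}$ case uses that the arrow $\mathscr{M}(T_1,\dots,T_m)$ is continuous in $(\mathscr{P},\mathfrak{c})$, and the $\mathcal{U}$ case requires that reindexing along a continuous arrow carries $\mathfrak{u}$-sets into $\mathfrak{u}$-sets and commutes with the supremum defining $\mathfrak{U}$ — here one invokes continuity of the arrow, its reformulation $\exists_f\circ\mathfrak{c}\le\mathfrak{c}\circ\exists_f$ from \cref{image}, that $\mathscr{P}_f$ preserves finite meets and negation, and that $\mathscr{P}_f$, being a left adjoint by \cref{lad}, preserves arbitrary joins. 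Uniqueness is then immediate: any continuous model $(\mathscr{M},\nu)$ agreeing with $\mu^*$ on atoms agrees with $\mu$ on atoms by naturality, on the connectives, quantifiers and equality because it is a morphism of elementary hyperdoctrines, on $\mathcal{C}$ by openness and on $\mathcal{U}$ by continuity, hence everywhere by induction on formulae. The step I expect to be the main obstacle is exactly the $\mathcal{U}$ clause: the single choice $\mu_\Gamma(\phi\mathcal{U}\psi):=\mu_\Gamma(\phi)\mathfrak{U}\mu_\Gamma(\psi)$ must simultaneously be well defined (completeness), validate the infinitary rule $\mathcal{U}$-E together with $\mathcal{U}$-I, and be compatible with reindexing; the last point turns on how $\mathscr{P}_f$ interacts with the external boundary $\partial^+$, and is where the adjoint structure of \cref{lad} (through \cref{image}) is really needed. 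Everything else is the routine bookkeeping of hyperdoctrine semantics.
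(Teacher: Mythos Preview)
Your plan is the paper's (``by induction'') spelled out, and you correctly isolate the $\mathcal{U}$ clause of naturality as the crux. The gap is precisely there: the tools you list only give one inequality. From ``$\mathscr{P}_f$ carries $\mathfrak{u}$-sets into $\mathfrak{u}$-sets'' together with ``$\mathscr{P}_f$ preserves joins'' you obtain
\[
\mathscr{P}_f(\alpha\mathfrak{U}_D\beta)=\bigvee\mathscr{P}_f\bigl(\mathfrak{u}_D(\alpha,\beta)\bigr)\le\bigvee\mathfrak{u}_C\bigl(\mathscr{P}_f\alpha,\mathscr{P}_f\beta\bigr)=\mathscr{P}_f(\alpha)\mathfrak{U}_C\mathscr{P}_f(\beta),
\]
but naturality needs equality, and the reverse inequality is false in general. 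In $(\mathscr{P}^p,c)$ take $Y=\{a,b\}$ with $\mathfrak{c}_Y(\{b\})=\{a,b\}$ and $\mathfrak{c}_Y(\{a\})=\{a\}$, $X=\{0\}$ discrete, and the continuous map $f\colon X\to Y$ with $f(0)=b$. For $\alpha=\{b\}$, $\beta=\emptyset$ one computes $\alpha\mathfrak{U}_Y\beta=\emptyset$, hence $f^{-1}(\alpha\mathfrak{U}_Y\beta)=\emptyset$, whereas $f^{-1}(\alpha)\mathfrak{U}_X f^{-1}(\beta)=\{0\}\mathfrak{U}_X\emptyset=\{0\}$. Neither \cref{image} nor \cref{lad} can repair this: they control $\exists_f\circ\mathfrak{c}$ and the existence of adjoints, not an upper bound on $\mathfrak{c}_D(\exists_f\varphi')$, which is what the missing direction would need.

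This is not just a local defect of the argument. Your own uniqueness step shows that the inductively defined $\mu$ is the only candidate for a continuous model; packaging the counterexample into a signature with a function symbol $g:\sigma\to\tau$ and predicates $A,B:\tau$ (with $\mathscr{M}(g)=f$, $\mu^*(A)=\{b\}$, $\mu^*(B)=\emptyset$) then shows that this unique candidate fails naturality, because syntactically $(A\,\mathcal{U}\,B)[g(x)/y]=A(g(x))\,\mathcal{U}\,B(g(x))$ while semantically the two sides of the naturality square disagree. So as stated the proposition needs an extra hypothesis ensuring $\mathfrak{U}$ is stable under reindexing along arrows in the image of $\mathscr{M}$ --- for instance restricting to the propositional fragment, or to signatures without function symbols, or asking that every $\mathscr{M}(T)$ be open in the sense $\mathfrak{c}_C\circ\mathscr{P}_f=\mathscr{P}_f\circ\mathfrak{c}_D$. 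The paper's own proof is the same one-line induction and is silent on this point.
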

	\begin{proof}
		This follows immediately by induction. \qedhere
	\end{proof}

	\begin{example}
		Let $\mathcal{X}=\{(X_i, \mathfrak{c}_{i})\}_{i\in I}$ be a small family of pretopological spaces and let us define $\Sigma$ as follows:
		\begin{gather*}
		\abs{\Sigma}:=\mathcal{X}\qquad  \Gamma(((X_{i_1}, c_{i_1}),...,(X_{i_n},c_{i_n})), (X_j, c_{j})):=\catname{PrTop}(\prod_{k=1}^{n}(X_{i_k},c_{i_k}), (X_j, c_j)) \\ \Pi((X_{i_1}, c_{i_1}),...,(X_{i_n},c_{i_n})):=\mathcal{P}(\prod_{k=1}^{n}X_{i_k}) 
		\end{gather*}
		We can take as $\mathscr{M}$ the unique product preserving functor  
		$\class{\Sigma}\rightarrow \catname{PrTop}$ such that
		\begin{gather*}
		\functor[l]{(X_i,c_i)}{f}{(X_i,c_i)}
		\functormapsto
		\functor[r]{(X_i,c_i)}{f}{(X_i,c_i)}
		\end{gather*}
		i.e., $\mathscr{M}$ sends contexts to products and lists of terms to the corresponding product arrow. We can define $\mu^*$ sending each predicate $P:(X_{i_1},\mathfrak{c}_{i_1}),...,(X_{i_n},\mathfrak{c}_{i_n})$ to corresponding subset of $\prod_{k=1}^{n}$ $(X_{i_k},\mathfrak{c}_{i_k})$.
		\cref{exa} guarantees that this semantics is the same as the one developed in \cite{ciancia2014specifying}.
	\end{example}
	
	\begin{proposition}
		For any signature $\Sigma$ a sequent is derivable if and only if it is satisfied by any continuous model.
	\end{proposition}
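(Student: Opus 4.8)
The plan is to reduce the statement to \cref{th:completeness} by showing that the \emph{generic model} $(1_{\class{\Sigma}}, 1_{\lind{\Sigma}})$ is itself a continuous model. Granting this, both directions are immediate. For ``$\Rightarrow$'': if a sequent is derivable, then by soundness it is satisfied by \emph{every} model over a (closure) hyperdoctrine, hence \emph{a fortiori} by every continuous model. For ``$\Leftarrow$'': if a sequent is satisfied by every continuous model, then in particular it is satisfied by the generic model, which is continuous, and \cref{th:completeness} then gives its derivability.

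So the only real work is to verify that the generic model qualifies. First, $(1_{\class{\Sigma}}, 1_{\lind{\Sigma}})$ is trivially a morphism of (elementary) closure doctrines/existential doctrines/hyperdoctrines, and it is \emph{open} since $\mathfrak{c}_\Sigma\circ 1_{\lind{\Sigma}} = 1_{\lind{\Sigma}}\circ\mathfrak{c}_\Sigma$; hence it is a model. For continuity I must check, for every context $\Gamma$ and every $[\phi],[\psi]\in\lind{\Sigma}(\Gamma)$, that
\[
[\phi\mathcal{U}\psi] \;=\; [\phi]\,\mathfrak{U}_\Gamma\,[\psi],
\]
i.e. that $[\phi\mathcal{U}\psi]$ is the supremum of the set $\mathfrak{u}_\Gamma([\phi],[\psi])$. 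The key observation is that in $\lind{\Sigma}$ the external boundary is computed syntactically: for $[\varphi]\in\lind{\Sigma}(\Gamma)$,
\[
\partial^+_\Gamma([\varphi]) \;=\; \mathfrak{c}_{\Sigma,\Gamma}([\varphi])\wedge\neg[\varphi] \;=\; [\mathcal{C}(\varphi)]\wedge[\neg\varphi] \;=\; [\mathcal{C}(\varphi)\wedge\neg\varphi].
\]
Consequently $[\varphi]\in\mathfrak{u}_\Gamma([\phi],[\psi])$ iff $\Gamma\mid\varphi\vdash\phi$ and $\Gamma\mid\mathcal{C}(\varphi),\neg\varphi\vdash\psi$ are both derivable, which is exactly the condition defining membership of $[\varphi]$ in $\mathsf{u}_\Gamma(\phi,\psi)$. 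Thus $\mathfrak{u}_\Gamma([\phi],[\psi]) = \mathsf{u}_\Gamma(\phi,\psi)$ as subsets of $\lind{\Sigma}(\Gamma)$; since $[\phi\mathcal{U}\psi] = \sup\mathsf{u}_\Gamma(\phi,\psi)$ (established when constructing $\lind{\Sigma}$ from the rules $\mathcal{U}$-I, $\mathcal{U}$-E) while $[\phi]\mathfrak{U}_\Gamma[\psi] = \sup\mathfrak{u}_\Gamma([\phi],[\psi])$ by definition, the two agree, and the generic model is continuous.

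Assembling these observations yields the equivalence as above. The single step requiring care is the identification $\mathfrak{u}_\Gamma([\phi],[\psi]) = \mathsf{u}_\Gamma(\phi,\psi)$, i.e. recognising that the $\mathfrak{U}$-construction of \cref{def:extbound}, carried out inside the Lindenbaum--Tarski closure hyperdoctrine, reproduces precisely the syntactic operator $\mathcal{U}$; but this is a routine unwinding of the definition of $\mathfrak{c}_{\Sigma,\Gamma}$ and of $\partial^+$, so I expect no genuine obstacle.
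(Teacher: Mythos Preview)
Your proposal is correct and follows exactly the paper's approach: the paper's proof consists of the single sentence ``This follows immediately by the fact that the generic model is continuous,'' relying on the earlier Remark that $[\phi]\mathfrak{U}_\Gamma[\psi]=[\phi\mathcal{U}\psi]$ in $\lind{\Sigma}$. Your argument simply unpacks that Remark (identifying $\mathfrak{u}_\Gamma([\phi],[\psi])$ with $\mathsf{u}_\Gamma(\phi,\psi)$) and spells out how continuity of the generic model, combined with \cref{th:completeness} and soundness, yields both directions.
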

	\begin{proof}
		This follows immediately by the fact that the generic model is continuous. \qedhere
	\end{proof}
	
	\subsection{Higher order SLCS}
	We can provide an analogous of the above results for higher order logic.
	\begin{definition}
		A \emph{higher order signature} $\Sigma$ is a triple $(\abs{\Sigma}, \propo, \Gamma )$ where
		\begin{itemize}
			\item $\abs{\Sigma}$ is a set, called the set of \emph{basic types};
			\item $\propo \in \abs{\Sigma}$;
			\item $\Gamma:\abs{\Sigma}^\star \times \abs{\Sigma}\rightarrow \catname{Sets}$ is a functor, we will call \emph{function symbol} an element $f$ of $\Gamma((\sigma_1,...,\sigma_n), \sigma_{n+1})$ and we will write $f:\sigma,...\sigma_n\rightarrow, \sigma_{n+1}$.
		\end{itemize}
		A morphism of signature $\phi:\Sigma_1\rightarrow \Sigma_2$ is a couple $(\phi_1,\phi_2)$ such that
		\begin{itemize}
			\item $\phi_1$ is a function $\abs{\Sigma_1}\rightarrow \abs{\Sigma_2}$ such taht $\phi_1(\propo_1)=\propo_2$;
			\item $\phi_2$ is a natural transformation $\Gamma_1\rightarrow \Gamma_2\circ (\phi_1^{\star}\times \phi_1)$.
		\end{itemize}
		For any $\sigma\in \abs{\Sigma}$ we fix an countably infinite set $X_\sigma$ of \emph{variables}
	\end{definition}
	
	\begin{definition}
		We define the \emph{set of types} $\typ{\Sigma}$ as the smallest set such that:
		\begin{itemize}
			\item $\abs{\Sigma}\cup \{1\}\subset \typ{\Sigma}$;
			\item if $\sigma$ and $\tau\in \typ{\Sigma}$ then $\sigma \times \tau$ and $\sigma\rightarrow \tau \in \typ{\Sigma}$.
		\end{itemize}
		The introduction and elimination rules for product and exponent type and the rules for contexts are the usual ones (\cite{jacobs1999categorical}), like before we must add to them the following two:
		\begin{equation*}
		\inferrule*[right=$\mathcal{C}$-F]{\Gamma \vdash \phi:\propo}{\Gamma \vdash \mathcal{C}(\phi):\propo}\qquad 
		\inferrule*[right=$\mathcal{U}$-F]{\Gamma \vdash \phi:\propo \\ \Gamma\vdash \psi:\propo}{\Gamma  \vdash \phi \mathcal{U}\psi:\propo}
		\end{equation*}
		
	\end{definition}
	\begin{remark}
		Notice that now $\phi:\propo$ doesn't mean that $\phi$ is well-formed but that it is a term of type $\propo$.
	\end{remark}
	\begin{definition} Terms are constructed and typed in the usual way (\cite{jacobs1999categorical}, given two types $\sigma$ and $\tau$, the \emph{set of terms from $\sigma$ to $\tau$} $\term{\sigma}{\tau}$ is the quotient of the set of terms $T$ such that  $x:\sigma \vdash T:\tau$ modulo provable equality, where equality is subjected to the following rules (\cite{jacobs1999categorical} ): 
		\begin{gather*}
		\begin{gathered}
		\inferrule*[right=$\beta$-Con]{\Gamma, v:\sigma \vdash M:\tau \\ \Gamma:N:\sigma}{\Gamma \vdash \lambda v:\sigma.M(N)=_\tau M[N|v] }\\
		\inferrule*{\Gamma \vdash M:1}{\Gamma \vdash M=_1(\hspace{1pt}) }\\
		\inferrule*{\Gamma \vdash M:\sigma \\ \Gamma \vdash N:\tau}{\Gamma \vdash \pi_\sigma(M,N)=_\sigma M }
		\end{gathered}\qquad
		\begin{gathered}
		\inferrule*[right=$\eta$-Con]{\Gamma \vdash M:\sigma \rightarrow \tau}{\Gamma \vdash \lambda v:\sigma.M(v)=_{\sigma \rightarrow \tau}M}	\\
		\inferrule*{\Gamma \vdash P:\sigma\times \tau}{\Gamma \vdash (\pi_\sigma(P), \pi_\tau(P))=_{\sigma \times \tau} P }
		\\
		\inferrule*{\Gamma \vdash M:\sigma \\ \Gamma \vdash N:\tau}{\Gamma \vdash \pi_\tau(M,N)=_\tau N }
		\end{gathered}
		\end{gather*}	
	\end{definition}
	\begin{definition}
		Given an higher order signature $\Sigma$, its \emph{classifying category} is the category $\hoclass{\Sigma}$ in which
		\begin{itemize}
			\item the set of objects is $\typ{\Sigma}$;
			\item a morphism $\sigma \rightarrow \tau$ is just an element of $\term{\sigma}{\tau}$;	
			\item composition is given by substitution.
		\end{itemize}
	\end{definition}
	\begin{theoremEnd}{prop}\label{cc}
		For any higher order signature $\Sigma$, $\hoclass{\Sigma}$ is a cartesian closed category.
	\end{theoremEnd}
	\begin{proof}\begin{proofEnd}
			Cfr. \cite{jacobs1999categorical}, proposition $2.3.2$. The existence of finite product is clear, let's show the closedness. Projecting and evaluating we have a sequent
			$w:(\sigma\rightarrow\tau)\times \sigma \vdash \pi_{\sigma\rightarrow \tau}(w)(\pi_\sigma(w)):\tau$, let $ev_\sigma$ be the class of $\pi_{\sigma\rightarrow \tau}(x)(\pi_\sigma(x))$ in $\term{(\sigma\rightarrow \tau) \times \sigma}{\tau}$ and let's show its universality. 
			Let $\rho$ be another type and $[M]\in \term{\rho \times \sigma}{\tau}$ then we must have a sequent $z:\rho \times \sigma \vdash M:\tau$ and, weakening, we can derive
			\begin{gather*}
			\inferrule*{\inferrule*{x:\rho, y:\sigma, z:\rho \times \sigma \vdash M:\tau\\x:\rho, y:\sigma \vdash (x,y):\rho \times \sigma }
				{x:\rho , y:\sigma \vdash M[(x,y)|z]: \tau}}
			{x:\rho \vdash \lambda y:\sigma.M[(x,y)|z]:\sigma \rightarrow \tau}
			\end{gather*}
			Let's define 
			\begin{equation*}
			\Lambda([M]):=[\lambda y:\sigma.M[(x,y)|z]\in \term{\rho}{\sigma\rightarrow \tau}
			\end{equation*}
			Computing we have
			\begin{align*}
			ev_{\sigma}\circ (	\Lambda([M])\times 1_{\sigma})&=[\pi_{\sigma\rightarrow \tau}(w)(\pi_\sigma(w))[( (\lambda y:\sigma.M[(x,y)|z])[p_\rho(z)|x],p_\sigma(z))|w]]\\&=[(\lambda y:\sigma.M[(p_\rho(z),y)|z])(p_\sigma(z))]\\&=[M[(p_\rho(z),p_\sigma(z))|z]]\\&=[M]
			\end{align*}
			For uniqueness it's enough to notice that for any $[N]\in \term{\rho}{\sigma \rightarrow \tau}$:
			\begin{align*}
			\Lambda(ev_{\sigma}\circ ([N]\times 1_\sigma))&=[\lambda y:\sigma.\pi_{\sigma\rightarrow \tau}(w)(\pi_\sigma(w))[(N,y)|w] ]\\&=[\lambda y:\sigma.N(y)]=[N]
			\qedhere\end{align*}
		\end{proofEnd}\qedhere
	\end{proof}
	\begin{definition}
		The \emph{higher order spatial logic for closure spaces on a signature $\Sigma$} is the typed intuitionistic sequent calculus built on $\Sigma$ with the same rules of \cref{def:slcsrules}.
	\end{definition}
	
	\begin{theoremEnd}{lem}\label{rap}
		For any higher order signature $\Sigma$ the following are true:
		\begin{enumerate}
			\item $(\term{-}{\propo}, \mathscr{c}_{\Sigma})$ is a closure hyperdoctrine;
			\item for any $\phi$, $\psi\in \term{\sigma}{\propo}$ 
			\begin{equation*}
			[\phi] \leq [\psi] \iff x:\sigma \mid \phi \vdash \psi
			\end{equation*}
		\end{enumerate}
	\end{theoremEnd}
	\begin{proof}
		\begin{proofEnd}
			\begin{enumerate}
				\item The type formation rules and for $\propo$ tell us that  $\term{-}{\propo}$ is a functor $\hoclass{\Sigma}^{op}\rightarrow \catname{HA}$ in which the closure operator is given by
				\begin{align*}
				\mathfrak{c}_\Sigma: \term{\sigma}{\propo} &\rightarrow \term{\sigma}{\propo}\\
				[\phi] &\mapsto [\mathcal{C}(\phi)]
				\end{align*} 
				The fibered equality is given by the sequent
				\begin{gather*}
				(M,N):\sigma \times \sigma \vdash M=_\sigma N :\propo 
				\end{gather*}
				Now notice that the rules of derivation tell us that
				\begin{align*}
				\term{\sigma \times \tau}{\propo} &\rightarrow \term{\sigma}{\propo}\\
				\phi &\mapsto \exists_{x:\tau}(\phi) 
				\\[1ex]
				\term{\sigma \times \tau}{\propo} &\rightarrow \term{\sigma}{\propo}\\
				\phi &\mapsto \forall_{x:\tau}(\phi) 
				\end{align*}
				are the left and right adjoint to 
				\begin{align*}
				\term{\pi_\sigma}{\propo}:\term{\sigma}{\propo} &\rightarrow \term{\sigma \times \tau}{\propo}\\
				\phi &\mapsto \phi \circ \pi_{\sigma}
				\end{align*}
				and that Frobenius reciprocity and the Beck-Chevalley condition hold, and so we can conclude.
				\item This follows immediately by the rules for conjunction. \qedhere
			\end{enumerate} 
		\end{proofEnd}\qedhere  
	\end{proof}

	\begin{definition}
		Let $(\mathscr{P}, \mathfrak{c}):\catname{C}^{op}\rightarrow \catname{HA}$ be a closure hyperdoctrine with $\mathscr{P}$ isomorphic to $\catname{C}(-,C)$ with $\catname{C}$ cartesian closed. Then, a morphism of $\catname{cEHD}$
		\[
		(\mathscr{M}, \mu): (\term{-}{\propo}, \mathfrak{c}_\Sigma) \rightarrow (\mathscr{P},\mathfrak{c})
		\]
		is a \emph{model of SLCS in $\mathscr{P}$} if:
		\begin{itemize}
			\item $\mathscr{M}(\propo)\simeq C$;
			\item $\mathscr{M}$ preserves exponentials;
			\item $(\mathscr{M}, \mu)$ is open.
		\end{itemize}
		A sequent $x:\sigma  \mid \Phi\vdash \psi$ is \emph{satisfied by $(\mathscr{M}, \mu)$} if 
		\begin{equation*}
		\bigwedge_{\phi \in \Phi}\mu_{\sigma}([\phi]) \leq \mu_{\sigma}([\psi]).
		\end{equation*}

	\end{definition}
	
	\begin{theorem}
		A sequent $\Gamma \mid \Phi\vdash \psi$ is satisfied by the \emph{generic model} given by the identities $(1_{\hoclass{\Sigma}}, 1_{\term{-}{\propo}})$ if and only if it is derivable.
	\end{theorem}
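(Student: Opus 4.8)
The argument runs exactly parallel to the proof of \cref{th:completeness}, with \cref{rap} now playing the role that the first-order Lindenbaum construction played there. First I would check that $(1_{\hoclass{\Sigma}}, 1_{\term{-}{\propo}})$ is genuinely a model of higher order SLCS in the sense just defined. By construction the objects of $\hoclass{\Sigma}$ are types, so the doctrine $\term{-}{\propo}$ is precisely the representable $\hoclass{\Sigma}(-,\propo)$, and $\hoclass{\Sigma}$ is cartesian closed by \cref{cc}; thus the general hypothesis on the target doctrine is met, with the representing object being $\propo$ itself. The identity functor preserves exponentials and sends $\propo$ to $\propo$, the identity natural transformation $1_{\term{-}{\propo}}$ is trivially open since $\mathfrak{c}_\Sigma\circ 1 = 1\circ \mathfrak{c}_\Sigma$, and it is a morphism of $\catname{cEHD}$ because each of its components is the identity of a Heyting algebra, hence commutes with all the relevant structure. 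So satisfaction for the generic model is well-defined.

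Next I would unwind the definition of satisfaction. A context of the higher order calculus is, up to the product rules, a single declaration $x:\sigma$ (take $\sigma$ to be the product of the types occurring in $\Gamma$); hence $\Gamma\mid\Phi\vdash\psi$ is satisfied by $(1_{\hoclass{\Sigma}},1_{\term{-}{\propo}})$ exactly when $\bigwedge_{\phi\in\Phi}[\phi]\le[\psi]$ holds in $\term{\sigma}{\propo}$, the latter being a Heyting algebra and hence having the required finite meets by the first point of \cref{rap}.

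Then, by the second point of \cref{rap}, this inequality holds if and only if $x:\sigma\mid\bigwedge_{\phi\in\Phi}\phi\vdash\psi$ is derivable; and applying the conjunction introduction and elimination rules finitely many times, this is equivalent to derivability of $x:\sigma\mid\Phi\vdash\psi$, i.e.\ of $\Gamma\mid\Phi\vdash\psi$. This yields both implications simultaneously.

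The only step that needs any care is the verification that the generic model meets all three clauses in the definition of a model of SLCS — in particular openness and preservation of exponentials; once that is in place there is no real obstacle, since the rest is a direct transcription of the reasoning behind \cref{th:completeness}.
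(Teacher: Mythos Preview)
Your proposal is correct and follows essentially the same route as the paper: unwind satisfaction in the generic model to the inequality $\bigwedge_{\phi\in\Phi}[\phi]\le[\psi]$ in $\term{\sigma}{\propo}$, invoke \cref{rap} to identify this with derivability of $x:\sigma\mid\bigwedge_{\phi\in\Phi}\phi\vdash\psi$, and then use the conjunction rules. The paper's own proof is terser and omits both the explicit verification that the identity pair is a model and the reduction of a general context $\Gamma$ to a single declaration $x:\sigma$; your added care on these points is warranted but does not change the argument.
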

	\begin{proof}
		By definition, $x:\sigma \mid \Phi\vdash \psi$ is satisfied if and only if 
		\begin{equation*}
		\bigwedge_{\phi \in \Phi}[\phi] \leq [\psi]
		\end{equation*}
		in $\term{\sigma}{\propo}$ but this is equivalent to the derivability of
		\begin{equation*}
		x:\sigma \mid \bigwedge_{\phi \in \Phi}\phi \vdash \psi
		\end{equation*}
		but the derivability of it is equivalent (applying the rules of conjunction a finite number of times) to
		\begin{equation*}
		x:\sigma \mid \Phi \vdash \psi 
		\end{equation*} 
		and we are done.
	\end{proof}
	
	\begin{corollary}
		The categorical semantics of higher order $SLCS$ is sound and complete.
	\end{corollary}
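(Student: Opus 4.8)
The plan is to derive both halves of the statement from the theorem immediately above, which says that a sequent is satisfied by the generic model $(1_{\hoclass{\Sigma}}, 1_{\term{-}{\propo}})$ exactly when it is derivable. Consequently there are only two things to verify: that the generic model really is a model of higher order SLCS in $(\term{-}{\propo}, \mathfrak{c}_\Sigma)$ (this yields completeness), and that an arbitrary model validates every derivable sequent (this yields soundness).

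For completeness I would first observe, using \cref{cc} and \cref{rap}, that $\hoclass{\Sigma}$ is cartesian closed and that $(\term{-}{\propo}, \mathfrak{c}_\Sigma)$ is a closure hyperdoctrine whose underlying functor is the representable $\hoclass{\Sigma}(-, \propo)$; hence it has exactly the shape demanded in the definition of a model of SLCS, with the representing object being $\propo$ itself. Then I would check that the identity pair satisfies the three clauses of that definition: $1_{\hoclass{\Sigma}}(\propo) = \propo \simeq \propo$; the identity functor preserves exponentials; and openness is the trivial identity $\mathfrak{c}_\Sigma \circ 1 = 1 \circ \mathfrak{c}_\Sigma$. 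Therefore, if a sequent holds in every model it holds in the generic one, and so --- by the theorem above --- it is derivable.

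For soundness, take an arbitrary model $(\mathscr{M}, \mu): (\term{-}{\propo}, \mathfrak{c}_\Sigma) \to (\mathscr{P}, \mathfrak{c})$ and a derivable sequent $x:\sigma \mid \Phi \vdash \psi$. Applying the conjunction rules finitely many times, its derivability is equivalent to that of $x:\sigma \mid \bigwedge_{\phi \in \Phi}\phi \vdash \psi$, i.e.\ (by the theorem above) to the inequality $\bigwedge_{\phi \in \Phi}[\phi] \leq [\psi]$ in $\term{\sigma}{\propo}$. Since each component $\mu_\sigma$ is a morphism of Heyting algebras it is monotone and preserves finite meets, so
\[
\bigwedge_{\phi \in \Phi}\mu_\sigma([\phi]) \;=\; \mu_\sigma\Bigl(\bigwedge_{\phi \in \Phi}[\phi]\Bigr) \;\leq\; \mu_\sigma([\psi]),
\]
which is precisely the condition that $(\mathscr{M}, \mu)$ satisfies the sequent. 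Note that no separate inspection of the $\mathcal{C}$- and $\mathcal{U}$-rules is needed here, since they are already absorbed into the ``only if'' direction of the theorem above.

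I do not expect any serious obstacle; essentially all the content sits in \cref{cc}, \cref{rap}, and the preceding theorem. The one point that deserves care is confirming that representability of $\term{-}{\propo}$ together with cartesian closedness of $\hoclass{\Sigma}$ matches the hypotheses in the definition of a model precisely enough for the identity pair to qualify as one --- in particular that its openness clause is the trivial identity and that $\mathscr{M}=1_{\hoclass{\Sigma}}$ preserves exponentials.
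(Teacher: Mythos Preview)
Your proposal is correct and follows essentially the same route as the paper: completeness comes from the preceding theorem about the generic model, and soundness from the fact that each component $\mu_\sigma$ is monotone (the paper's proof says only this last clause). You are more explicit than the paper in two places---verifying that the identity pair really is a model, and invoking meet-preservation of $\mu_\sigma$ rather than bare monotonicity---but these are elaborations of the same argument, not a different approach.
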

	\begin{proof}
		The only thing left to show is soundness for an arbitrary $(\mathscr{P}, \mathfrak{c})$ in $\catname{cEHD}$ with $\mathscr{P}$ representable functor on a cartesian closed category, but this follows at once since each component of $\mu$ is monotone.
	\end{proof}
	\begin{remark}
		We can repeat verbatim the considerations of subsection \ref{sur} in order to get a completeness result for continuous models.
	\end{remark}
	
	\ifreport
	
	\section{Paths in closure doctrines}\label{sec:paths}
	Often, in spatial logics we are interested also on \emph{reachability} of some property.
	Differently from closure and the ``until'' operator, reachability is not a structural property of the logical domain; rather, it depends on the kind of paths we choose to explore the space.
	In this section we formalise this idea, and show how to interpret also the $\mathcal{S}$ operator from SLCS.
	
	\subsection{The reachability closure operator}
	\begin{definition}
		Let $\mathscr{P}:\catname{C}^{op}\rightarrow \catname{HA}$ be an hyperdoctrine, an \emph{internal preorder} in $\mathscr{P}$ is a pair $(I, \rho)$ where $I$ is an object of $\catname{C}$ and $\rho\in \mathscr{P}(I\times I)$ which satisfy:
		\begin{itemize}
			\item  reflexivivty: $\delta_I \leq \rho$;
			\item transitivity: $\mathscr{P}_{(\pi_1,\pi_2)}(\rho)\wedge\mathscr{P}_{(\pi_2,\pi_3)}(\rho)\leq \mathscr{P}_{(\pi_1,\pi_3)}(\rho)$ 
		\end{itemize} 
		$(I,\rho)$ is called an \emph{internal order} if in addition $\rho $ is \emph{antisymmetric}, i.e.
		$\rho\wedge\mathscr{P}_{(\pi_2,\pi_1)}(\rho) \leq \delta_I$.
		Moreover $(I, \rho)$ is \emph{total} if $\rho \vee \mathscr{P}_{(\pi_2,\pi_1)} (\rho) = \top$.
		
		A \emph{internal monotone arrow} $f:(I,\rho)\rightarrow (J, \sigma)$ is an arrow of $\catname{C}$ such that $\rho \leq \mathscr{P}_{f\times f}(\sigma)$.
	\end{definition}
	
	\begin{definition}
		Let $(\mathscr{P}, \mathfrak{c}):\catname{C}^{op}\rightarrow \catname{HA}$ be an elementary existential closure doctrine, we say that $\phi\in \mathscr{P}(C)$ is \emph{connected} if $\varphi \vee \psi = \phi$ and  $\mathfrak{c}(\varphi)\wedge \psi = \bot $ imply $\varphi=\bot$.
		
		An object $C$ is \emph{$\mathscr{P}$-connected} if $\top\in \mathscr{P}(C)$ is connected.
	\end{definition}
	
	\begin{definition}
		Given a preorder $(I, \rho)$ in an elementary existential doctrine $\mathscr{P}$ and $\alpha\in \mathscr{P}(I)$ we define the \emph{downward} and \emph{upward closure of $\alpha$} as
		\begin{equation*}
		\downarrow \alpha:= \exists_{\pi_1}(\mathscr{P}_{\pi_2}(\alpha)\wedge \rho )\qquad \uparrow \alpha:= \exists_{\pi_2}(\mathscr{P}_{\pi_1}(\alpha)\wedge \rho )
		\end{equation*} 
		We define the \emph{reachability operator} $\reach{}$ as the family of functions, indexed over the objects: 
		\begin{align*}
		\reach{C}:\mathscr{P}(C)&\rightarrow \mathscr{P}(C)\\
		\varphi &\mapsto \varphi \vee \bigvee_{p\in \catname{C}(I,C)}\exists_{p}(\uparrow\mathscr{P}_{p}(\varphi)).
		\end{align*}
	\end{definition}
	
	\begin{theoremEnd}{prop}
		Given $\mathscr{P}:\catname{C}^{op}\rightarrow \catname{InfSL}$ in $\catname{EED}$ and $(I,\rho)$ an internal preorder in it, $\reach{}=\{\reach{C}\}_{C\in \catname{Ob}(\catname{C})}$  is a grounded closure operator on $\mathscr{P}$. If, moreover, $\mathscr{P}$ is an hyperdoctrine, $\reach{}$ is fully additive.
	\end{theoremEnd}
\begin{proof}
	\begin{proofEnd}
		Monotonicity and inflationarity comes at once, take an arrow $f:C\rightarrow D$, for any $\alpha \in \mathscr{P}(D)$ we have:
		\begin{align*}
		\exists_f(\reach{C}(\mathscr{P}_f(\varphi)))&=\exists_f(\mathscr{P}_f(\varphi))\vee \bigvee_{p\in \catname{C}(I,C)}\exists_f(\exists_p(\uparrow \mathscr{P}_p(\varphi)))
		\\&=\exists_f(\mathscr{P}_f(\varphi))\vee \bigvee_{p\in \catname{C}(I,C)}\exists_f(\exists_{p}(\exists_{\pi_2}(\rho \wedge \mathscr{P}_{\pi_1}(\mathscr{P}_{p}(\mathscr{P}_{f}(\varphi))))))\\&\leq
		\varphi \vee \bigvee_{p\in \catname{C}(I,C)} \exists_{f\circ p}(\exists_{p}(\exists_{\pi_2}(\rho \wedge \mathscr{P}_{\pi_1}(\mathscr{P}_{f\circ p}(\varphi))))\\&\leq \varphi \vee \bigvee_{q\in \catname{C}(I,D)} \exists_{q}(\exists_{\pi_2}(\rho \wedge \mathscr{P}_{\pi_1}(\mathscr{P}_{q}(\varphi))))\\&=\varphi \vee \bigvee_{q\in \catname{C}(I,D)}\exists_q(\uparrow \mathscr{P}_q(\varphi)) \\&=\reach{D}(\varphi)
		\end{align*}
		Groundedness is immediate; suppose now that $\mathscr{P}$ is an hyperdoctrine, then $\mathscr{P}_f$ commutes with suprema for any arrow $f$ and, since $\mathscr{P}(C)$ is an Heyting algebra, infima distribute over them, so:
		\begin{align*}
		\reach{C}(\bigvee_{k\in K}\varphi_k)&= (\bigvee_{k\in K}\varphi_K )\vee \bigvee_{p\in \catname{C}(I,C)} \exists_{p}(\exists_{\pi_1}(\rho \wedge \mathscr{P}_{\pi_2}(\mathscr{P}_p(\bigvee_{k\in K}\varphi_k))))\\&=(\bigvee_{k\in K}\varphi_k)\vee \bigvee_{p\in \catname{C}(I,C)}\bigvee_{k\in K}\exists_{p}(\exists_{\pi_1}(\rho \wedge \mathscr{P}_{\pi_2}(\varphi_k)))\\&=\bigvee_{k\in K}(\varphi_k\vee \bigvee_{p\in \catname{C}(I,C)}\exists_{p}(\exists_{\pi_1}(\rho \wedge \mathscr{P}_{\pi_2}(\varphi_k))))\\&= \bigvee_{k\in K} \reach{C}(\varphi_k)
		\qedhere 
		\end{align*}
	\end{proofEnd}\qedhere 
\end{proof}
	\begin{example}
		In $\catname{Set}$, for any non empty $I$ we have that, for any set $X$:
		\begin{gather*}
		\reach{X}(S)=\begin{cases}
		X & S\neq \emptyset\\
		\emptyset & S=\emptyset
		\end{cases}
		\end{gather*}  
	\end{example}
	
	\begin{example}
		Take the elementary hyperdoctrine $\mathscr{P}^p$ on $\catname{PrTop}$ (\cref{def:topclosures}) and fix an $n\in \mathbb{N}$, as an internal order we can take $n=\{0,1,...,n-1\}$ with the closure operator
		\begin{align*}
		\mathfrak{n}:\mathcal{P}(n)&\rightarrow \mathcal{P}(n)\\
		S & \mapsto \{i\in n \mid n=s+1 \text{ for some } s \in S\} 
		\end{align*} and the usual ordering $\leq$ as $\rho$. An arrow $p:(I, \mathfrak{n})\rightarrow (X, \mathfrak{c})$ is just a function such that
		\begin{equation*}
		\mathfrak{n}(p^{-1}(S))\leq p^{-1}(\mathfrak{c}(S))
		\end{equation*}
		that is, $p(i+1)\in \mathfrak{c}(\{p(i)\})$.
		So, for instance
		\begin{equation*}
		\reach{(\mathbb{N}, \mathfrak{n})}(S)=\{k \in \mathbb{N}\mid k=s+n \text{ for some } s\in S \}
		\end{equation*}
		where $\mathsf{n}:\mathcal{P}(\mathbb{N})\rightarrow \mathcal{P}(\mathbb{N})$ is defined as for $n$.
	\end{example}
	
	\subsection{Surroundedness}
	In this section we will introduce a \emph{surrounded} operator (similar to the ``until'' operator of temporal logic) in order to generalize the analogous operator introduced in \cite{ciancia2016spatial}. 
	\begin{definition}
		Let $(I,\rho)$ be an internal order in an elementary existential closure doctrine $(\mathscr{P}, \mathfrak{c})$, let $\phi$ and $\psi\in \mathscr{P}(C)$.
		We say that an arrow $p:I\rightarrow C$ is an \emph{escape route from $\phi$ avoiding $\psi$} if
		\begin{enumerate}
			\item at some point in $p$, $\phi$ holds: $\exists_{!_I}(\mathscr{P}_p(\phi))=\top$;
			\item from the points where $\phi$ holds we can reach a point where $\neg\phi$ holds: $\mathscr{P}_p(\phi)\leq {\downarrow}\mathscr{P}_p(\neg \phi)$;
			\item there is no point reachable from $\phi$ and which reaches $\neg\phi$ along the route, where $\psi$ holds:
			$\uparrow\mathscr{P}_p(\phi)\wedge \downarrow \mathscr{P}_p(\neg \phi)\wedge \mathscr{P}_p(\psi)=\bot$.
		\end{enumerate}
		We will denote with $\er{C}{\phi}{\psi}$ the set of such arrows. We also define
		\begin{equation*}
		\escape{C}{\phi}{\psi} :=\bigvee_{p\in \er{C}{\phi}{\psi}}  \exists_{p}(\top)
		\qquad\qquad
		\sur{C}{\phi}{\psi}:=\bigwedge_{p\in \er{C}{\phi}{\psi}}\phi \wedge \neg(\exists_p(\top))
		\end{equation*}
	\end{definition}
	Intuitively, $\escape{C}{\phi}{\psi}$ (read ``$\phi$ escapes $\psi$'') holds where $\phi$ holds and it is possible to escape avoiding $\psi$;
	conversely, $\sur{C}{\phi}{\psi}$  (read ``$\phi$ is surrounded by $\psi$'') holds where $\phi$ holds and it is not possible to escape from it without avoiding $\psi$. 
	Notice that these notions depend on the specific choice of the internal order $(I,\rho)$, hence we can deal with different reachability, with different shapes of escape routes, by choosing the adequate internal order.

	\begin{example}[cfr.~\cite{ciancia2016spatial}]
		Let us  consider the closure hyperdoctrine on pretopological spaces $(\mathscr{P}^p, c)$ as in \cref{def:topclosures}. In this case an internal order is just an ordered set $(I,\leq)$ equipped with a closure operator. Given $S$ and $T$ subsets of a chosen $(X, \mathfrak{c})$, then
		\begin{itemize}
			\item $p\in \er{(X, \mathfrak{c})}{S}{T}$ if and only if
			\begin{enumerate}
				\item $p^{-1}(S)\neq \emptyset$;
				\item for any $t$ such that $p(t)\in S$ there exists an $s\geq t$ with $p(s)\notin S$; 
				\item  $p(t)\notin T$ for any $t\in I$ for which there exist $s$ and $v\in I$ such that $p(s)\in A$, $p(v)\in T $ and $s \leq t \leq v$.
			\end{enumerate}
			\item   $x \in 	\escape{(X, \mathfrak{c})}{S}{T} $ if and only if there exists a continuous $p:I\rightarrow X, t, s \in I$ such that $t\leq s$, $p(t)=x$, $p(s)\notin S$ and for any pair $(u,v)\in \leq$ with $p(u)\in S$ and $p(v)\in T$ there are no $w$ between $u$ and $v$ such that $p(w)\in T$.
			\item $x\in \sur{(X, \mathfrak{c})}{S}{T}$ if and only if $x\in S$  and for any continuous $p:I\rightarrow X$ such that $p(t)=x$ for some $t\in I$, $p\notin \er{(X, \mathfrak{c})}{S}{T}$.
		\end{itemize}   
		Therefore, this situation corresponds  to the surround operator defined in   \cite{ciancia2016spatial}.
	\end{example}
	
	\begin{theoremEnd}{thm1}
		Let $(\mathscr{P}, \mathfrak{c})$ be a boolean elementary closure hyperdoctrine, $(I, \rho)$ a preorder in it with $I$ $\mathscr{P}$-connected and such that, for all $\gamma\in \mathscr{P}(I)$,
		$\mathfrak{c}_I(\gamma)\wedge\neg \gamma \leq \uparrow \gamma$.
		Then, for any $\phi$ and $\psi \in \mathscr{P}(C)$:
		\begin{enumerate}
			\item if $\alpha\in \mathfrak{u}_{C}(\phi, \psi)$  and $p\in \er{C}{\phi}{\psi}$ then $\mathscr{P}_p(\alpha)=\bot$;
			\item $\phi \mathfrak{U}_C \psi \leq \sur{C}{\phi}{\psi}$.
		\end{enumerate}
	\end{theoremEnd}
	\begin{proof}
	\begin{proofEnd}
		\begin{enumerate}
			\item By continuity we have
			\begin{align*}
			\mathfrak{c}_I(\mathscr{P}_p(\alpha))\wedge \mathscr{P}_p(\neg \alpha)&\leq \mathscr{P}_p(\mathfrak{c}_C(\alpha)) \wedge \mathscr{P}_p(\neg \alpha)\\&\leq \mathscr{P}_p(\mathfrak{c}_C(\alpha)\wedge \neg \alpha)\\&\leq		
			\mathscr{P}_p(\psi)
			\end{align*}
			By hypothesis,
			\begin{align*}
			\mathfrak{c}_I(\mathscr{P}_p(\alpha))\wedge \mathscr{P}_p(\neg \alpha)&\leq \uparrow\mathscr{P}_p(\alpha)\\&\leq \mathscr{P}_p(\phi)
			\end{align*} 
			and 
			\begin{align*}
			\mathfrak{c}_I(\mathscr{P}_p(\alpha))\wedge \mathscr{P}_p(\neg \alpha)&=\mathfrak{c}_I(\mathscr{P}_p(\alpha))\wedge \mathscr{P}_p(\neg \alpha)\wedge \top\\&=(\mathfrak{c}_I(\mathscr{P}_p(\alpha))\wedge \mathscr{P}_p(\neg \alpha)\wedge \mathscr{P}_{p}(\phi) )\vee(\mathfrak{c}_I(\mathscr{P}_p(\alpha))\wedge \mathscr{P}_p(\neg \alpha)\wedge \mathscr{P}_p(\neg \phi))\\&\leq\downarrow\mathscr{P}_p(\neg \phi)
			\end{align*}
			hence, since
			$p\in \er{C}{\phi}{\psi}$:
			\begin{align*}
			\mathfrak{c}_I(\mathscr{P}_p(\alpha))\wedge \mathscr{P}_p(\neg \alpha)&\leq \uparrow\mathscr{P}_p(\phi)\wedge \downarrow \mathscr{P}_p(\neg \phi)\wedge \mathscr{P}_p(\psi) = \bot
			\end{align*}	
			and we conclude by connectedness.	
			\item By the previous point $\mathscr{P}_p(\alpha)=\bot$ for any $p\in \er{C}{\phi}{\psi}$ so $\mathscr{P}_p(\neg \alpha)=\top$ that implies $\alpha \leq \neg \exists_p(\top)$ from which the thesis follows.
			\qedhere\end{enumerate}
	\end{proofEnd}
	\qedhere 
\end{proof}

	\section{Logics for closure hyperdoctrines with paths}\label{sec:slcswp}
	In this section we extend the logics for closure hyperdoctrines we have introduced in \cref{sec:slcs}, with formulae constructor for reasoning about sourroundedness and reachability.
	
	\subsection{Syntax and derivation rules}
	\begin{definition}
		A \emph{signature with paths} is a triple $\Sigma=(\Sigma, \iota, R)$ where
		\begin{itemize}
			\item $\Sigma$ is a signature as per \cref{def:signature};
			\item $\iota \in \abs{\Sigma}$ is called the \emph{interval type};
			\item $R: \iota, \iota$ is called the \emph{preorder of $\iota$}
		\end{itemize}
		A morphism  $\phi:(\Sigma_1, \iota_1, R_1)\rightarrow (\Sigma_2, \iota_2, R_2)$ is a morphism of signature  $(\phi_1,\phi_2,\phi_3)$ such that $\phi_1(\iota_1)=\iota_2$ and ${\phi_3}_{\iota, \iota}(R_1)=R_2$.
	\end{definition}
	\begin{remark}
		Signatures with paths and their morphisms with componentwise composition form a category $\catname{SignPath}$.
	\end{remark}
	
	\begin{definition}
		We add the following rule of well formation to the logic for the closure operators (\cref{def:slcssynt}):
		\begin{equation*}
		\inferrule*[right=$\mathcal{S}$-F]{\Gamma \vdash \phi:\propo \\
			\Gamma\vdash \psi:\propo}{\Gamma  \vdash \phi \mathcal{S}\psi:\propo}
		\end{equation*}
		
	\end{definition}
	
	\begin{definition}
		Given a signature $(\Sigma, \iota, R)$, its \emph{classifying category} is the category $\class{\Sigma, \iota, R}$ is just $\class{\Sigma}$.  
	\end{definition}

	\begin{definition}
		We define the following rules for the well-formed formulae previously defined:
		\begin{itemize}
			\item $R$'s rules:
			\begin{gather*}	
			\inferrule*[right=$R$-Refl]{\Gamma, x:\iota, y:\iota \mid \Phi\vdash x=_\iota y}{\Gamma, x:\iota, y:\iota \mid \Phi\vdash R(x,y)} \\
			\inferrule*[right=$R$-Trans]{\Gamma, x:\iota, y:\iota \mid \Phi\vdash R(x,y)\\\Gamma, y:\iota, z:\iota \mid \Phi\vdash R(y,z)}{\Gamma, x:\iota, z:\iota \mid \Phi\vdash R(x,z)}
			\end{gather*}
			
			\item $\mathcal{S}$'s rules: 
			\begin{gather*}
			\inferrule*[right=$\mathcal{S}$-I]{\text{for all } p:\iota \rightarrow \sigma,\quad  \Gamma, x:\sigma  \mid \Phi, \varphi \vdash \es{\Gamma,x:\sigma}{\phi}{\psi}\wedge \phi \wedge \neg \exists t:\iota.(x=_\sigma p(t))
			}{\Gamma, x:\sigma \mid \Phi, \varphi \vdash 	\phi\mathcal{S}\psi}  	
			\\
			\inferrule*[right=$\mathcal{S}$-E]{\Gamma, x:\sigma \mid \Phi \vdash \es{\Gamma,x:\sigma}{\phi}{\psi}}{\Gamma, x:\sigma  \mid  \Phi,\phi\mathcal{S}\psi  \vdash \phi \wedge \neg \exists t:\iota.(x=_\sigma p(t))}  
			\end{gather*} 
			where
			\begin{align*}
			\es{\Gamma,x:\sigma}{\phi}{\psi} :=\ & (\exists t:\iota.\phi[p(t)/x])\ \wedge \\
			& (\phi[p(t)/x] \Rightarrow \exists s:\iota.(R(t, s)\wedge \neg \phi[p(s)/x]))\ \wedge \\
			& \neg (\exists s:\iota.(R(s,t)\wedge \phi[p(s)/x])\ \wedge \\
			& \quad \exists v:\iota.(R(t,v)\wedge \neg \phi[p(v)/x]) \wedge \psi[p(t)/x])
			\end{align*}

		\end{itemize}
		
		The \emph{Propositional Logic for Closure Operators with Paths on $\Sigma$} (PLCOwP) is given by PLCO (\cref{def:slcsrules}) extended with the rules above.
		Similarly for the \emph{First Order Logic for Closure Operators with Paths on $\Sigma$} (FOLCOwP). %
		Derivability of sequents is defined in the usual way (\cite{pitts1995categorical}).
	\end{definition}

	\subsection{Categorical semantics of closure logics with paths}
	\begin{definition} 
		Given an elementary closure hyperdoctrine $(\mathscr{P}, \mathfrak{c}):\catname{C}^{op}\rightarrow \catname{HA}$ and an internal preorder $(I, \rho)$, we will call the pair $((\mathscr{P}, \mathfrak{c}), (I,\rho))$ an \emph{elementary path hyperdoctrine}.
		An arrow of path hyperdoctrines $((\mathscr{P}, \mathfrak{c}), (I,\rho))\rightarrow ((\mathscr{S}, \mathfrak{d}), (J,\sigma))$ is a morphism $(\mathscr{F}, \eta )\in \catname{cEHD}((\mathscr{P}, \mathfrak{c}), (\mathscr{S}, \mathfrak{d}) )$ such that there exists an isomorphism $h:\mathscr{F}(I)\rightarrow J$ for which
		\begin{equation*}
		\eta_{I\times I}(\rho)=\mathscr{S}_{(h\circ \mathscr{F}(\pi_1), h\circ \mathscr{F}(\pi_2) )}(\sigma)
		\end{equation*}  
		We say that $(\mathscr{F}, \eta)$ is \emph{open} if it is as arrow $(\mathscr{P}, \mathfrak{c})\rightarrow (\mathscr{S}, \mathfrak{d})$
		
		Clearly this defines a $2$-subcategory $\catname{pEHD}$ of $\catname{cEHD}$.
	\end{definition}
	\begin{proposition}For any signature $(\Sigma, \iota, R)$, $(\lind{\Sigma}, (\iota, R))$ is a path hyperdoctrine.
	\end{proposition}
	\begin{proof} 
		We have only to show that $(\iota, R)$ is an internal preorder but this follows at once from the two $R$'s rules.
	\end{proof}

	\begin{definition}
		Let $((\mathscr{P},\mathfrak{c})(I,\rho))$ be a path hyperdoctrine.
		Then, a \emph{model of closure logic with paths} in it is just an open  morphism
		\begin{equation*}
		(\mathscr{M}, \mu):((\lind{\Sigma}, \mathfrak{c}_\Sigma), (\iota, R)\rightarrow ((\mathscr{P},\mathfrak{c}),(I,\rho))
		\end{equation*}
		Satisfability of sequents is defined as in the case of closure logics (\cref{def:slcsmodel}).
	\end{definition}
	\begin{remark}
		As for $\mathcal{U}$ we have not put any requirement on the interpretation of $\mathcal{S}$, but, in $(\lind{\Sigma}, \mathfrak{c}_\Sigma)$, for $\Gamma \vdash \phi:\propo$ and $\Gamma \vdash \psi:\propo$ we have
		\begin{equation*}
		[\phi\mathcal{S}\psi]=[\phi]\mathfrak{S}_\Gamma[\psi]
		\end{equation*}
		so we can again ask for \emph{continuous models}, i.e. models that preserves this equality.
	\end{remark}
	\begin{theorem}
		A sequent $\Gamma \mid \Phi\vdash \psi$ is satisfied by the \emph{generic model} $(1_{\class{\Sigma}}, 1_{\lind{\Sigma}})$ if and only if it is derivable.
	\end{theorem}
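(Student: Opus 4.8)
The plan is to mirror the proof of \cref{th:completeness} almost verbatim, since the only new ingredients — the operator $\mathcal{S}$ and the preorder symbol $R$ — do not affect the order-theoretic skeleton of the argument. First I would recall that, by the definition of satisfiability for models of closure logic with paths (which is the same as in \cref{def:slcsmodel}) and by the fact that the generic model is the pair of identities $(1_{\class{\Sigma}}, 1_{\lind{\Sigma}})$, the sequent $\Gamma \mid \Phi \vdash \psi$ is satisfied by the generic model precisely when
\[
\bigwedge_{\phi\in\Phi}[\phi]\leq[\psi]
\]
holds in $\lind{\Sigma}(\Gamma)$. Here I use that $(\lind{\Sigma}, (\iota, R))$ is a genuine path hyperdoctrine — the preceding proposition provides that $(\iota, R)$ is an internal preorder, and $\class{\Sigma, \iota, R}$ is literally $\class{\Sigma}$ — so the generic model, being componentwise the identity, is trivially an open morphism in $\catname{pEHD}$, hence a model.

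Next I would unwind the order on $\lind{\Sigma}(\Gamma)$. Exactly as in the propositional and first-order cases, $\lind{\Sigma}(\Gamma)$ is the quotient of $\formu{\Sigma}(\Gamma)$ by provable equivalence, equipped with $[\phi]\leq[\psi]$ iff $\Gamma\mid\phi\vdash\psi$ is derivable; adding the $\mathcal{S}$-formation rule together with $R$-Refl, $R$-Trans, $\mathcal{S}$-I and $\mathcal{S}$-E enlarges the set of derivable sequents but preserves the fact that this relation is a partial order making $\lind{\Sigma}(\Gamma)$ a Heyting algebra. Consequently the displayed inequality is equivalent to the derivability of
\[
\Gamma \mid \bigwedge_{\phi\in\Phi}\phi \vdash \psi .
\]

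Finally I would observe that, applying the introduction and elimination rules for conjunction a finite number of times — these are the standard propositional rules, untouched by the extension — the derivability of the last sequent is equivalent to the derivability of $\Gamma \mid \Phi \vdash \psi$, which concludes the argument. The only point requiring a moment's care, and the nearest thing to an obstacle, is checking that nothing in the $\mathcal{S}$-rules breaks the quotient construction: that the operation $[\phi]\mapsto[\phi\mathcal{S}\psi]$ is well defined on provable-equivalence classes and that the generic model stays open. Both are immediate, the former from congruence of provable equivalence under the formation rule and the latter because the identity morphism satisfies the openness condition $\mathfrak{c}_{\mathscr{F}(C)}\circ\eta_C = \eta_C\circ\mathfrak{c}_C$ on the nose.
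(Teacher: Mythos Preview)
Your proposal is correct and follows exactly the approach the paper takes: the paper's own proof is simply ``The proof is the same as for \cref{th:completeness}.'' You have spelled out in more detail the sanity checks (that the generic model is indeed an open morphism in $\catname{pEHD}$ and that the $\mathcal{S}$-rules do not disturb the quotient) which the paper leaves implicit, but the logical skeleton is identical.
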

	\begin{proof}
		The proof is the same as for \cref{th:completeness}.
	\end{proof}
	
	\begin{corollary}
		The above defined categorical semantics for PLCOwP/RLCOwP/FOLCOwP (with or without equality) is sound and complete.
	\end{corollary}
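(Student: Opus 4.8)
The plan is to reproduce, \emph{mutatis mutandis}, the argument used for the earlier corollary about PLCO/FOLCO, splitting the statement into its completeness half and its soundness half and leaning on the generic-model theorem proved just above.

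For completeness I would first observe that the generic model $(1_{\class{\Sigma}},1_{\lind{\Sigma}})$ really is a model of closure logic with paths in $(\lind{\Sigma},\mathfrak{c}_\Sigma)$: by the proposition above $(\lind{\Sigma},(\iota,R))$ is a path hyperdoctrine, and the pair of identities is trivially an open morphism of $\catname{pEHD}$ --- openness of $\eta=1$ is immediate, and the isomorphism $h$ demanded in the definition of a morphism of path hyperdoctrines is just $1_\iota$, for which the compatibility equation with $R$ holds on the nose. Hence if a sequent is satisfied by \emph{every} model it is in particular satisfied by the generic model, and the theorem above yields its derivability.

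For soundness I would take an arbitrary model $(\mathscr{M},\mu):((\lind{\Sigma},\mathfrak{c}_\Sigma),(\iota,R))\to((\mathscr{P},\mathfrak{c}),(I,\rho))$ and a derivable sequent $\Gamma\mid\Phi\vdash\psi$. By the theorem above, its derivability is equivalent to the inequality $\bigwedge_{\phi\in\Phi}[\phi]\leq[\psi]$ in $\lind{\Sigma}(\Gamma)$. Since, by definition of a morphism of (path) hyperdoctrines, each component $\mu_\Gamma$ is monotone and preserves finite meets, applying it gives
\[
\bigwedge_{\phi\in\Phi}\mu_\Gamma([\phi])\;=\;\mu_\Gamma\!\Bigl(\bigwedge_{\phi\in\Phi}[\phi]\Bigr)\;\leq\;\mu_\Gamma([\psi]),
\]
which is exactly the condition for $(\mathscr{M},\mu)$ to satisfy $\Gamma\mid\Phi\vdash\psi$ in the sense of \cref{def:slcsmodel}.

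I do not expect a genuine obstacle here: soundness of the new rules $R$-Refl, $R$-Trans, $\mathcal{S}$-I and $\mathcal{S}$-E is not verified rule by rule against an abstract model --- it is subsumed by the generic-model theorem, whose proof only needs that $[R]$ is reflexive and transitive (the two $R$-rules) and that the $\mathcal{S}$-rules, exactly like those for $\mathcal{U}$, leave the Heyting-algebra and closure structure of $\lind{\Sigma}$ undisturbed. The only point deserving a line of care is checking that the identity pair indeed lies in $\catname{pEHD}$ and is open, and that is immediate. I would therefore present the proof simply as: the argument is verbatim that of the corollary for PLCO/FOLCO, with ``closure hyperdoctrine'' replaced by ``path hyperdoctrine'' and the generic-model theorem replaced by its path analogue.
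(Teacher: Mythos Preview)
Your proposal is correct and matches the paper's approach: the paper gives no separate proof for this corollary, treating it as immediate from the generic-model theorem just above, exactly as for the earlier PLCO/FOLCO corollary where soundness was dispatched in one line by monotonicity of each $\mu_\Gamma$. Your write-up simply spells out the details the paper leaves implicit (that the identity pair is an open morphism in $\catname{pEHD}$, and that completeness follows because the generic model is among the models), which is fine and arguably clearer.
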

	
	\fi
	\section{Conclusions and future work}
	\label{sec:concl}
	In this paper we have introduced \emph{closure (hyper)doctrines} as a theoretical framework for studying the logical aspects of closure spaces.
	First we have shown the generality of this notion by means of a wide range of examples arising naturally from topological spaces, fuzzy sets, algebraic structures, coalgebras, and covering at once also known cases such as Kripke frames and probabilistic frames.
	Then, we have applied this framework to provide the first axiomatisation and sound and complete categorical semantics for various fragments of a logic for closure doctrines. In particular, the propositional fragment corresponds to the Spatial Logic for Closure Spaces \cite{ciancia2014specifying}, a modal logic for the specification and verification on spatial properties over preclosure spaces. But the flexibility of our approach allows us to readily obtain closure logics for a wide range of cases (including all the examples presented above).
	
	\ifreport
	Finally, we have extended closure hyperdoctrines with a notion of \emph{paths}. This allows us to provide sound and complete logical derivation rules also for the ``surroundedness'', thus covering all the logical constructs of SLCS (including ``reachability'').
	\fi

	Albeit already quite general, the theory presented in this paper paves the way for several extensions. 
	\ifreport
	\else
	Due to lack of space, we have not been able to present the constructions for modeling logical operators concerning \emph{surroundedness}. To this end, we need to endow doctrines with an object representing the ``type of paths''; for more details we refer to the extended version of this work \cite{cm:closurehyperdoctrines-extended}.
	
	\fi
	We can enrich the logic with other spatial modalities, e.g., the spatial counterparts of the various temporal modalities of CTL* \cite{emerson1986sometimes}.
	It could be interesting to investigate a spatial logic with fixed points \emph{a la} $\mu$-calculus; to interpret such a logic, we could consider closure hyperdoctrines over Löb algebras. 
	Moreover, it would be interesting to develop some ``generic'' model checking algorithm for spatial logic. The abstraction provided by the categorical approach can guide the generalization of existing model checking algorithms, such as \cite{ciancia2014specifying}, and suggest new proof methodologies and minimisation techniques.
	
	On a different direction, we are interested in the type theory induced by closure hyperdoctrines.
	A Curry-Howard isomorphism would yield a functional programming language with constructors for spatial aspects, which would be very useful in \emph{collective spatial programming}, e.g. for collective adaptive systems.

	\appendix
	\section{Omitted proofs}\label{sec:proofs}
	\printProofs
\end{document}